\documentclass[a4paper, USenglish, cleveref, autoref, thm-restate]{lipics-v2021}

\pdfoutput=1 

\hideLIPIcs

\usepackage{complexity}
\usepackage{xspace}
\usepackage{nicefrac}

\usepackage{comment}
\usepackage[dvipsnames]{xcolor}
\usepackage{tabularray}
\usepackage{tikz}
\usepackage{booktabs}

\usepackage[export]{adjustbox}
\usepackage{mathtools}

\usepackage[export]{adjustbox}
\usepackage{csquotes}

\usepackage[vlined]{algorithm2e}
\SetArgSty{textnormal}
\DontPrintSemicolon
\SetKwProg{Fn}{function}{}{end}
\SetKwFor{ForEach}{for each}{do}{endfch}

\newtheorem*{proposition*}{Proposition}

\newcommand{\trans}{\textsc{L-BinPack}\xspace}
\newcommand{\transStrip}{\textsc{L-StripPack}\xspace}
\newcommand{\rot}{\textsc{Ortho-BinPack}$_\text{rot}$\xspace}
\newcommand{\sorting}{\textsc{BinSorting}\xspace}

\newcommand{\peri}{\textsc{L-PeriPack}\xspace}
\newcommand{\minarea}{\textsc{L-AreaPack}\xspace}

\newcommand{\LasylPacker}{\textsc{LaSyL-Packer}\xspace}
\newcommand{\smallLPacker}{\textsc{SmallL-Packer}\xspace}

\newcommand{\new}[1]{#1}

\newcommand{\ortho}{\textsc{Ortho-BinPack}\xspace}
\newcommand{\FF}{\textsc{FirstFit}\xspace}
\newcommand{\FFL}{\textsc{DensePacker}\xspace}

\newcommand{\Lskeleton}{\textsc{L-Skel-BinPack}\xspace}
\newcommand{\Zskeleton}{\textsc{Z-Skel-BinPack}\xspace}
\newcommand{\playerA}{\textsc{\textsc{Presenter}}\xspace}
\newcommand{\playerB}{\textsc{Algorithm}\xspace}

\newcommand{\Zframe}{Z-skeleton\xspace}
\newcommand{\Zframes}{Z-skeletons\xspace}

\newcommand{\opt}{\textsc{OPT}\xspace}
\newcommand{\OPT}{\textsc{OPT}\xspace}
\newcommand{\ALG}{\ensuremath{\mathcal{A}}\xspace}

\newcommand{\Lshape}{L-shape\xspace}
\newcommand{\Lshapes}{L-shapes\xspace}

\newcommand{\Lframe}{L-skeleton\xspace}
\newcommand{\Lframes}{L-skeletons\xspace}
\newcommand{\alg}{\ensuremath{\mathcal{A}}\xspace}

\newcommand{\oneDBin}{\textsc{1D-Bin-Packing}\xspace}

\newcommand{\area}{\text{area}}

\newcommand{\lx}{\ell_x}
\newcommand{\ly}{\ell_y}
\newcommand{\wx}{w_x}
\newcommand{\wy}{w_y}

\newcommand{\SymbRect}{\tikz[scale=0.3]{\draw (0, 0) -- (0.3, 0) -- (0.3, 1) -- (0, 1) -- cycle;}}
\newcommand{\SymbSymL}{\tikz[scale=0.3]{\draw (0, 0) -- (1, 0) -- (1, 0.3) -- (0.3, 0.3) -- (0.3, 1) -- (0, 1) -- cycle;}}
\newcommand{\SymbL}{\tikz[scale=0.3]{\draw (0, 0) -- (1, 0) -- (1, 0.3) -- (0.3, 0.3) -- (0.3, 0.7) -- (0, 0.7) -- cycle;}}
\newcommand{\SymbZ}{\tikz[scale=0.3]{\draw (0, 0) -- (0.3, 0) -- (0.3, 0.35) -- (1, 0.35) -- (1, 1) -- (0.7, 1) -- (0.7, 0.65) -- (0, 0.65) -- cycle;}}

\newcommand{\SymbConv}{\tikz[scale=0.3]{\draw (0, 0) -- (1, 0) -- (0.5, 1.0) -- (0, 0.5) -- cycle;}}
\newcommand{\SymbLFrame}{\tikz[scale=0.3]{\draw[thick] (1, 0) -- (0, 0) -- (0, 1)}}
\newcommand{\SymbZFrame}{\tikz[scale=0.3]{\draw[thick] (0, 0) -- (0, 0.5) -- (1, 0.5) -- (1, 1)}}

\ExplSyntaxOn 

\cs_new:Nn \algo__subfigure:nnnnn {
    \hfill
    \begin{subfigure}[t]{#4\linewidth}
        \centering
        \includegraphics[page=#1,max~width=\linewidth]{#2}
        \subcaption{#5}
        \label{#3}
    \end{subfigure}
    \hfill\strut
}
\cs_generate_variant:Nn \algo__subfigure:nnnnn {nneee}

\seq_new:N \l_algo_subfigure_pages
\seq_new:N \l_algo_subfigure_widths
\seq_new:N \l_algo_subfigure_captions

\cs_new:Nn \algo_include_subfigures:nnnnnn {
    \seq_set_from_clist:Nn \l_algo_subfigure_pages {#3}
    \seq_set_from_clist:Nn \l_algo_subfigure_widths {#5}
    \seq_set_from_clist:Nn \l_algo_subfigure_captions {#6}
    \begin{figure}[htb]
        \centering
        \seq_map_indexed_inline:Nn \l_algo_subfigure_pages {
            \algo__subfigure:nneee
                    {##2} 
                    {#2} 
                    {fig:#1\int_to_Alph:n {##1}} 
                    {\seq_item:Nn \l_algo_subfigure_widths {##1}}
                    {\seq_item:Nn \l_algo_subfigure_captions {##1}}
        }
        \caption{#4}
        \label{fig:#1}
    \end{figure}
}
\NewDocumentCommand{\IncludeSubfigures}{mmmO{0.24,0.24,0.24,0.24}mO{,,,,,,,,,,,}}{\algo_include_subfigures:nnnnnn {#1} {#2} {#3} {#5} {#4} {#6}}

\ExplSyntaxOff

\newcommand{\where}{\mid}

\renewcommand{\epsilon}{\varepsilon}

\title{Online Packing of Orthogonal Polygons}

\nolinenumbers

\author{Tim Gerlach}%
    {Universit\"at Hamburg, Germany}%
    {tim.gerlach@uni-hamburg.de}%
    {https://orcid.org/0009-0004-6294-9235}%
    {}

\author{Benjamin Hennies}%
    {Technische Universit\"at Braunschweig, Germany}%
    {benjaminhennies@posteo.de}%
    {}%
    {}

\author{Linda Kleist}%
    {Universit\"at Hamburg, Germany}%
    {linda.kleist@uni-hamburg.de}%
    {https://orcid.org/0000-0002-3786-916X}%
    {}

\authorrunning{T. Gerlach, B. Hennies, and  L. Kleist} 

\Copyright{Tim Gerlach, Benjamin Hennies and Linda Kleist}

\ccsdesc[500]{Theory of computation~Computational geometry}

\keywords{Packing, orthogonal polygon, algorithm, offline, online,
  competitive ratio, bin packing, strip packing, perimeter packing,
critical density, 6-gons, 8-gons, L-shapes, Z-shapes, skeletons}

\acknowledgements{We thank the anonymous reviewers for their valuable comments
and suggestions, which helped to improve the quality of this work.  In
particular, we thank an anonymous reviewer for the suggestion to use
\textsc{NextFit} in the proof of \cref{thm:skeletons}(i) and improving the
analysis to obtain a competitive ratio of 2.}

\EventEditors{Hee-Kap Ahn, Michael Hoffmann, and Amir Nayyeri}
\EventNoEds{3}
\EventLongTitle{42nd International Symposium on Computational Geometry (SoCG 2026)}
\EventShortTitle{SoCG 2026}
\EventAcronym{SoCG}
\EventYear{2026}
\EventDate{June 2--5, 2026}
\EventLocation{New Brunswick, NJ, USA}
\EventLogo{socg-logo.pdf}
\SeriesVolume{367}
\ArticleNo{XX}

\begin{document}

\maketitle

\DeclareFontShape{T1}{lmr}{m}{scit} { <-> ssub * lmr/m/scsl }{}
\DeclareFontShape{T1}{lmss}{bx}{sc} { <-> ssub * lmss/bx/n }{}

\begin{abstract}
  While rectangular and box-shaped objects dominate the classic discourse of
  theoretic investigations, a fascinating frontier lies in  packing more complex
  shapes. Given recent insights that convex polygons do not allow for constant
  competitive online algorithms for diverse variants under translation, we study
  orthogonal polygons, in particular of small complexity. For translational
  packings of orthogonal 6-gons, we show that the competitive ratio of any
  online algorithm that aims to pack the items into a minimal number of unit
  bins is in $\Omega(\nicefrac{n}{\log n})$, where $n$ denotes the number of
  objects. In contrast, we show that constant competitive algorithms exist when
  the orthogonal 6-gons are symmetric or small.  For (orthogonally convex)
  orthogonal 8-gons, we show that the trivial $n$-competitive algorithm, which
  places each item in its own bin, is best-possible, i.e., every online
  algorithm has an asymptotic competitive ratio of at least $n$. This implies
  that for general orthogonal polygons, the trivial algorithm is best possible.

  Interestingly, for packing degenerate orthogonal polygons (with thickness
  $0$), called skeletons, the change in complexity is even more drastic.  While
  constant competitive algorithms for 6-skeletons exist, no online algorithm for
  8-skeletons achieves a competitive ratio better than $n$.

  For other packing variants of orthogonal 6-gons under translation, our
  insights imply the following consequences. The asymptotic competitive ratio of
  any online algorithm is in $\Omega(\nicefrac{n}{\log n})$ for strip packing,
  and there exist online algorithms with competitive ratios in $O(1)$ for
  perimeter packing,  or in $O(\sqrt{n})$ for minimizing the area of the
  bounding box. Moreover, the critical packing density is positive (if every
  object individually fits into the interior of a unit bin).
\end{abstract}

\section{Introduction}
Packing problems not only  present us with constant challenges in
everyday life, but also find applications in manufacturing
industries, logistics, and scheduling.
While rectangular and box-shaped objects dominate the classic
discourse of theoretic investigations, a fascinating frontier lies in
the packing of more complex objects, e.g., convex polygons.
The problem of packing convex polygons is particularly interesting
when restricting the allowed motions  to translations, as allowing
for rotations reduces the problem to mere rectangle packing.
Across the numerous variants -- be it strip packing, bin packing, or
perimeter packing -- the overall picture is that constant factor
approximations exist for translational packings of convex
polygons~\cite{alt_convexOffline_JoCG,alt_convexOffline_corr,onlineSortingOnlinePacking_SODA,convexOffline-improved}.
However, the landscape changes when we step into the online realm
where items arrive sequentially one after the other and decisions
have to be made irrevocably without knowing the future items: 
In this setting, no online algorithm for translational packings of convex
polygons has a constant competitive ratio, as recently shown by
Aamand, Abrahamsen, Beretta, and
Kleist~\cite{onlineSortingOnlinePacking_SODA}. This is in stark
contrast to the case of rectangles, for which constant competitive
algorithms exist for various variants,
cf.~\cref{sec:relWork}.

In this work, we study the online packing problem of orthogonal polygons which
are orthogonally convex, combining relaxed features of both orthogonality and
convexity. A polygon is \emph{orthogonally convex} if the intersection with any
axis-parallel line has at most one connected component.  The simplest  yet
unresolved setting involves orthogonal 6-gons, which include \emph{L-shapes}. For examples, consider \Cref{fig:Lshapes}.
\begin{figure}[htb]
  \centering
  \begin{subfigure}[t]{0.3\linewidth}
    \centering
    \includegraphics[page=1]{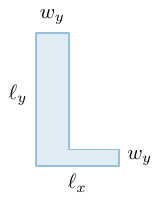}
    \subcaption{}
    \label{fig:LshapesA}
  \end{subfigure}
  \hfil
  \begin{subfigure}[t]{0.3\linewidth}
    \centering
    \includegraphics[page=2]{figures/Lshapes.pdf}
    \subcaption{}
    \label{fig:LshapesB}
  \end{subfigure}
  \hfil
  \begin{subfigure}[t]{0.3\linewidth}
    \centering
    \includegraphics[page=1]{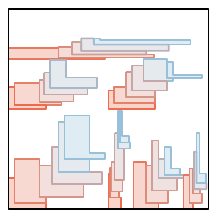}
    \subcaption{}
    \label{fig:LshapesC}
  \end{subfigure}
  \caption{(a) An L-shape, (b) a symmetric L-shape, and (c) a bin
  packing of L-shapes.}
  \label{fig:Lshapes}
\end{figure}

The diverse applications motivate a multitude of interesting packing problems.
Among them are bin packing, strip packing, perimeter packing and area packing
for various types of objects and allowed motions; for definitions see
\cref{sec:prelim}.  Many variants of these packing problems are known to be
computationally difficult.  While NP-hardness has usually been known for a long
time \cite{DBLP:journals/eatcs/Alt16,fowler1981optimal,harren,leung1990packing},
more recently, some variants were proved to even be $\exists\mathbb R$-complete
by Abrahamsen, Miltzow, and Seiferth~\cite{abrahamsen2024framework}.  Adding to
this difficulty, we might not have complete information about the objects in
advance, e.g., they may appear over time. This is captured by the so called
\emph{online setting} where the objects are presented one after another and the
next object is only revealed when the previous one has been placed.

In \emph{competitive analysis}, the performance of  an online algorithm is
measured against the optimal offline solution. While the  \emph{absolute
competitive ratio} gives the worst-case performance over all possible inputs,
the \emph{asymptotic competitive ratio} describes the behavior when the input
size goes to infinity. For precise definitions, we refer to
\autoref{sec:prelim}.  In the following we present our contributions, see also
\Cref{table:BPresults,tab:2}.  Usually, we present  lower bounds on the
asymptotic competitive ratio and upper bounds on the absolute competitive ratio
which imply the statements for both ratios.

\ExplSyntaxOn

\cs_new:Nn \algo_typeset_cell:n {
  #1
}
\cs_new:Nn \algo_shortref:n {
  \group_begin:
  \tl_set:Nn \theoremautorefname {Thm}
  \tl_set:Nn \corollaryautorefname {Cor.}
  \autoref{#1}
  \group_end:
}
\NewDocumentCommand{\TypesetCell}{m}{\algo_typeset_cell:n {#1}}
\NewDocumentCommand{\shortref}{sm}{\IfBooleanTF{#1}{[\algo_shortref:n
{#2}]}{\algo_shortref:n {#2}}}
\ExplSyntaxOff
\begin{table}[b]
  \centering
  \caption{Bounds on the best asymptotic competitive ratios of bin
  packing orthogonal polygons.}
  \label{table:BPresults}
  \begin{tblr}{
      colspec={|cl|cr|cr|},
      cell{1}{1} = {c=2}{c},cell{1}{3} = {c=2}{c},cell{1}{5} = {c=2}{c},
      rowspec={|Q|QQQQ|[dashed]QQ|},
    }
    Objects && Lower Bound && Upper Bound &\\
    \SymbRect{} & rectangles & 1.91 &\cite{epstein2019lower} & 2.56
    &\cite{han2011new} \\
    \SymbSymL{} & symmetric L-shapes & 1.91 &\cite{epstein2019lower}
    & 41 &[\shortref{thm:CCAlgSym}c] \\
    \SymbL{} & orthogonal  6-gons & $\nicefrac{n}{\log n}$
    &\shortref*{thm:LBbin} & $n$ & \\
    \SymbZ{} & orthogonal  8-gons & $n$ &\shortref*{thm:Zgons} & $n$\\
    \SymbLFrame{} & \Lframes & $1.54$& \cite{balogh2021new}
&  2 &\shortref*{thm:skeletons} \\
    \SymbZFrame{} & \Zframes & $n$ &\shortref*{thm:skeletons} & $n$  &
  \end{tblr}

\end{table}

\subparagraph{Bin Packing of 6-gons}
For bin packing of orthogonal polygons, it is natural to consider the
variants where the objects can only be translated,
and where the
objects may be rotated by multiples of 90 degrees.
The following proposition shows that even in the latter variant
we may restrict our analysis to the translational case. In the problem
\ortho, we  receive  orthogonal polygons $P_1, . . . ,
P_n$ which have to be placed into a unit square bin by translation. 
The goal is to minimize the number of used bins. \rot denotes the
variant  where the polygons might be rotated by multiples of 90 degrees.

\begin{restatable}{proposition}{rotation}\label{prop:equiv}
  The best (absolute/asymptotic) competitive ratios 
  for \ortho and \rot are within constant factors of each other. The
  analogous statements hold for the best approximation ratios of the
  offline variants.
\end{restatable}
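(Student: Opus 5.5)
Both directions go by simulation: an algorithm for one variant is turned into an algorithm for the other, paying only a constant factor in the number of bins. Two facts drive this. First, the optima of the two variants are close: $\OPT_{\text{rot}}\le\OPT$, because a translational packing is also a valid rotational one. Second, every rotational packing can be converted into a translational packing with at most a constant factor more bins, so $\OPT\le c\cdot\OPT_{\text{rot}}$.

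For the second fact I would fix one bin of an optimal rotational packing and sort its items into four classes according to the rotation used (0, 90, 180 or 270 degrees). For each class, I take the sub-packing consisting only of that class and rotate the entire bin back by the class's angle. A unit square is mapped to a unit square, and disjoint items stay disjoint. After this, every item in the class sits in its original orientation, so the sub-packing is a translational packing in a single bin. Hence each rotational bin yields at most four translational bins, and $\OPT\le 4\,\OPT_{\text{rot}}$. The identical argument applied to an optimal offline solution gives the approximation statement.

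Now the reductions themselves.

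\begin{itemize}
\item \textbf{From \ortho to \rot.} Given a translational algorithm $\ALG$ with ratio $c$, run it unchanged on the rotational instance, never rotating anything. It uses at most $c\,\OPT+O(1)\le 4c\,\OPT_{\text{rot}}+O(1)$ bins, so it is $4c$-competitive for \rot.

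\item \textbf{From \rot to \ortho.} This direction needs more care, since a rotational algorithm $\ALG'$ may rotate items, which is forbidden in the translational output. I would run $\ALG'$ on the same sequence and simulate each of its bins $B$ by four translational bins $B^0,B^{90},B^{180},B^{270}$. When $\ALG'$ places an item into $B$ with rotation $\theta$, we place the item unrotated into $B^\theta$, at the image of its position under rotating the whole bin $B$ by $-\theta$. Within a fixed $B^\theta$, all items are images of disjoint items of $B$ under one common rigid motion, so they remain disjoint and inside the unit square.
\end{itemize}

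The second reduction decides online and irrevocably, because each placement depends only on the current decision of $\ALG'$. It uses at most $4\,\ALG'(I)\le 4(c'\,\OPT_{\text{rot}}+O(1))\le 4c'\,\OPT+O(1)$ bins. Additive constants change only by constant factors, so absolute and asymptotic ratios are handled simultaneously. For the offline statement, the same transformations turn an $\alpha$-approximation for one variant into a $4\alpha$-approximation for the other.

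The main obstacle I expect is making sure that rotating a bin about its centre really maps the unit square onto itself. It does, since the unit square is invariant under quarter turns about its centre. One also has to check that the rotated positions can be computed online; they can, because they depend only on the item's current placement.
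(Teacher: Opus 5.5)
Your proof is correct and matches the paper's argument: it too uses $\opt_\text{rot}\leq \opt_\text{trans}\leq 4\cdot \opt_\text{rot}$, obtained by splitting each bin into four according to rotation, and concludes that an $f$-competitive (or $f$-approximation) algorithm for one variant gives a $4f$ one for the other. You additionally make explicit two points the paper leaves implicit: rotating each class's bin back about its centre so the items regain their original orientation, and running the rotational algorithm online with four translational bins simulating each of its bins.
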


\begin{proof}
  While each solution of \ortho is clearly a valid packing for \rot,
  we can transform a valid packing for  \rot into a valid packing for
  \ortho by splitting each bin into four such that each resulting bin
  contains polygons of the same rotation. Thus, we have
  $\opt_\text{rot}\leq \opt_\text{trans}\leq 4\cdot
  \opt_\text{rot}$. These insights directly imply that for every
  function $f$, an $f$-competitive/approximation algorithm for one
  problem yields an algorithm with competitive/approximation ratio
  $4f$ for the other problem.
\end{proof}

\cref{prop:equiv} allows us to focus on translations only.
 We start with a lower bound on the
competitive ratio of any online algorithm for \Lshapes.
To this end, let \trans denote the variant of \ortho where each
polygon in the sequence is an  \Lshape.

\begin{restatable}{theorem}{lowerBoundBin}\label{thm:LBbin}
  The asymptotic competitive ratio of any online
  algorithm for \trans is in $\Omega(\nicefrac{n}{\log n})$.
\end{restatable}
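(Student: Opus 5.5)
We prove the bound with an adversarial \playerA strategy. It reduces online L-shape packing to an online sorting problem, in the spirit of the lower bound of Aamand, Abrahamsen, Beretta, and Kleist~\cite{onlineSortingOnlinePacking_SODA} for convex polygons.

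\textbf{Gadget.} Every item is a thin L-shape. It has a long horizontal leg of length close to $1$ and a long vertical leg of length close to $1$, both of small thickness $\epsilon$. The inner corner is at position $(x_i,y_i)$ with $x_i+y_i$ fixed, or more generally with its shape encoding a real parameter $t_i\in[0,1]$. Two such L-shapes with parameters $t_i,t_j$ nest in a common bin only in a fixed orientation: the one with smaller horizontal and larger vertical leg must lie inside, up and to the right. This works like a staircase. Consequently, the items in one bin form a chain that is totally ordered by translation, and their parameters must appear in sorted order along the chain. An offline packing can sort all $n$ items by parameter and nest them into $O(1)$ bins. We achieve this by choosing thicknesses summing to less than $1$ and legs long enough that nothing else fits, so $\opt=O(1)$.

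\textbf{Adversary.} The \playerA reveals parameters adaptively, as in online sorting into an array. Each bin has only $O(n)$ \enquote{slots} determined by continuous positions, but the online algorithm must commit to a translation, and hence a position in the chain, before later parameters are known. Keeping an interval of parameters still free, the adversary chooses each new $t$ inside a gap between already placed items in a bin. We aim to show that if a single bin receives $k$ items, the adversary can force all remaining free room in that bin to be used up after $O(\log n)$ adaptive choices. The argument is a binary-search style halving of the available geometric slack: each new item placed into an existing bin must occupy a gap whose admissible parameter interval and geometric room are both constrained, and the adversary picks $t$ in the larger remaining parameter gap while the geometric gap shrinks. Thus every bin contains only $O(\log n)$ items, the algorithm uses $\Omega(n/\log n)$ bins, and since $\opt=O(1)$ the ratio is $\Omega(n/\log n)$. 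For the asymptotic ratio, we repeat the construction on independent blocks so that $\opt\to\infty$, or scale the argument with $\opt$ copies.

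\textbf{Main obstacle.} The hard part is the geometric core. We must design the L-shapes so that (a) the containment order in a bin is forced to coincide with the parameter order, and (b) inserting an item between two neighbours in a bin consumes a constant fraction of the slack between them, or otherwise forces the adversarial halving to terminate after $O(\log n)$ rounds. We would try making the corner positions lie on a slightly convex curve with thicknesses $\epsilon/n$. Then two items fit adjacently only if their placement offset lies in a narrow window depending on $|t_i-t_j|$, and the adversary chooses $t$ so that this window is incompatible with both neighbours unless the algorithm left enough exponential slack, which limits a bin to $O(\log n)$ items.
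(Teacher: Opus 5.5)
Your overall architecture matches the paper's: reduce to an online sorting problem and let \playerA binary-search, so that every bin receives only $O(\log n)$ items while $\opt=O(1)$. But the proposal never constructs the gadget, and the concrete ingredients you name cannot work. Suppose all items have a common arm thickness $\epsilon$ (or $\epsilon/n$) and arms longer than $\nicefrac12$. Then every packing is stacked, and the $m$-th \Lshape from the bottom of a chain fits if and only if $(m-1)\epsilon+\max\{\ell_x,\ell_y\}\le 1$. Whether $L_j$ can sit inside $L_i$ therefore depends on $L_j$ alone, so your property (a) fails outright: no order between parameters is forced, only a ``deadline'' per item. In fact each such item behaves exactly like the symmetric \Lshape with arm $\max\{\ell_x,\ell_y\}$ and width $\epsilon$. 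That class admits an $e$-competitive online algorithm (equal-width row of \cref{tab:3}), so no adversary can succeed on your family. To force an order, the thickness of the lower item must interact with the arm of the upper one, so the thickness has to vary with the parameter. The paper additionally lets it decay geometrically so that every sorted chain still fits. This is the missing key lemma. With $\ell_y(L_i)=1-2^{-i-1}$ and $w_y(L_i)=2^{-i-1}$, the shape $L_j$ fits above $L_i$ iff $w_y(L_i)+\ell_y(L_j)\le 1$, i.e.\ iff $j<i$. Placing each $L_i$ with its top at height $1$ puts its horizontal arm into $[2^{-i-1},2^{-i}]$, so every decreasing chain fits. The capacity lives in the other coordinate: $\ell_x=\nicefrac12+\nicefrac1{2k}$ and $w_x=\nicefrac1{2k}$ send the reference points of one bin to distinct slots $\lfloor 2kx\rfloor+1\in[k]$. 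Hence on this family \trans is exactly \sorting{}$[k]$, and for $k=n$ all items fit into one bin.

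The adversary and your property (b) also need repair. No gadget can make an insertion consume a constant fraction of the slack between two neighbours, because the algorithm chooses where inside the gap the item goes (e.g.\ flush against one neighbour). The halving has to come from \playerA. It maintains an interval of parameters that are \emph{similar} in every bin, meaning any of them would have to go into the same free section of each bin, and it presents the midpoint. After the algorithm's placement, it restricts the interval to the side whose free section in that bin is \emph{smaller}, not to ``the larger remaining parameter gap'' as you write. Each insertion then at least halves the section still available to future items in that bin, so a bin with $n$ slots receives at most $\lfloor\log(n+1)\rfloor$ items. Your fallback idea of a placement window depending on $|t_i-t_j|$ heads toward the metric online-sorting reduction used for convex polygons, which there only yields $\Omega(\sqrt{\log n/\log\log n})$. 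The $\nicefrac{n}{\log n}$ bound needs a purely combinatorial constraint: sorted order plus exactly $n$ discrete slots. Finally, no repetition over blocks is needed for the asymptotic ratio. $\opt=1$ together with at least $\nicefrac{n}{\log(n+1)}$ bins already gives $f(n)\ge \nicefrac{n}{\log(n+1)}-\beta$, whereas splitting into $m\to\infty$ independent blocks would only weaken the bound as a function of $n$.
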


We obtain this lower bound by reducing from an online sorting problem
where natural numbers have to placed in arrays such that the numbers
in each array are increasing.
Note that the trivial algorithm, which packs each object into its own
bin, is $n$-competitive.

In contrast to \cref{thm:LBbin}, there exist constant competitive algorithms for
some interesting subclasses of \Lshapes, namely when the \Lshapes are either all
symmetric or all small. An \Lshape is described by four parameters $\ell_x,
\ell_y, w_x, w_y\in[0,1]$. It is \emph{small}  if $\ell_x,\ell_y\leq
\nicefrac{1}{2}$, \emph{large} if $ \ell_x, \ell_y\geq \nicefrac{1}{2}$, and
\emph{symmetric} if $\ell_x=\ell_y$ and $w_x=w_y$. For illustrations see
\cref{fig:Lshapes}.

\begin{restatable}{theorem}{onlineSym}\label{thm:CCAlgSym}
  There is an online algorithm for \trans with a constant asymptotic 
  competitive ratio when all \Lshapes are symmetric or small.
  In particular, there exist algorithms that use at most 
  \begin{alphaenumerate}
  \item \label{item:lasy}
    $33\cdot\OPT + 2$ bins
    when all \Lshapes are large and symmetric.
  \item \label{item:small}  $8\cdot \opt +7$
    when all \Lshapes are small.
  \item \label{item:sym} $41\cdot \opt +9$
    when all \Lshapes are symmetric.
  \end{alphaenumerate}
\end{restatable}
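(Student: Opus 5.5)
We prove the three parts separately and derive (c) from (a) and (b). In each part we split the items into classes, pack each class with a dedicated online algorithm into its own bins, and compare the number of bins with \OPT via lower bounds on \OPT such as total area or the number of pairwise incompatible items.

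For (b), small L-shapes ($\ell_x,\ell_y\le \nicefrac12$), we would replace each L-shape by its bounding box, an $\ell_x\times\ell_y$ rectangle. Since $\ell_x,\ell_y\le\nicefrac12$, a standard online rectangle algorithm applies; for instance, a shelf or quadtree-type algorithm for squares/rectangles with side at most $\nicefrac12$ guarantees a constant packing density (about $\nicefrac14$) in every closed bin. The difficulty is that the area of the L-shape, $w_y\ell_x+w_x\ell_y-w_xw_y$, can be much smaller than the area of its bounding box, so density arguments on the boxes alone fail. We therefore classify items by the \emph{thickness} of their legs. Thin L-shapes are packed nested: two L-shapes with the same orientation nest if one fits into the corner notch of the other. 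The plan is to argue that a bin that \OPT fills with thin L-shapes is essentially constrained by a one-dimensional ``corner chain'' structure. Concretely, we would charge each bin of the algorithm to a constant total of the perimeter-like quantity $w_x+w_y$ or to $\ell_x+\ell_y$ weighted by thickness. We would then show that \OPT can fit only $O(1)$ total of this measure per bin, and that our nesting/shelf algorithm achieves $\Omega(1)$ per bin. The constant $8$ and additive $7$ then come from roughly $7$ open bins (one per class) plus a density of $\nicefrac18$.

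For (a), large symmetric L-shapes ($\ell\ge\nicefrac12$, $w_x=w_y=w$), two such shapes in one bin must nest: both corners are in the lower-left region, and the legs of length $\ge\nicefrac12$ force one inside the other. A bin is therefore a chain ordered by $\ell$, with consecutive shapes satisfying $\ell_{i+1}\le \ell_i-w_i$ and with offsets bounded. Packing thus reduces to a one-dimensional problem on the interval $[0,1]$ with item sizes $w$, with an ordering constraint, similar to the online sorting problem used for \cref{thm:LBbin}. The key difference is that symmetry removes the second independent parameter, which is what made the sorting reduction work. We would round $\ell$ into geometric classes and $w$ into powers of two, and within each class use \textsc{NextFit} with nesting along the diagonal. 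Summing widths gives that \OPT uses at least $\sum w$ bins (up to a constant), while each of our closed bins has $\sum w\ge$ a constant. The number of classes must stay bounded, which yields the factor $33$. Bounding this interaction between the size classes of $\ell$ and the nesting order is the main obstacle. Our first attempt would be to put items whose $\ell$ falls in a fixed interval of length $\nicefrac{1}{c}$ in the same bin family, and to show that nesting only requires $\ell$ to decrease by the accumulated widths.

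For (c), a symmetric L-shape with $\ell\le\nicefrac12$ is small and goes to algorithm (b); otherwise it is large and goes to algorithm (a). The number of bins is at most $33\,\OPT_{\text{large}}+2+8\,\OPT_{\text{small}}+7\le 41\,\OPT+9$, because $\OPT_{\text{large}},\OPT_{\text{small}}\le\OPT$.
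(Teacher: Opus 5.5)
Your derivation of (c) from (a) and (b) is the same as the paper's. However, (a) and (b) themselves are not proved. In both, the step you postpone is the whole difficulty, and the constants $8$, $7$, $33$ are reverse-engineered rather than derived.

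In (b) you abandon area, but area is exactly the right lower bound once shapes are nested in a container a constant factor larger than their arms. The paper's \FFL stacks shapes of class $C(a,b)$, i.e.\ $2^{-a-2}<\ell_x\le 2^{-a-1}$ and $2^{-b-2}<\ell_y\le 2^{-b-1}$, into a $2^{-a}\times 2^{-b}$ container $A_x\times A_y$. If a new $L'$ does not fit, \cref{lem:adding} gives $\sum w_z>A_z-\ell_z(L')\ge A_z/2$ for some $z$. Every stacked shape has $\ell_{\overline z}>A_{\overline z}/4$, so the stack has area at least $\sum\ell_{\overline z}w_z>A_xA_y/8$, however thin the legs are. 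Your substitute measures do not work:
\begin{itemize}
\item $\sum(w_x+w_y)$ is unbounded in a single \OPT bin (take many tiny shapes).
\item Thickness-weighted arm lengths are just area up to a factor $2$.
\end{itemize}
More seriously, the size classes are infinitely many, and one \OPT bin can hold one shape from each of $n$ classes (e.g.\ $\ell_x=\ell_y=2^{-a-1}$ placed along the diagonal). So ``one open bin per class'' costs an additive $n$, not $7$. The paper instead packs the power-of-two containers themselves into bins by \FF on divisible sizes, leaving free area at most $3$ (\cref{lem:packing_rectangles_optimal}). It bounds the total area of all active containers by $\sum_{a,b\ge0}2^{-a-b}=4$ (\cref{lem:rectangleAreaTotal}). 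This gives $k-3\le 8\OPT+4$ for the number $k$ of bins, hence $8\OPT+7$.

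In (a) there are three problems.
\begin{itemize}
\item Your structural claim is wrong. Shapes in a bin are stacked with interior-disjoint $x$-ranges $[x_i,x_i+w_i]$ and $x_i\le 1-\ell_i$, but they need not be ordered by $\ell$. A shape with $\ell=0.9$ fits at $(0.01,0.01)$ on top of one with $\ell=0.6$, $w=0.01$ at the origin. The correct model is machine minimization with release time $0$, processing time $w_i$ and deadline $1-\ell_i+w_i$.
\item Your claim that every closed bin has $\sum w$ at least a constant is false. If all $\ell\ge 1-\delta$, every bin, \OPT's included, has $\sum w\le\delta+\max_i w_i$. Only scale-dependent volume bounds can work: a shape with deadline in $I_k=[2^{-k},2^{-k+1})$ lies in $[0,2^{-k+1})$.
\item The obstacle you name is fatal unless solved. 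Shapes with $\ell_k=1-2^{-k}$ and tiny widths all fit into one bin, so separate bin families per scale class use $n$ bins.
\end{itemize}
The missing idea is how to share bins across all scales. Short shapes ($w<2^{-k-2}$) go by \FF into the dyadic interval $I_{k+1}$, which is disjoint for different $k$; this uses $6\OPT_S+2$ bins. Long shapes are placed at their rightmost position $1-\ell_i$ and assigned bins by the $3$-competitive online interval coloring of Kierstead and Trotter. The clique number is at most $9\OPT_L$, because each long shape meeting $I_k$ covers at least $\nicefrac19$ of $I_k^{++}$ (\cref{lemma:OnlineSymLargeIkplus}). This gives $27\OPT_L$ bins, for $33\OPT+2$ in total.
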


\cref{thm:CCAlgSym} directly  yields algorithms with a constant absolute
competitive ratio of 35, 15 and 50, respectively, as in bin packing, we have
	$\OPT \geq 1$ for any nonempty set of items.

The results for symmetric \Lshapes are based on identifying their connection to
scheduling problems.  Competitive algorithms for two special cases of large and
symmetric \Lshapes follow directly from previous work on scheduling problems.
Precisely, an $e$-competitive algorithm for objects of equal
widths~\cite{devanurOnlineAlgorithmsMachine2014} and a 2-competitive algorithm
for the case of equal lengths~\cite{epsteinOpenEnd} can be deduced, see
\autoref{sec:relWork} for explanations.  A connection to scheduling problems is
also used for the 33-competitive algorithm which  builds on a constant
competitive algorithm by Devanur et
al.~\cite{devanurOnlineAlgorithmsMachine2014} for a special machine minimization
problem in combination with an online coloring algorithm for intervals graphs by
Kierstead and Trotter~\cite{kierstead1981extremal}. For details on the
connection to scheduling problems, we refer to \autoref{sec:relWork}.

\subparagraph{Other packing variants}
 Our insights on online bin packing of
L-shapes have consequences for other packing variants.  In the problem  \peri,
we  receive  \Lshapes $L_1, . . .  , L_n$, and the goal is to place
them in the plane such that the perimeter of the bounding box is minimized. In
\minarea, the goal is to minimize the area of the bounding box.  For a unit
square bin and \Lshapes with arm lengths $\ell_x,\ell_y$ bounded by some
constant $t < 1$, the \emph{online critical density} is the largest value of $A$
such that there exists an online algorithm which packs any sequence  with  total
area at most $A$ into the bin.

\newcommand{\StatementConseqI}{The asymptotic competitive ratio of any
  online algorithm for \transStrip is in $\Omega(\nicefrac{n}{\log n})$.
}

\newcommand{\StatementConseqII}{
  	There is an online algorithm for \peri with absolute
  competitive ratio in $O(1)$.
}

\newcommand{\StatementConseqIII}{There is an online algorithm for \minarea with asymptotic
  competitive ratio in~$O(\sqrt{n})$.
}

\begin{restatable}{corollary}{consequences} \label{thm:conseq}
  The following statements hold for \Lshapes with arm lengths in $[0,1]$:
  \begin{romanenumerate}
  \item \label{item:i} \StatementConseqI
  \item \label{item:ii} \StatementConseqII
  \item \label{item:iii} \StatementConseqIII
  \item \label{item:iv} The online critical packing density is
    positive if the arm lengths of each \Lshape are bounded by $t<1$.
    For $t=1$, the online critical packing density is $0$.
  \end{romanenumerate}
\end{restatable}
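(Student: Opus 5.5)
Items (ii) and (iii) are upper bounds, while (i) and (iv) (second part) are lower bounds. The common tool is a translation between bin packing and the other objectives. For the upper bounds we use the algorithm for small \Lshapes from \cref{thm:CCAlgSym}\ref{item:small} inside scaled square bins. For the lower bounds we use the adversarial sequence from the proof of \cref{thm:LBbin}.

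\emph{Item (i).} I would not reduce strip packing to bin packing in a black-box way. Instead I would rerun the adversary of \cref{thm:LBbin}, i.e., the reduction from online sorting, against a strip algorithm. Those \Lshapes have arms longer than $\nicefrac12$, so two items placed in a common horizontal unit slab must be nested compatibly, exactly as inside a bin. Hence a strip packing of height $h$ yields an assignment into $O(h)$ ``levels''. Items crossing an integer height line are charged to that line. I would bound the number of items crossing any one line by a constant, using that arms of length $>\nicefrac12$ in a width-$1$ strip cannot overlap horizontally. The optimal strip height is at most $\opt_{\text{bin}}$, which the construction makes $O(\log n)$. The algorithm's height is then at least a constant fraction of the number of bins forced in \cref{thm:LBbin}. \textbf{Main obstacle:} controlling the items that straddle slab boundaries. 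If constantly many per line does not follow directly, I would modify the construction so that every item has both arms longer than $\nicefrac12$.

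\emph{Items (ii) and (iii).} We maintain a current scale $s$, equal to a power of $2$ that is at least $2\max\{\ell_x,\ell_y\}$ over the items seen so far. Relative to a square of side $s$, all items are then small. Each scale runs its own instance of the algorithm of \cref{thm:CCAlgSym}\ref{item:small} on bins of side $s$. These bins are laid out in a growing near-square grid, and the grids of different scales are placed along a diagonal; the geometric growth of $s$ makes the total a constant-factor overhead.

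For the lower bound on the optimum, $\opt$ has perimeter at least $2s/2$ and at least $\Omega(\sqrt{k}\,s)$ if it needs $k$ bins of side $s$. To justify the latter, I would cut the optimal bounding box into a grid of cells of side $s/2$. Every item meets at most $4$ cells, so the items can be reassigned into $O((P/s)^2+1)$ bins of side $s$. Combining with the $8\opt+7$ bound gives perimeter $O(\sqrt{k}\,s+s)=O(\opt)$.

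For area, the same packing has area $O(k s^2)$. The optimum, however, may be long and thin, with area as small as about $s\cdot(\text{length})$ rather than $k s^2$. Comparing the two over the worst aspect ratio loses a factor $O(\sqrt{n})$, since $k\le n$. That gives (iii) in the asymptotic sense.

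\emph{Item (iv).} For $t<1$, scale so that every item is small in a sub-square of side $1-t$, or use a constant-size grid of sub-bins. Then run the small-\Lshape algorithm inside the unit bin, and argue via the cutting argument above that sequences of area below a constant $c(t)>0$ never exhaust the sub-bins. For $t=1$, the adversary presents two \Lshapes with arms of length $1$ and arbitrarily thin width. The first item's placement determines a position, and the second item is chosen as a mirrored copy that cannot be placed compatibly. Since the total area tends to $0$, the online critical density is $0$.
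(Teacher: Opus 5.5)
Your proposal has genuine gaps in (i), (ii) and the first half of (iv). The conclusion of (iii) holds, but your argument for it does not.

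\emph{Item (ii).} Your per-scale analysis is essentially sound, but the step that combines the scales fails. With the grids placed along a diagonal, the bounding box has side $\sum_j\lceil\sqrt{k_j}\,\rceil s_j\ge\sum_j\sqrt{A_j}$, where $A_j$ is the total area handled at scale $s_j$. The geometric growth of the $s_j$ controls only $\sum_j s_j$, not this sum. For a counterexample, run $m$ phases, where phase $j$ presents squares (L-shapes with $w_x=\ell_x$, $w_y=\ell_y$) of side $2^{j-m-2}$ and total area $1$. The scale doubles in every phase and every grid has side at least $1$, so your perimeter is $\Omega(m)$. All squares, however, fit into a square of side $\lceil\sqrt m\,\rceil$. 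Since $m\approx\log_4 n$, your ratio is unbounded. What is missing is one online packing that serves containers of all sizes at once. The paper packs the L-shapes into dyadic rectangles of density $\nicefrac18$ (\cref{lem:small_lshapes_density}) and then packs all these rectangles with the $O(1)$-competitive \textsc{BrickTranslation} of Abrahamsen and Beretta. The one sparse rectangle per class goes into the fixed layout that puts the $(a\times b)$-rectangle at $(a,b)$.

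\emph{Item (iv), $t<1$.} An item is small only in a container of side $\ge 2t$. The guarantee of the small-L algorithm of \cref{thm:CCAlgSym} allows up to $7$ containers even for negligible area, so this fits into the unit bin only for small $t$; the paper's remark covers $t\le\nicefrac16$. For $t>\nicefrac12$ no sub-square makes the items small at all. Your cutting argument bounds $\opt$, not what an online algorithm does. The missing idea is to convert long arms into area. With $a=(1-t)/10$ and $h=2a$:
\begin{itemize}
\item Items with both arms longer than $a$ are gravity-stacked in a $(t+h)$-square. By \cref{lem:adding}, a failed insertion certifies $\sum w_z>h$ in some direction, hence area $>ha$.
\item Items with one long arm go by height class into strips of length $t+h$, using \cref{lem:small_lshapes_density_gen}; constant density follows from $t<t+h$.
\item Only items with two short arms use the small-L algorithm.
\end{itemize}
Your $t=1$ argument is fine. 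No adaptivity is needed, since $\ell_x(L)=\ell_y(L')=1$ always forces an intersection.

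\emph{Item (i).} The step you flag as the obstacle is false, not merely delicate.
\begin{itemize}
\item The vertical arms of $\mathcal L^n$ have width $\nicefrac1{2n}$. In the one-bin packing of \cref{th:l_shapes_sorted}, all $n$ reference points lie left of $x=\nicefrac12$ while every horizontal arm reaches past it. So all $n$ items can straddle one slab boundary.
\item Your fallback changes nothing, because both arms already exceed $\nicefrac12$.
\item Unit slabs are too coarse. Items whose reference points are more than $\nicefrac12+\nicefrac1{2n}$ apart do not constrain each other, so a unit slab can hold up to $2n$ unsorted items.
\item $\opt=1$ for the instance of \cref{cor:bin}, not $O(\log n)$. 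With $O(\log n)$ you would only get $\Omega(\nicefrac{n}{\log^2 n})$.
\item In your width-$1$ orientation the shapes must be rotated so that the arms of length $1-2^{-i-1}$ cross the strip. Unrotated, they nest in any order within height $\nicefrac32$.
\end{itemize}
The paper avoids straddling altogether (\cref{lem:strip}). An item with reference coordinate $x$ and $\lfloor 2nx\rfloor=mn+r$ is re-placed in bin $m+1$ at $(\nicefrac r{2n},1-\ell_y)$. Within a window of length $\nicefrac12$, the left item's horizontal arm covers the right item's vertical arm, so the strip already forces decreasing indices. The re-placed horizontal arms occupy the disjoint ranges $[2^{-i-1},2^{-i}]$, so the strip packing yields a valid packing in at most $2\cdot\text{width}+2$ bins.

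\emph{Item (iii).} ``Comparing over the worst aspect ratio'' is not an argument. What works is $\opt\ge A$ for the total item area $A$. Use the single scale $s=2$; all items are small there because arms are at most $1$. The small-L algorithm then uses $k\le 2A+7$ bins of side $2$. A near-square grid of these bins has area at most $8k+8\le 16\,\opt+64$, which is already an asymptotic bound.
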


The $O(\sqrt{n})$-competitive algorithm for \minarea  is
asymptotically best-possible, as a lower bound of $\Omega(\sqrt{n})$
already for rectangles has been shown by Abrahamsen and
Beretta~\cite{AbrahamsenBeretta20}.

\newcommand{\bigroot}[1][]{\Omega\bigg(\sqrt[#1]{\frac{\log n}{\log
\log n}}\bigg)}
\newcommand{\tblref}[1]{\text{#1}}
\begin{table}[htb]
  \centering
  \caption{Overview on the known upper and lower bounds on the best
    competitive ratios for the various packing problems, and on the
  critical densities.}
  \label{tab:2}
  \footnotesize
  \begin{tblr}{
      colspec={clrlrlrlr},
      cell{1}{2,4,6,8} = {c=2}{c},
      cell{4}{1} = {valign=m},
      cell{5}{6} = {c=2}{c},
      column{1-Z} = {leftsep=3.2pt,rightsep=3.2pt},
      column{2,4,6,8} = {rightsep=0pt},
      column{3,5,7,9} = {leftsep=0pt},
      cell{2-Z}{2-Z} = {mode=math},
      hline{1,2,Z} = {solid},
      vline{1,2,4,6,8,10} = {solid},
      width = \textwidth
    }
    Packing Variant &
    \SymbRect{} rectangles &&
    \SymbConv{} convex polygons &&
    \SymbL{} L-shapes &&
    \SymbZ{} ortho polygons \\

    {Bin Packing} &
    \Theta(1)&\tblref{\cite{jansen2009two}} &
    \bigroot&\tblref{\cite{onlineSortingOnlinePacking_SODA}}&
    \Omega\big(\frac{n}{\log n}\big)&\tblref{[Thm \ref{thm:LBbin}]} &
    n & \\

    {Strip Packing} &
    \Theta(1)&\tblref{\cite{YeOnlineStrip}} &
    \bigroot, O(n^{0.59})\,&\tblref{\cite{onlineSortingOnlinePacking_SODA}} &
    \Omega\big(\frac{n}{\log n}\big)&\tblref{[Thm \ref{thm:conseq}\ref{item:i}]} &
    \Omega\big(\frac{n}{\log n}\big)&\tblref{[Thm \ref{thm:conseq}\ref{item:i}]} \\

    {Perimeter Packing} &
    \Theta(1)&\tblref{\cite{AbrahamsenBeretta20}} &
    \Omega\bigg(\!\kern-1pt\mathord{\sqrt[4]{\frac{\log n}{\log \log n}}}\bigg)
     &\tblref{\cite{onlineSortingOnlinePacking_SODA}} &
    \Theta(1)&\tblref{[Thm \ref{thm:conseq}\ref{item:ii}]} \\

    {Area Packing} &
    \Theta(\sqrt{n}) & \tblref{\cite{AbrahamsenBeretta20}} &
    \Omega(\sqrt{n}) & \tblref{\cite{AbrahamsenBeretta20}} &
    \Theta(\sqrt{n})\,\tblref{\tblref{\cite[Thm \ref{thm:conseq}\ref{item:iii}]{AbrahamsenBeretta20}}}&&
    \Omega(\sqrt{n}) & \tblref{\cite{AbrahamsenBeretta20}} 
  \end{tblr}
\end{table}

\newcolumntype{s}{>{\centering\arraybackslash}X}

\subparagraph{Orthogonal polygons of higher complexity} For  bin
packing of orthogonal 8-gons, we show that the trivial algorithm,
that places each shape into its own bin,  achieves the best possible
competitive ratio. Hence, the trivial algorithm is also best possible
for \ortho.

\begin{restatable}{theorem}{Zgons}\label{thm:Zgons}
  There is no online algorithm for \ortho that achieves an
  asymptotic competitive ratio better than $n$,
  even when restricting to orthogonally convex 8-gons. 
\end{restatable}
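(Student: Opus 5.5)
We build an adaptive adversary (\playerA) that presents a sequence of $n$ orthogonally convex 8-gons (Z-shapes, \SymbZ{}) with two properties. First, \emph{all} $n$ items fit together into a single unit bin, so $\OPT=1$. Second, no two items can be placed into a common bin at the positions chosen by \playerB. Then \playerB uses $n$ bins, giving ratio $n$.

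For the asymptotic ratio we repeat the construction. We concatenate $k$ independent copies, each with its own fresh coordinates and using new bins. Alternatively, we add padding so that $\OPT$ grows while the ratio still tends to $n$ per block. The cleanest variant is to present $m$ blocks of $n$ items each, with $\OPT\le m$ and $\ALG\ge nm$. We then let $m\to\infty$ to rule out ratio $n-\epsilon$ asymptotically.

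\textbf{The gadget.} A Z-shape consists of a lower horizontal arm and an upper horizontal arm, joined by a vertical middle bar, as in the \Zframe picture. Two Z-shapes with almost full width ($\approx 1$) and height close to $1$ can be nested: one is shifted slightly up and to the right of the other, so that its bar slides into the notch. This is feasible exactly when the relative offset respects an ordering of a parameter, namely the horizontal position of the bar.

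The adversary maintains an interval $I$ of admissible bar positions, in the spirit of the sorting reduction behind \cref{thm:LBbin}. Since 8-gons have two notches rather than one, the constraints can be made two-sided. Each item's parameter is chosen depending on where \playerB placed the previous items, so that any two items compatible with the offline nesting are incompatible with the actual placements. Concretely, each item is nearly bin-filling. Its placement in a bin is therefore determined up to $\epsilon$ translations, and its bar position in absolute coordinates is fixed after placement. The next item's bar offset is chosen, by binary splitting of $I$ with the choice of side, so that its forced absolute position collides with the bar of every earlier item in every open bin. Meanwhile a global staircase arrangement of all items, with offsets $\epsilon_1<\epsilon_2<\dots$ in the optimal bin, remains valid. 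Thickness parameters shrink geometrically, e.g. $\delta_i=2^{-i}\epsilon$, so that all $n$ nested shapes fit into one bin.

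\textbf{Key steps, in order.}
\begin{enumerate}
\item Parametrize the 8-gons and characterize exactly when two of them fit into one bin. The two arms must interleave, which forces a strict order on both the $x$-offset of the bars and the $y$-offset of the arms.
\item Exhibit the offline packing of the whole sequence into one bin.
\item Define the adaptive choice of each new item and prove that it cannot share a bin with any earlier item.
\end{enumerate}

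\textbf{Main obstacle.} Step (iii) is the hard part. Unlike L-shapes, where \playerB still has $\log n$ bits of freedom via sorting, we must remove all freedom. The idea I would try first is to exploit the two notches of a Z-shape. A compatible pair requires item $j$ to lie to the right of item $i$ in one notch and above it in the other. The adversary chooses the sign of the new item's parameter so that the needed relative direction contradicts the position forced by the bin boundary. This works because items are nearly full-size, so item $i$ sits essentially flush in its bin. The second notch closes the loophole that allowed logarithmically many bins in the L-shape case. I expect the careful choice of which side of $I$ to recurse into, while preserving offline compatibility, to be the delicate part.
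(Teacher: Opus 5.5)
\textbf{There is a genuine gap: step (iii), which you leave open, is the proof, and the mechanism you sketch for it fails.} Your outline (a one-bin instance, adaptive binary splitting of the bar position, $\OPT=1$ against $n$ bins) has the right shape. Write the items in the paper's orientation. An item is a horizontal base of length $w$ at height $b$, with an arm of length $1-b$ going up from its left end and one of length $b$ going down from its right end; your bar is this base rotated by $90^\circ$.

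Because the splitting makes all $b_i$ distinct, bases never collide, so ``the forced position collides with the bar of every earlier item'' cannot be the mechanism. For these items, two of them with $b<b'$ at positions $x,x'$ fit together iff $x<x'$ \emph{and} $x+w<x'+w'$, and any intersection is an arm crossing the other base. Treating items as ``essentially flush'' also discards exactly what the adversary has to read.

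The missing idea has two parts. First, give $Z_i$ horizontal slack $1-w_i=2^{-i-1}$, so each new item has at most half the slack of every earlier one. Second, split $I$ according to which half of $[0,1-w_i]$ contains \playerB's choice $x_i$.
\begin{itemize}
\item If $x_i\le(1-w_i)/2$, every later $Z_j$ has $x_j+w_j\ge w_{i+1}=(1+w_i)/2\ge x_i+w_i$. So $Z_j$ cannot be the lower item of the pair, and the adversary gives it $b_j<b_i$.
\item If $x_i>(1-w_i)/2$, every later $Z_j$ has $x_j\le 1-w_j\le(1-w_i)/2<x_i$. So $Z_j$ cannot be the upper item, and the adversary gives it $b_j>b_i$.
\end{itemize}
Without a concrete rule of this kind you have no argument that a new item conflicts with \emph{every} earlier one. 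What must shrink geometrically is this slack, not the thickness.

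\textbf{Offline feasibility is a second missing lemma.} A single staircase with offsets $\epsilon_1<\epsilon_2<\dots$ in arrival order cannot provide it. In a common bin the items must be ordered by $b$ with both base endpoints increasing. For these widths this forces $w$ to increase with $b$ below the widest item $Z_n$ and to decrease with $b$ above it.

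The adversary's sequence has exactly this structure. Each $Z_i$ sends all later items to one side of $b_i$, so below $b_n$ the $b$-order is the arrival order and above $b_n$ it is the reverse. The two monotone chains then pack as staircases into two L-shaped regions on either side of $Z_n$, leaving horizontal gaps of order $2^{-n}/n$. These gaps are what allow thickening all skeletons to a uniform positive thickness.

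\textbf{The block repetition is unjustified and unnecessary.} \playerB may put items of a new block into bins of earlier blocks, since such pairs can be compatible. Meanwhile the single instance with $\ALG=n$ and $\OPT=1$ already gives $n\le f(n)+\beta$.
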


\begin{figure}[htb]
  \centering
  \begin{subfigure}[t]{0.32\linewidth}
\includegraphics[page=2]{figures/zframes/zframes3.pdf}
    \hfil
    \subcaption{}
    \label{fig:ZshapesA}
  \end{subfigure}
  \hfil
  \begin{subfigure}[t]{0.32\linewidth}
\includegraphics[page=1]{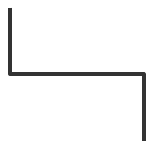}
    \hfil
    \subcaption{}
    \label{fig:ZshapesB}
  \end{subfigure}
  \hfil
  \begin{subfigure}[t]{0.32\linewidth}
\includegraphics[page=3]{figures/zframes/zframes3.pdf}
    \hfil
\subcaption{}
    \label{fig:ZshapesC}
  \end{subfigure}

  \caption{(a) A Z-shape (b) A \Zframe (c) An \Lframe}
  \label{fig:Zshapes}
\end{figure}

We obtain \cref{thm:Zgons} by considering Z-shapes of equal
thickness; \cref{fig:ZshapesA} depicts an example.
Interestingly, we obtain this result via first considering packings
of skeletons, which can be viewed as orthogonal polygons of widths 0,
see \cref{fig:ZshapesB,fig:ZshapesC}.
In the online problem \Lskeleton (\Zskeleton), a sequence of \Lframes
(\Zframes) shall be packed into a
minimum number of bins. In this context, we consider a packing of
skeletons as \emph{valid} if any two  objects are disjoint, with the
exception that either of the two endpoints of one skeleton may
touch another; a point of a \Zframe is an endpoint if it is a unique
extreme point, i.e., topmost, bottommost, leftmost, or rightmost.
From L- to \Zframes, there is a quite  drastic jump in complexity.

\newcommand{\StatementZskelLB}{No online algorithm for \Zskeleton has an
asymptotic competitive ratio better than $n$.}

\begin{restatable}{theorem}{skeletons}\label{thm:skeletons}
  \label{thm:frames}
  For online bin packing of orthogonal skeletons
the following holds.
  \begin{romanenumerate}
  \item There exists an online algorithm for \Lskeleton with a
  constant absolute competitive ratio.
  \item \StatementZskelLB
  \end{romanenumerate}
\end{restatable}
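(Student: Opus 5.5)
For part~(i), I would use a very simple online algorithm in the spirit of \textsc{NextFit}, based on one geometric observation. Normalize an \Lframe so that its corner is at the bottom-left, with a horizontal arm of length $\ell_x$ and a vertical arm of length $\ell_y$. Its corner $p$ must then lie in $[0,1-\ell_x]\times[0,1-\ell_y]$.

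The key observation is this: if corner $q$ strictly dominates corner $p$ (both coordinates strictly larger), the two \Lframes are disjoint. The horizontal arm of the first lies on $y=p_y<q_y$ and its vertical arm on $x=p_x<q_x$, so it never reaches the second \Lframe.

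Call an item \emph{critical} if $\max(\ell_x,\ell_y)=1$, and let $k$ be the number of critical items. For every non-critical item the feasible corner region contains a diagonal segment $\{(t,t): 0<t\le 1-\max(\ell_x,\ell_y)\}$ of positive length. The algorithm therefore packs all non-critical items into one common bin, each at a fresh diagonal value $t$ not used before. Any two placed corners are then strictly comparable, so they form a domination chain and the packing is valid regardless of arrival order. Each critical item gets its own bin.

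For the analysis I would show that two critical items never fit into the same bin. Two items with $\ell_x=1$ both have corner on $x=0$, so their vertical arms on that line overlap or the corner of one touches the arm of the other, and a corner is not an endpoint. An item with $\ell_x=1$ and an item with $\ell_y=1$ cross at a point that is an endpoint of neither. Hence $\OPT\ge\max(k,1)$, and the algorithm uses at most $k+1\le 2\OPT$ bins.

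The only delicate part of (i) is the degenerate cases, such as arms of length $0$ or endpoints touching exactly. I would handle these by a short case distinction using the endpoint-touching rule in the definition of a valid packing.

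For part~(ii), I would build an adaptive adversary. It presents \Zframes whose vertical arms together span the full height, with lengths $a_i$ and $1-a_i$, and whose horizontal middle segment has length $1-\varepsilon_i$ for tiny $\varepsilon_i$. Every placement is then forced up to a horizontal shift smaller than $\varepsilon_i$, and the middle segment sits at height $a_i$.

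First I would characterize when two such \Zframes are compatible in one bin. I expect the condition to be an order condition: $Z_i$ and $Z_j$ fit together iff the order of their heights $a_i,a_j$ agrees (or disagrees, depending on orientation) with the order of their shifts $x_i,x_j$, with strict gaps needed because the arms are long.

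The adversary then maintains a nested interval $I$ of admissible heights. Each new $a$ is chosen in $I$ so that, given the shifts the algorithm has already committed to, the new item conflicts with every placed item in every open bin. The interval is then shrunk so that all later items are also incompatible with the new one in its chosen bin. Because heights are real numbers, this halving can continue forever, which is what pushes the bound from $\nicefrac{n}{\log n}$ for \Lshapes up to $n$. Offline, however, all items fit in one bin, by choosing shifts ordered consistently with the heights. This forces $n$ bins against $\OPT=1$.

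To get an \emph{asymptotic} bound, I would repeat the construction in independent blocks, so that $\OPT$ grows while the ratio stays $n$ per block. Alternatively I would scale the sizes so that blocks cannot share bins.

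I expect the main obstacle to be the precise compatibility lemma for \Zframes, including the endpoint-touching rule. I also need to check that the adversary's choice keeps a consistent offline order while defeating every online shift pattern. I would first try to get this from the nested-interval argument, taking the $\varepsilon_i$ geometrically decreasing so that shifts cannot compensate.
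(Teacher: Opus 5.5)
Your part~(i) has a genuine gap: the claim that two critical items never share a bin is false when both have $\ell_x=1$. Their corners lie on the left edge and their horizontal arms lie at different heights. However, their vertical arms are the segments $\{0\}\times[y_1,y_1+\ell_y]$ and $\{0\}\times[y_2,y_2+\ell_y']$, and these are simply disjoint whenever $y_1+\ell_y<y_2$. Neither overlap nor a corner touching an arm is forced.

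For example, take $n$ \Lframes with $\ell_x=1$ and $\ell_y=\nicefrac{1}{2n}$, with corners at $(0,\nicefrac{i}{n})$ for $i=0,\dots,n-1$. They fit into one bin, so $\OPT=1$, while your algorithm opens $n$ bins. Thus $\OPT\ge\max(k,1)$ fails, and the algorithm is only $n$-competitive.

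The missing idea, which is what the paper uses, is that the items with $\ell_x=1$ in a bin correspond exactly to a \oneDBin instance with sizes $\ell_y$, stacked along the left edge. The top endpoint of one vertical arm may touch the next corner. The same holds symmetrically for $\ell_y=1$. The only real incompatibility is between an item with $\ell_x=1$ and one with $\ell_y=1$.

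So you must run a constant-competitive 1D algorithm separately on the two critical classes $S_2,S_3$, e.g.\ \textsc{NextFit}, which uses at most $2\OPT(S_i)-1$ bins. You then use $\OPT\ge\OPT(S_2)+\OPT(S_3)$. Together with your single diagonal bin for the non-critical items, this gives at most $2\OPT$ bins. That diagonal part, including your domination argument, is correct and matches the paper.

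Part~(ii) follows the paper's strategy: arms summing to $1$, base lengths $1-\varepsilon_i$ with $\varepsilon_i$ halving, and bisection of the admissible base heights driven by the algorithm's placement. However, the steps you leave open do not work out as you expect.

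\begin{itemize}
\item The compatibility condition is two-sided, not a pure order condition on shifts. In the paper's orientation, with $w,w'>\nicefrac12$ and base heights $b<b'$, the two \Zframes are disjoint iff $x<x'$ \emph{and} $x+w<x'+w'$. That is, both the left and the right ends of the bases must increase with height. This, together with branching on whether $x_i\le(1-w_i)/2$ and $w_{i+1}=(1+w_i)/2$, makes every later item conflict with $Z_i$ wherever it is placed.
\item Shifts ordered consistently with the heights need not exist for arbitrary widths. Three frames with widths $1-\delta,1-2\delta,1-\delta$ in height order already cannot share a bin. The offline packing works only because the adversary produces widths that increase with $b$ below $b_n$ and decrease with $b$ above it. You must prove this and then exploit it, by packing the two monotone groups into an \Lshape region and a $\pi$-rotated \Lshape region around $Z_n$.
\item Repeating the construction in blocks does not help: with $m$ blocks the ratio becomes the block size, not the number of items. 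Since no algorithm uses more than $n$ bins, a ratio of $n$ can only come from a single instance with $\OPT=1$. That is what the paper uses.
\end{itemize}
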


\subparagraph{Organization}
The remainder of the paper is organized as follows.  In
\cref{sec:relWork}, we discuss related work. In
\cref{sec:prelim}, we give a concise dictionary on crucial notions.
In \cref{sec:OnlineBinSort}, we prove \cref{thm:LBbin}, i.e.,  the
lower bound on the competitive ratio for \trans and
use our insights to show a lower bound for \transStrip in
\cref{sec:stripPacking}.
In \cref{sec:ConstantCompetitive}, we prove \cref{thm:CCAlgSym} by showing upper bounds on the competitive
ratios for \trans when the input shapes are symmetric or small.
We prove \cref{thm:conseq} in \cref{sec:variants}, giving lower and upper bounds
for the competitive ratios of other packing types.
We consider L-skeletons in \cref{sec:Lskeletons} and show Theorem~\ref{thm:skeletons}(i).
Finally, in \cref{sec:Zshapes}, we consider
orthogonal polygons of higher complexity and present the proofs of
Theorems~\ref{thm:Zgons} and~\ref{thm:skeletons}(ii). 

\subsection{Related work}\label{sec:relWork}

As the literature on online packing problems is extensive, we refer
to the surveys of Christensen, Khan, Pokutta, and
Tetali~\cite{CHRISTENSEN201763}, Epstein and van
Stee~\cite{epstein2018multidimensional}, van
Stee~\cite{DBLP:journals/sigact/Stee12,DBLP:journals/sigact/Stee15},
and Csirik and Woeginger~\cite{Csirik1998} for an overview.
Below we discuss the most important results related to the packing
problems studied in this paper.

\subparagraph{Rectangles}

Rectangle packings has been studied for almost 50 years. A
straightforward reduction from partition shows that the offline strip
packing problem cannot be approximated with an absolute factor
better than 3/2 unless $\P=\NP$, and the best known  approximation ratio is
$5/3+\varepsilon$ by Harren, Jansen, Prädel, and Van Stee~\cite{harren}.
For the online variant, first fit shelf algorithms as proposed by
Baker and Schwarz~\cite{baker1983shelf}, are widely studied. The best
known competitive ratio of $\nicefrac{7}{2}+\sqrt{10}\approx 6.66$ is
achieved by  Ye, Han, and Zhang~\cite{YeOnlineStrip}.
Restricting the attention to large instances, the best asymptotic
competitive ratio is
lower bounded by $1.54$ by van Vliet~\cite{van1992improved}
and
upper bounded by $1.59$ by Han, Iwama, Ye, and Zhang~\cite{han2007strip}.

For \emph{bin packing} of rectangles under translation, there exist approximation algorithms with approximation ratio 2~\cite{jansen2009two} and 
 asymptotic approximation guarantee~$1.4$ by Bansal and Khan~\cite{bansal2014improved}.
For online bin packing of rectangles, the upper bound on the
asymptotic competitive ratio for online translational bin packing
axis-parallel rectangular pieces into unit square bins has been
decreased in a series of papers from
$3.25$~\cite{coppersmithMultidimensionalOnLineBin1989a} to about
$2.55$ by Han, Chin, Ting, Zhang and Zhang~\cite{han2011new} and the
lower bound has been increased from $1.6$ \cite{galambos} to $1.91$
by Epstein~\cite{epstein2019lower}.
For the special case of squares, the lower bound is approximately
1.75 due to Balogh, B{\'e}k{\'e}si,  D{\'o}sa, Epstein, and
Levin~\cite{balogh2019lower}, while the upper bound is above
2.08 by Epstein and Mualem~\cite{DBLP:journals/algorithmica/EpsteinM23}.

For the problem of area minimization, Abrahamsen and
Beretta~\cite{AbrahamsenBeretta20} present $O(n)$-competitive
algorithms and show a lower bound of $\Omega(n)$
for both versions of rotation allowed and translations only.
For perimeter minimization, they gave a $3.98$-competitive algorithm for both, translation and rotation, as well as a lower bound on the
competitive ratio of $\nicefrac 43$ for the case of translation and
of $\nicefrac 54$ for the case of rotation.

\subparagraph{Convex Polygons}
Constant-factor approximations for packing convex polygons exist for many
variants~\cite{alt_convexOffline_JoCG,alt_convexOffline_corr,onlineSortingOnlinePacking_SODA,convexOffline-improved};
an exception is bin packing where  constant approximation factors  are only
known if the diameter of the items is bounded by a constant.  In contrast to
these constant guarantees, for online translational packings of convex polygons,
Aamand, Abrahamsen, Beretta and Kleist~\cite{onlineSortingOnlinePacking_SODA}
show that strip packing, bin packing, and perimeter packing do not allow for
constant competitive online algorithms, even if all pieces have a diameter
bounded by an arbitrarily small constant. 
To this end, they reduce to an online sorting problem which 
sparked follow-up
investigations~\cite{onlineSortingOnlineTSP,DBLP:journals/corr/abs-2508-14287,DBLP:conf/waoa/NirjhorW25}.
 Concerning upper bounds, they give
an online algorithm with competitive ratio $O(n^{0.59})$ for strip packing.
The knapsack problem for convex polygons under rigid motions is studied by
Merino and Wiese~\cite{DBLP:conf/icalp/MerinoW20}.

\subparagraph{Orthogonal Polygons and Scheduling}

We identified interesting connections between packings of special
\Lshapes and  scheduling problems such as open end bin packing and
machine minimization. These yield algorithms for large and symmetric
\Lshapes in the case where all \Lshapes have equal lengths or equal
widths; for an overview see \cref{tab:3}.

The case of equal lengths is related to open end bin packing.  In a variant of
classical 1D bin packing known as \emph{ordered open end bin packing},  items of
sizes in (0, 1] are presented one by one, and must be assigned to bins in this
order. An item can be assigned to any bin for which the current total size is
strictly below 1, i.e.,  the bin can be overloaded by its last packed item.
Balogh, Epstein, and Levin \cite{baloghMoreOrderedOpen2020} present a
$2$-competitive online algorithm and Epstein~\cite{epsteinOpenEnd} designs a
$1.5$ approximation. As packings of large, symmetric \Lshapes of equal lengths
can be reduced to open end bin packing, we obtain offline and online algorithms
with constant factors.  Specifically, for packing \Lshapes of length $\ell$, we
consider open end bins of capacity $1-\ell$. An \Lshape $L_i$ with arm length
$\ell$ and width $w_i$ is associated with a real $r_i:=w_i$. It is easy to see
that an $L_i$ can be packed into a bin containing $L_1,\dots, L_k$
(bottom-leftmost) if   $\sum_{j=1}^k w_j +\ell\leq 1$, \new{see also \cref{fig:equalLengths}}. Similarly, $r_i$ can be
packed into an open bin containing $r_1,\dots, r_k$ if (and only if)
$\sum_{j=1}^k r_j \leq 1-\ell$. As these conditions are equivalent, solutions of
open end bin packing translate to (bottom-leftmost) packings of large, symmetric
\Lshapes of equal lengths.  This implies a 2-competitive online algorithm and a 1.5-approximation \new{in this case.} 

\begin{figure}
    \centering
    \includegraphics[page=10]{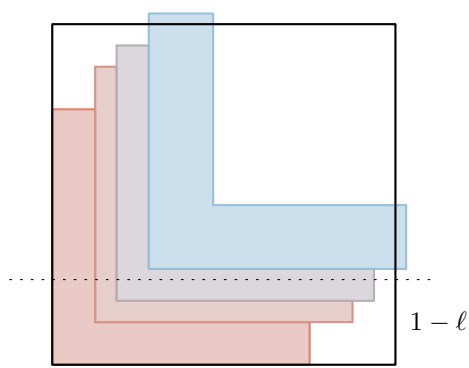}
    \caption{%
        Bin packing of large and symmetric \Lshapes of lengths $\ell$ reduces to ordered open end bin packing.
    }
    \label{fig:equalLengths}
\end{figure}

For bin packing of large and symmetric L-shapes \new{with different lengths}, we identified a close
connection to the problem of  \emph{machine minimization}: Given a set of jobs,
each of which has a {release time}~$r$, a {processing time} $p$ and {deadline}
$d$, the task is to schedule these jobs on a minimum number of machines such
that each machine processes at most one job at a time and each job starts not
before its release time, and finishes not after its deadline. The connection is
as follows: For large and symmetric \Lshapes, we show  in 
 \cref{sec:ConstantCompetitive} 
that the two parameters, length~$\ell$ and
width~$w$, translate to a job with release time $0$, processing time $w$, and
deadline $1-\ell$. \new{Therefore, we present relevant work in the context of machine minimization.}

Yu and Zhang~\cite{yuSchedulingMinimumNumber2009} present approximation
algorithms for the special cases of equal deadlines and equal processing times.
In particular, their 2-approximation algorithm for the special case of equal
deadlines translates to a 2-approximation for offline bin packing of large,
symmetric \Lshapes.  Kao, Chen,  Rutter, and Wagner \cite[Lemmas 1 and
2]{kao2012competitive} present a polynomial algorithm that computes the offline
optimal solution for jobs of unit processing times. This translates to a
polyomial time algorithm for offline packings of instances where all \Lshapes
are large, symmetric, and have equal widths. 

Devanur, Makarychev, Panigrahi, and Yaroslavtsev \cite[Section
2,3]{devanurOnlineAlgorithmsMachine2014} present an $e$-competitive online
algorithm for unit processing times; they also show that the competitive ratio
of $e$ is optimal. This result is also implied by Bansal, Kimbrel,  and
Pruhs~\cite[Lemmas 4.7 and 4.8]{bansal2007speed}. The insights can be translated
to an $e$-competitive algorithm for packing large, symmetric \Lshapes of equal
widths.  Moreover, Devanur et al.~\cite{devanurOnlineAlgorithmsMachine2014}
present a 16-competitive online algorithm for the case with equal deadlines;
their techniques are  the basis of our 33-competitive algorithm for packing large
symmetric \Lshapes, namely \cref{thm:CCAlgSym}(\ref{item:lasy}).

There are many more results on the general case of machine minimization.
However, due to a lower bound of $n$  by Saha \cite{sahaRentingCloud2013}, most
subsequent work  allows preemption of the jobs~\cite{chenLogCompetitiveAlgorithm2018, chenPowerMigrationOnline2016,
imLogLogCompetitive2018}, a setting that does not carry over to packings of
\Lshapes.

\begin{table}[htb]
  \caption{Upper bounds on the best asymptotic
  approximation/competitive ratios for \trans with  restricted \Lshapes.}
  \label{tab:3}
  \centering
  \begin{tblr}{
      colspec={|l|Q[c,$]r|Q[c,$]r|},cell{1}{2} = {c=2}{c},cell{1}{4} = {c=2}{c},
      rowspec={|Q|QQQQ|},
}
Type &
    \text{Offline} &&
    \text{Online}&
    \\
    \textbf{large, symmetric} &
    2 &\cite{yuSchedulingMinimumNumber2009} &
    33&\text{[Thm\ \ref{thm:CCAlgSym}\ref{item:lasy}]}
    \\
    \quad + equal length &
    1.5&\cite{epsteinOpenEnd} &
    2&\cite{baloghMoreOrderedOpen2020}
    \\
    \quad + equal width &
    1&\cite{kao2012competitive} &
    e& \cite{devanurOnlineAlgorithmsMachine2014}
    \\
    \textbf{small} &
    8&\text{[Thm\ \ref{thm:CCAlgSym}\ref{item:small}]} &
    8 &\text{[Thm\ \ref{thm:CCAlgSym}\ref{item:small}]}
  \end{tblr}
\end{table}

\subparagraph{Critical densities}
For a given container of volume $C$ and a class of objects in
$\mathbb R^d$, the \emph{critical density} is the largest value of
$V/C$ such that any sequence of objects of the class with a total
volume of at most $V$ can always be packed in the container.
For convex bodies of bounded diameter in $\mathbb R^3$, Auerbach,
Banach, Mazur and Ulam~\cite{mauldin2020scottish} stated (without
proof) that the critical density is positive, i.e., there  exists a
function $f$ such that any sequence of convex bodies in $\mathbb
R^3$, each of diameter $\leq \delta$ and total volume of at most $V$,
can be packed into a cube with side length $s=f (\delta,V)$ when
rotations are allowed.
The first proof, even for arbitrary dimension, is given by
Kosi{\'n}ski~\cite{kosinski1957proof}.
When rotations are not allowed, Alt, Cheong, Park, and
Scharf~\cite{alt2019packing} showed that the critical density of
packing convex bodies of bounded diameter into a cube is $0$; in
particular for packing unit disks in $\mathbb R^3$.
However, for cubes, Moon and Moser~\cite{moon1967some} showed that
any sequence of cubes in $\mathbb R^d$  of total volume $1/2^{d-1}$
can be packed into the unit cube. This bound is the best possible,
because for any $\varepsilon>0$, two cubes with side lengths
$\nicefrac{1}{2}+\varepsilon$ cannot be packed in the unit cube.

The study of critical densities likewise makes sense when the pieces
appear in an online fashion.
A lower bound on the critical density of online packing squares into
the unit square has been improved in a sequence of
papers~\cite{januszewski1997line,Brubach2014improved,han2008online,fekete2017online}
from $5/16$~\cite{januszewski1997line} to $2/5$~\cite{Brubach2014improved}.
Interestingly, Januszewski and Lassak~\cite{januszewski1997line}
proved that in dimension $d\geq 5$, the critical density of online
packing cubes into the unit cube is $1/2^{d-1}$, just as in the offline case.
Lassak and Zhang~\cite{lassak1991line} showed that for some constant
$\delta(d)>0$, any sequence of axis-parallel boxes of diameter and
total area at most $\delta(d)$ can be packed online in the
$d$-dimensional unit hypercube using translations. This implies that
the critical packing density for convex bodies is positive for any
dimension $d\geq 1$ when rotations are allowed as each body can be
rotated so that it has a constant (depending on the dimension)
density in its axis-parallel bounding box.
In contrast, for  translational and online packing convex polygons,
Aamand, Abrahamsen, Beretta and
Kleist~\cite{onlineSortingOnlinePacking_SODA} show that the critical
packing density is $0$.

\section{Notions: packing variants, competitive analysis, and
objects}\label{sec:prelim}
We briefly define the packing variants, review the common terminology for
competitive analysis, and introduce the considered objects.

\subparagraph{Packing variants}
Depending on the context, there are numerous
packing variants specified by the container and objective. The variants
discussed in this paper are as follows. Consider a set (or sequence) of items.
Given an unbounded supply of identical unit bins, the goal  of  \emph{bin
  packing} is to pack the items into as few bins as possible.  Given a
  horizontal strip of height $1$ which is bounded to the left but  infinite to
  the right, the aim of \emph{strip packing} is to place the items in the
  strip such that the maximum $x$-coordinate of an occupied point is minimized.  In
  \emph{perimeter packing} or \emph{area packing}, the goal is to find a
  placement of the items in the plane such that the perimeter or the area of
the bounding box is minimized, respectively.
All of these variants are interesting as
\emph{offline} and \emph{online} problems. In this work we focus on 
online problems and therefore review competetitive analysis.

\subparagraph{Competitive analysis}
In an online problem, the input is a sequence $\sigma_1,\ldots,\sigma_n$ of
objects, and we need to process object $\sigma_i$ (in a problem-specific
manner) before the next object $\sigma_{i+1}$ is revealed.  In this work, the
objects are either natural numbers, orthogonal polygons, or their skeletons.  We
briefly revisit the standard terminology of competitive analysis for an online
algorithm~$\alg$ of a minimization problem.  For an instance $I$, $\OPT(I)$
denotes the cost of the offline optimum solution and $\alg(I)$ denotes the cost
of the solution computed by $\alg$ for input~$I$.  Let $f\colon \mathbb N \to
\mathbb R$ be a function of the size $|I|$ of the instance, in our context
typically the number of objects.  We say that $\ALG$ has \emph{(absolute)
competitive ratio} $f(\ensuremath{|I|})$ if, for all instances $I$, it holds
that \( \ALG(I)\leq f(\ensuremath{|I|})\cdot \OPT(I).\) Similarly, $\ALG$ has
\emph{asymptotic competitive ratio} $f(\ensuremath{|I|})$ if there exists a
constant $\beta>0$ such that \( \ALG(I)\leq f(\ensuremath{|I|})\cdot
\OPT(I)+\beta\) for all $I$. 

\subparagraph{Objects} 
We focus on packings of
orthogonal polygons that we call \emph{L-shapes} and \emph{Z-shapes}. 
An \emph{\Lshape} is the union of a rectangle of width $\ell_x$ and height $w_y$
with a rectangle of width $w_x$ and height $\ell_y$, with $w_x \leq
\ell_x$ and $w_y \leq \ell_y$,
such that their lower left corners coincide. \cref{fig:shapesA}
presents an example. We describe the placement of an \Lshape by specifying
the coordinates of its \emph{reference point}, namely its lower left corner.  We
note that each orthogonal 6-gon is either an L-shape or a rotated L-shape.  An
\Lshape is 
\emph{small} if $\ell_x, \ell_y\leq \nicefrac{1}{2}$, \emph{large}
if $\ell_x, \ell_y\geq \nicefrac{1}{2}$, and \emph{symmetric} if $\ell_x=
\ell_y$ and $w_x=w_y$.

\IncludeSubfigures {shapes} {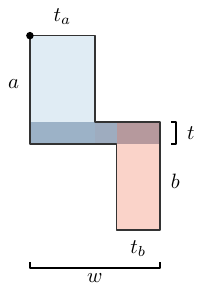} {2,3,1,4} {
Parameters and the reference point of an (a) \Lshape, (b) \Lframe, (c) Z-shape, and (d)~\Zframe.}

A \emph{Z-shape} is the union of three rectangles: A base rectangle of height
$t$ and width $w$, and two arm rectangles: One arm (above) of width $t_a$ and
height $a$, whose lower left corner coincides with the lower left corner of the
base, and one arm (below) of width $t_b$ and height~$b$ such that its upper
right corner coincides with the upper right corner of the base, where $t_a, t_b
\leq w$, see also \autoref{fig:shapesC}. In this paper, we will mostly be
concerned with Z-shapes of \emph{equal thickness}, i.e., the case where $t_a =
t_b = t$.  Note that all L-shapes and Z-shapes are \emph{orthogonally convex},
i.e., the intersection with any axis-parallel line is connected.

\Lframes and \Zframes are the corresponding shapes with zero thickness, i.e.,
$w_x = w_y = 0$ and $t_a = t_b = t = 0$, respectively.
\cref{fig:shapesB,fig:shapesD} present examples.

\section{Lower bound for bin packing of \Lshapes{} -- Proof of
\autoref{thm:LBbin}}
\label{sec:OnlineBinSort}

We now prove the lower bound for \trans.  \lowerBoundBin*

We show this in two steps. We prove that any online algorithm for \trans yields
an algorithm for an online sorting problem called \sorting with the same
competitive ratio, and show that every online algorithm for \sorting has a
competitive ratio in $\Omega(\nicefrac{n}{\log n})$. We start by introducing the
problem \sorting.

\subsection{Bin sorting}

The online problem \sorting{}$[k]$ can be described as a game between two
players, \playerA and \playerB, playing on several arrays with $k$ slots  each,
in which natural numbers have to be inserted. Within each array, the numbers
must be increasing. \playerA presents distinct numbers. When receiving a number,
\playerB has to  irrevocably decide in which array and in which slot the number
is placed while  obeying the rule that the numbers of each array are increasing.
The goal of \playerB is to minimize the number of used arrays, while \playerA
wants to maximize it.
\cref{{fig:instance_of_sorting_game}} presents an example of this game.

\begin{figure}[htb]
  \centering
  \includegraphics{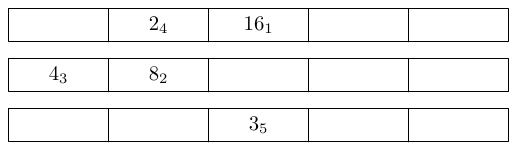}
  \caption{An instance of the \sorting with $k = 5$ for the sequence
  $16,8,4,2,3$.}
  \label{fig:instance_of_sorting_game}
\end{figure}

The offline optimum corresponds to the setting where \playerB knows all
numbers in advance and is simple to determine:  For $n$ numbers in arrays with
$k$ slots, $ \lceil \nicefrac{n}{k} \rceil$ arrays are necessary and, as the
numbers can be inserted in a sorted manner,
also  sufficient. Consequently, $\opt = \lceil \nicefrac{n}{k}
\rceil$.

We start by presenting a strategy for \playerA forcing \playerB to use at most
a logarithmic number of slots in each array.

\begin{lemma}
  \label{lem:LBsorting}
  The asymptotic competitive ratio of any
  online algorithm for
  \sorting{}$[k]$ is at least $\lceil
  \nicefrac{n}{\lfloor\log(k+1)\rfloor}\rceil /\lceil \nicefrac n
  k\rceil\in\Omega(\nicefrac{k}{\log k})$. For $k=n$, the lower bound is
  $n/{\lfloor \log (n+1) \rfloor} \in\Omega(\nicefrac{n}{\log n})$.
\end{lemma}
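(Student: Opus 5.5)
We give an adversary (\playerA) strategy that keeps an interval of admissible values and uses binary search. The goal is that every array ends up holding at most $\lfloor\log(k+1)\rfloor$ numbers. Combined with $\opt=\lceil n/k\rceil$, this gives the claimed ratio.

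\textbf{Invariant.} \playerA maintains an open integer interval $(a,b)$ with the following property: every number in $(a,b)$ could be placed into any slot of any array, as far as the order constraint is concerned. Concretely, for each array and each empty slot, the occupied neighbours of that slot are either both below $a$, both above $b$, or bracket the whole interval.

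Initially the interval is huge, say of length $2^{n+1}$, so that we never run out of integers. \playerA always presents the midpoint $m$ of the current interval. \playerB places $m$ in some array $A$, in slot $s$. Then:
\begin{itemize}
\item if the number of empty slots of $A$ to the left of $s$ is at least the number to the right, \playerA continues with the interval $(a,m)$;
\item otherwise \playerA continues with $(m,b)$.
\end{itemize}
In both cases the invariant persists, because all future numbers lie strictly on one side of $m$, and $m$'s position relative to other slots is unaffected.

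\textbf{Counting argument.} Focus on a single array $A$. Future numbers lie on one side of $m$, which is either entirely smaller or entirely larger than $m$. So in $A$, only the empty slots on one side of $s$ remain usable. Which side? If future numbers are smaller, only the left side can be used, and \playerA picks that side exactly when it has at least as many empty slots. This looks backward for the adversary, so the choice needs correcting.

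Correct choice: the adversary should make the *smaller* side the usable one. Suppose array $A$ has $r$ usable empty slots in a contiguous block, and a number lands in it. It splits the remaining $r-1$ slots into parts of sizes $p$ and $q$. The adversary restricts future numbers so that only the smaller part remains usable, which has size at most $\lfloor (r-1)/2\rfloor$.

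We must also check that the rule also keeps other arrays consistent. The interval shrinks monotonically, and each array's usable block is simply the set of empty slots whose neighbours bracket the interval. So each array's usable block is contiguous and is only affected when a number is placed in that array.

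Starting with $r=k$, the recursion $r\mapsto\lfloor (r-1)/2\rfloor$ reaches $0$ after at most $\lfloor\log(k+1)\rfloor$ steps. Hence each array receives at most $\lfloor\log(k+1)\rfloor$ numbers, and \playerB uses at least $\lceil n/\lfloor\log(k+1)\rfloor\rceil$ arrays.

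\textbf{Interval length.} Each presentation halves the interval, so after $n$ numbers the interval is still nonempty if it started with length $2^{n+1}$. The presented numbers are distinct by construction.

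\textbf{Conclusion.} Dividing by $\opt=\lceil n/k\rceil$ gives the ratio $\lceil n/\lfloor\log(k+1)\rfloor\rceil/\lceil n/k\rceil$.
\begin{itemize}
\item For $k=n$, we have $\opt=1$, giving the ratio $n/\lfloor\log(n+1)\rfloor$.
\item For the asymptotic statement, letting $n\to\infty$ with $k$ fixed, the additive constant $\beta$ is swamped. So the ratio is $\Omega(k/\log k)$, and for $k=n$ it is $\Omega(n/\log n)$.
\end{itemize}

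The main obstacle is stating the invariant precisely, so that the adversary's choice of side simultaneously (i) halves the interval and (ii) leaves every array with a single contiguous usable block.
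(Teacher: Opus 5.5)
Your argument is correct and is essentially the paper's proof: your interval together with one contiguous usable block per array is the paper's ``active set of similar numbers.'' As in the paper, \playerA presents the midpoint and keeps the half that forces future numbers into the smaller of the two new sections, which gives $r\mapsto\lfloor (r-1)/2\rfloor$ and hence at most $\lfloor\log(k+1)\rfloor$ numbers per array. In a final write-up, delete the first bullet rule (keeping the side with more empty slots), since only your corrected ``smaller side'' rule is valid and used.
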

\begin{proof}
  We show that \playerA has  a strategy to present $n$ numbers such
  that \playerB
  uses at least $n/\lfloor\log (k+1)\rfloor$ arrays  while $\lceil \nicefrac n k
  \rceil$ arrays would  suffice.  The idea is as follows: 
  Consider a partially filled array $A$. The numbers in $A$
  	partition it into maximal sections
  	of free slots. If a new number shall be inserted in $A$,
   it may be placed into at most one of these sections, since
   the entries in the array must be increasing.
We say that two (possible future) numbers are \emph{similar} with respect to 
$A$ if
  they must be placed in the same section. The idea of \playerA is to maintain a
  set of active numbers that are similar for all arrays.

  In order to present $n$ natural numbers $a_1,\dots, a_n$, \playerA
  maintains the
  following invariant: In the beginning of iteration $i$, it has an
  \emph{active}
  set of $2^{n-i+1}-1$ similar numbers. In the very beginning (of iteration 1),
  the set is $(0,2^n):=\{1,\dots, 2^n-1\}$, each array has one
  section containing
  all slots, and \playerA presents $a_1=2^{n-1}$. In iteration~$i+1$, \playerA
  determines the next number to be presented based on the action of \playerB in
  iteration $i$. \playerB places $a_i$ in a section of some array
  which results in
  two smaller sections. If the smaller of the two section lies to the left of
  $a_i$, then the active set $(l,r)$ is updated to $(l,a_i)$ and
  $a_{i+1}=(l+a_i)/2=a_i-2^{n-(i+1)}$; otherwise it is updated to $(a_i,r)$ and
  $a_{i+1}=(a_i+r)/2=a_i+2^{n-(i+1)}$. As the new active set is a subset of the
  previous, \playerB must place numbers presented in the future in (subsets of)
  the current sections of the active set. By construction, the
  numbers of the active set remain similar.  The active set is non-empty for
  at least $n$ iterations because it contains $2^{n - i + 1} - 1\geq
  1$ numbers if
  $i\leq n$.

  When we restrict our attention to a fixed array with $k$ slots,
  then the section
  for the active numbers is (at least) halved each time a number is inserted;
  this is because we ensure that the active set must be inserted in the smaller
  section.  Therefore, \playerB  can place at most $\lfloor \log{(k+1)}
  \rfloor$ of the $n$ numbers in each array and thus needs at least
  $\lceil \nicefrac{n}{\lfloor\log (k+1)\rfloor}\rceil$ bins. This yields a
  competitive ratio of at least $\lceil \nicefrac{n}{\lfloor\log
  (k+1)\rfloor}\rceil/ \lceil \nicefrac n k \rceil\in\Omega(\nicefrac{k}{\log
  k})$.
\end{proof}

We remark that the lower bound of \cref{lem:LBsorting} is tight, that is,
there exists an online algorithm \sorting{}$[k]$ that matches this bound.
\begin{lemma}
  \label{lem:LBsortingUB}
  There exists an online algorithm for \sorting{}$[k]$ with
  absolute competitive ratio
  of $\lceil \nicefrac{n}{\lfloor\log(k+1)\rfloor}\rceil /\lceil \nicefrac n
  k\rceil\in O(\nicefrac{k}{\log k})$.
\end{lemma}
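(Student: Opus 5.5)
We design an algorithm that places exactly $m:=\lfloor\log(k+1)\rfloor$ numbers into every array, so that it uses at most $\lceil \nicefrac{n}{m}\rceil$ arrays. Since $\opt=\lceil\nicefrac nk\rceil\geq 1$, this gives the ratio $\lceil \nicefrac{n}{\lfloor\log(k+1)\rfloor}\rceil/\lceil\nicefrac nk\rceil$ claimed in the statement. The asymptotic estimate $O(\nicefrac{k}{\log k})$ then follows because $\lceil n/m\rceil/\lceil n/k\rceil\le (n/m+1)/(n/k)\le k/m+1$ for $n\ge k$. For $n<k$ the ratio is at most $\lceil n/m\rceil\le n/m+1\le k/m+1$.

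The key observation is that an array with $2^m-1\le k$ slots can absorb any $m$ online numbers, whatever their values. We use only the first $2^m-1$ slots and fill them by a binary-search strategy.

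The first number goes into the middle slot $2^{m-1}$. This leaves two free sections of size $2^{m-1}-1$, one on each side.

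Inductively, we maintain that after $j$ insertions every free section between consecutive placed numbers (and at the two ends) has size exactly $2^{m-j}-1$. A new number $x$ determines a unique section, namely the one bounded by the placed numbers that are its predecessor and successor in value. We place $x$ in the middle slot of that section.

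This keeps the array increasing, because the section's boundary values bracket $x$. It also splits that section into two halves of size $2^{m-j-1}-1$. The sections not touched have size $2^{m-j}-1$, which only exceeds the required bound, so the invariant should really read ``every section has size at least $2^{m-j}-1$''. Hence after $j<m$ insertions there is always a nonempty section of size at least $2^{m-j}-1\ge 1$ available for the next number.

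The algorithm itself is then simple. It keeps a single open array and inserts each new number by the rule above. After $m$ numbers have been placed, it closes the array and opens a fresh one. As a result, every array except possibly the last contains exactly $m$ numbers, and the number of arrays used is $\lceil n/m\rceil$.

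The only step needing care is the invariant's induction: the section into which $x$ must go is forced by value, not chosen by us. The induction must therefore be stated as ``all sections have size at least $2^{m-j}-1$'' rather than as a statement about only the sections we split. With that phrasing the halving argument is routine, and the remainder of the proof is the arithmetic comparison with $\opt$.
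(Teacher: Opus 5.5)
Your proof is correct and uses the same key idea as the paper: placing each number in the middle slot of the section forced by its value halves the available space, so every array absorbs $\lfloor\log(k+1)\rfloor$ arbitrary numbers and at most $\lceil n/\lfloor\log(k+1)\rfloor\rceil$ arrays are used. The only difference is how arrays are chosen: you close each array after exactly $m$ numbers and use only its first $2^m-1$ slots (a \textsc{NextFit}-style rule under which every section has size exactly $2^r-1$), whereas the paper applies \textsc{FirstFit} over all $k$ slots and bounds the smallest section after $j$ insertions by $\lfloor (k+1)/2^j\rfloor-1$, which yields the same guarantee.
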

\begin{proof}
  For each partially filled array and any number, the valid slots form a
  consecutive interval.  Clearly, in an empty array, all slots are valid. Let
  $\mathcal A$ denote the algorithm that places a new number in the first
  array with a valid slot, and in particular, in the middle slot of the
  section. With
  this choice, the smallest section of an array containing $m$ numbers has
  length $\geq \lfloor\nicefrac{(k+1)}{2^m}\rfloor-1$. Therefore each array
  contains at least $\lfloor\log(k+1)\rfloor$ numbers before a new array is
  opened. Consequently, $\mathcal A$ uses at most $\lceil
  \nicefrac{n}{\lfloor\log(k+1)\rfloor}\rceil$ arrays while $\opt=\lceil
  n/k\rceil$.
\end{proof}

\color{black}

\subsection{From \trans to \sorting}
We now show how to  create instances of \trans which simulate \sorting.
The construction is visualized in \Cref{fig:to_sorting_game}. For
fixed $k \in \mathbb N$,
we consider the family $\mathcal L^k = \{ L_i \}_{i \in \mathbb N}$
of \Lshapes  with the following parameters:
\[\ell_x(L_i)=\nicefrac{1}{2}+\nicefrac{1}{2k}, \quad
  w_x(L_i)=\nicefrac{1}{2k},\quad \ell_y(L_i)=1-2^{-i-1}, \quad w_y(L_i)=2^{-i-1}\]

\begin{figure}[htb]
  \centering
  \begin{subfigure}[t]{0.3\linewidth}
    \centering
    \includegraphics[page=1,scale=.93,max
    width=\linewidth]{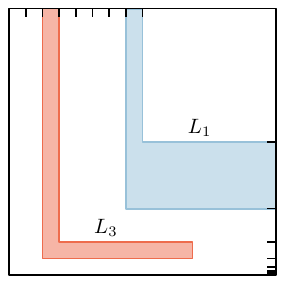}
    \subcaption{}
  \end{subfigure}
  \hfill
  \begin{subfigure}[t]{0.3\linewidth}
    \centering
    \includegraphics[page=2,scale=.93, max width=\linewidth]{figures/Lower_Bound_L-2.pdf}
    \subcaption{}
  \end{subfigure}
  \hfill
  \begin{subfigure}[t]{0.3\linewidth}
    \centering
    \includegraphics[page=3,scale=.93, max
    width=\linewidth]{figures/Lower_Bound_L-2.pdf}
    \subcaption{}
    \label{fig:LLBSortedOrder}
  \end{subfigure}
  \caption{Illustrations for $\mathcal L^8$. (a) While $L_1$ may be
    placed to the right/above of $L_3$, (b) $L_1$ cannot be placed to
    the left/below of $L_3$ as $l_x(L_1) + l_x(L_3) > 1$ and $l_y(L_3)
    + w_y(L_1) = 1 - 2^{-3 - 1} + 2^{-1 - 1} > 1$, respectively. (c) A
  packing of $L_1, \ldots, L_k \in \mathcal L^k$ for $k=8$ into one bin. }
  \label{fig:to_sorting_game}
\end{figure}

The key idea is that  $L_i$ in \trans mimics the natural number $i$
for \sorting. We start with the crucial property that each subset of
$ \mathcal L^{k}$ packed into a bin is sorted.

\begin{lemma}
  \label{th:l_shapes_sorted}
  Each set $S \subset \mathcal L^{k}$  packs into one bin if and only
  if $|S| \leq k$ and the items are placed in descending order of
  their indices $i$.
\end{lemma}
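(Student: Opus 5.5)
We prove both directions separately; the heart of the argument is a pairwise analysis of how two shapes $L_i, L_j \in \mathcal L^k$ with $i<j$ can sit together in a unit bin.

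\emph{Pairwise analysis.} Recall the parameters: $\ell_y(L_i)=1-2^{-i-1}$ and $w_y(L_i)=2^{-i-1}$. The vertical arm of $L_i$ has width $w_x=\nicefrac1{2k}$, while its horizontal arm has length $\ell_x=\nicefrac12+\nicefrac1{2k}>\nicefrac12$. Hence two shapes cannot be placed side by side horizontally, since $2\ell_x>1$. So their $x$-extents $[x,x+\ell_x]$ overlap, and the interiors must be separated vertically or through the L-notch.

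We write the reference points as $p_i=(x_i,y_i)$ and $p_j=(x_j,y_j)$. Because $\ell_y\ge \nicefrac34$, both reference points lie in $[0,\nicefrac14]$ vertically. The horizontal bars of $L_i$ and $L_j$ both cover $x$-range near the middle of the bin, as each has length more than $\nicefrac12$. So the bars must be stacked, one bar strictly above the other.

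\emph{Case analysis on which bar is below.}
\begin{itemize}
\item[(a)] Suppose $L_j$ (the larger index, thinner bar) lies on top of $L_i$'s bar. Then $y_j\ge y_i+w_y(L_i)$. Its vertical arm reaches height $y_j+\ell_y(L_j)\ge 2^{-i-1}+1-2^{-j-1}>1$, since $i<j$. This is the contradiction illustrated in the figure.
\item[(b)] So $L_i$'s bar lies above $L_j$'s bar. The bar of $L_i$ has $x$-extent overlapping the vertical arm of $L_j$ unless $x_i\ge x_j+w_x$, because $L_j$'s vertical arm is tall and almost spans the bin. Symmetrically, the vertical arm of $L_i$ crosses the bar of $L_j$ unless it sits above that bar, which is automatic.
\end{itemize}
Thus in any valid placement, $L_i$ is strictly to the right of $L_j$ by at least $w_x=\nicefrac1{2k}$ in $x$ and above it in $y$. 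Consequently, the shapes in a bin, sorted by $x$-coordinate of the reference points, appear in descending order of index. This is what "placed in descending order" means.

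\emph{Counting, for the ``only if'' direction.} The reference points satisfy $x_{\text{next}}\ge x_{\text{prev}}+\nicefrac1{2k}$, and the last shape needs $x+\ell_x\le1$, i.e.\ $x\le \nicefrac12-\nicefrac1{2k}$. Hence at most $1+(\nicefrac12-\nicefrac1{2k})\cdot 2k=k$ shapes fit.

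\emph{Construction, for the ``if'' direction.} Given $|S|\le k$ with indices $i_1>i_2>\dots>i_m$, we place the $t$-th shape at
\[
x=(t-1)/(2k),\qquad y=\sum_{s<t}2^{-i_s-1},
\]
nesting each shape into the notch of the previous one, as in the figure. Then:
\begin{itemize}
\item Width fits: $(m-1)/(2k)+\ell_x\le 1$.
\item Height fits: the top of the $t$-th vertical arm is
\[
\sum_{s<t}2^{-i_s-1}+1-2^{-i_t-1}\le 1,
\]
because the $i_s$ ($s<t$) are distinct indices all larger than $i_t$, so their geometric sum is below $2^{-i_t-1}$.
\item Disjointness: consecutive shapes touch only along boundaries. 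A nonconsecutive pair also stays disjoint, since the offsets accumulate monotonically.
\end{itemize}

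\emph{Main obstacle.} The delicate step is (b), making rigorous that "above" forces "to the right." The first attempt is to argue via the horizontal line at height $y_i$ slightly above $L_j$'s bar. That line meets $L_j$'s vertical arm, because $\ell_y(L_j)>\nicefrac34$, so the arm extends beyond $L_i$'s bar. It also meets $L_i$'s bar, of length $\ell_x$, starting at $x_i$. Disjointness then forces $x_i\ge x_j+w_x$, since the alternative $x_i+\ell_x\le x_j$ is impossible by the width bound.
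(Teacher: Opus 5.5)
Your proof is correct and follows essentially the same route as the paper. The wrong order is excluded by the same two inequalities: $2\ell_x>1$ rules out side-by-side placement, and $w_y(L_i)+\ell_y(L_j)>1$ for $i<j$ prevents the higher-index shape from sitting on top of the lower-index one. The construction likewise nests each shape into the notch of the previous one; you bound the height by a geometric sum, whereas the paper uses the inductive identity $\ell_y(L_i)-w_y(L_i)=\ell_y(L_{i-1})$.

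Your version is more careful than the paper's in two places. You explicitly justify that the bars must be vertically stacked and that the upper shape is then forced at least $w_x$ to the right; the paper leaves this implicit, though it also follows from the stacking property of large L-shapes. You also spell out the count giving $|S|\le k$.
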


\begin{proof}
  By construction up to $k$ items of $\mathcal L^{k}$ fit into one
  bin in descending order of their indices:
  With respect to the $x$-coordinate, exactly $k$ items can be packed
  next to each other.
  For the $y$-coordinate, we note the following. When $L_{i}$ is
  placed somewhere in the bin, the free space above is lower bounded by
  $\ly(L_i) - \wy(L_i)=(1-2^{-i-1}) -2^{-i-1}= \ly(L_{i-1}) \geq
  \ly(L_j)$ for $j<i.$
  So when placed in descending order, there is space for $k$ items.
  \Cref{fig:LLBSortedOrder} depicts an example for $k = 8$.
  It remains to show that  $L_i$ cannot be placed below/left of $L_j$
  when $i < j$.
  It cannot be placed entirely to the left of $L_j$, because
  $l_x(L_i) + l_x(L_j) = 2(\nicefrac12 + \nicefrac1k) > 1$,
  and it cannot be placed below $L_j$ because
  $\wy(L_i)+\ly(L_j)=2^{-i-1}+1- 2^{-j-1}>1$.
\end{proof}

We now show how an algorithm for \trans yields an algorithm for \sorting{}$[k]$.
\begin{lemma}\label{lem:packingToSorting}
  An online-algorithm for \trans yields an algorithm for
  \sorting{}$[k]$ with the same (absolute/asymptotic) competitive ratio.
\end{lemma}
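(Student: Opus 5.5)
Given an online algorithm $\ALG$ for \trans, we construct an algorithm $\mathcal B$ for \sorting{}$[k]$ by simulation. We use the family $\mathcal L^k$ from \cref{th:l_shapes_sorted}, whose packings into one bin behave like arrays with $k$ slots.

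One technicality: the \playerA numbers are arbitrary distinct naturals, while $\mathcal L^k$ is indexed by $\mathbb N$. Since $L_i$ stands for the number $i$, we feed $\ALG$ the shape $L_a$ when $\mathcal B$ receives the number $a$. The construction only depends on order: descending index within a bin corresponds to increasing numbers if we read slots from right to left. So we simply present $L_a$ with $i=a$.

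The simulation runs as follows. Bins of $\ALG$ correspond one-to-one with arrays of $\mathcal B$, in order of opening. When $\mathcal B$ receives $a$, it gives $L_a$ to $\ALG$. If $\ALG$ places it into bin $j$, then $\mathcal B$ must place $a$ into array $j$ at a slot consistent with the geometric order.

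The key step is to make the slot assignment valid. By \cref{th:l_shapes_sorted}, the shapes in bin $j$ number at most $k$ and are placed in descending order of index. Thus the pairwise "left/below" relation among them is a chain consistent with the index order. From this chain we have to derive concrete slot positions in a $k$-slot array that remain consistent as future items arrive. Here is what I would try first. Every pair in a bin is comparable; in fact, the $x$-extents force each pair to be strictly horizontally separated in the $\ell_x$ sense, since $\ell_x>1/2$ and widths are $1/(2k)$. I would assign slots by the $x$-coordinate of the reference point, slot $=\lfloor 2k\,x\rfloor$ (reversed), and argue three things. First, the reference points of items in one bin lie in $[0,1/2-1/(2k)]$, a range of length less than $1/2$. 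Second, items of the same bin have reference $x$-coordinates differing by at least $w_x=1/(2k)$, since one item's vertical arm must lie left of the next item's reference point. Third, descending index corresponds to increasing $x$. Together these give at most $k$ distinct slot values, ordered consistently with the numbers, so the assignment is an irrevocable, valid \sorting move.

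I expect the main obstacle to be this step: making sure the discretized slot stays injective and monotone. It may require $\lceil\cdot\rceil$ adjustments, or using $k$ slots indexed by rank within the discrete grid of width $1/(2k)$.

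Finally, we compare costs. $\mathcal B$ uses exactly as many arrays as $\ALG$ uses bins. Moreover, $\OPT$ for the \trans instance equals $\lceil n/k\rceil=\OPT$ of \sorting, since both are at least $\lceil n/k\rceil$ by \cref{th:l_shapes_sorted}, and sorted packing achieves this. So $\mathcal B(I)=\ALG(I')\le f\cdot\OPT(I')+\beta=f\cdot\OPT(I)+\beta$, and the same competitive ratio (absolute or asymptotic) transfers.
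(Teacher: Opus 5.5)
Your proposal is correct and is essentially the paper's proof: you present $L_a$ for the number $a$, identify bins with arrays, and read the slot off $\lfloor 2kx\rfloor$, using that reference points lie in $[0,\nicefrac{1}{2}-\nicefrac{1}{2k}]$ and are spaced at least $\nicefrac{1}{2k}$ apart (this already forces the floors to differ by at least $1$, so no ceiling adjustments are needed), and you compare with $\OPT=\lceil n/k\rceil$ on both sides. Your reversal of the slot order is in fact the right orientation, since the larger index sits further left in a bin; the paper's unreversed slot $\lfloor 2kx\rfloor+1$ would fill the arrays in decreasing order, which is harmless for the lower bound by symmetry but does not match the rules as stated.
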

\begin{proof}
  Let $\mathcal A$ be an $\alpha$-competitive algorithm for \trans
  and consider an instance $I_S$ of \sorting$[k]$ with $n$ numbers.
  We describe an online algorithm $\mathcal A'$ for \sorting{}$[k]$
  which uses $\mathcal A$ as a subroutine. For each presented number
  $i$, $\mathcal A'$ presents $L_{i}\in\mathcal L^k$ to $\mathcal A$.
  By \Cref{th:l_shapes_sorted}, at most $k$ {\Lshapes} of $\mathcal
  L^k$ fit into one bin.
  Let  $x$ denote the {$x$-coordinate of the reference point} of
  $L_{i}$ in the $j$th bin as assigned by $\mathcal A$. Then
  $\mathcal A'$ places number~$i$ in the  slot $\lfloor x \cdot
  2k\rfloor+1$ of the $j$th array. As $x\in [0,\nicefrac 1
  2-\nicefrac{1}{2k}]$, it holds that $\lfloor x \cdot 2k\rfloor+1\in [k]$.
  We shortly argue that this position is valid. Firstly, {because all
  \Lshapes have width} $\nicefrac{1}{2k}$, the assigned $x$-values
  differ by at least $\nicefrac{1}{2k}$ and for any {$x$ and $x'$
  with} $|x-x'|\geq \nicefrac{1}{2k}$, we have $|\lfloor x \cdot
  2k\rfloor-\lfloor x' \cdot 2k\rfloor|\geq 1$. Secondly, as
  $\mathcal A$ guarantees that the numbers of each array are
  increasing, the same holds for the indices of the \Lshapes.  Hence,
  by \Cref{th:l_shapes_sorted}, the placement is valid.
  Consequently, we may translate the $x$-position of $L_{i}$ in the
  $j$th bin to a unique slot in the $j$th array.
The number of used arrays equals the number of used bins. Moreover,
  the offline optimum is $\lceil n/k \rceil$ in both problems. Hence,
  $\mathcal A'$ is $\alpha$-competitive.
\end{proof}

Together, \cref{lem:LBsorting,lem:packingToSorting} show the
following statement, which proves  \cref{thm:LBbin}.

\begin{corollary}\label{cor:bin}
  For each online algorithm $\mathcal A$ for \trans and each $n$,
  there exists a strategy to present $n$ \Lshapes from $\mathcal L^n$
  (no duplicates) such that they can all be packed into one bin while
  $\mathcal A$  uses at least $n/\log (n+1)$ bins.
\end{corollary}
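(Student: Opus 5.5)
The plan is to combine the \playerA strategy from the proof of \cref{lem:LBsorting} with the translation of \cref{lem:packingToSorting}, run in the direction \trans $\to$ \sorting{}$[n]$. Take $k=n$ and let $\mathcal A$ be any online algorithm for \trans. We build an online algorithm $\mathcal A'$ for \sorting{}$[n]$ exactly as in \cref{lem:packingToSorting}: each presented number $i$ is forwarded as $L_i\in\mathcal L^n$, and $L_i$ is placed in the $j$th array at slot $\lfloor x\cdot 2n\rfloor+1$. Here $j$ is the bin $\mathcal A$ chooses and $x$ is the $x$-coordinate of the reference point. That proof already shows the slots are distinct and increasing, so $\mathcal A'$ is a valid \sorting{}$[n]$ algorithm and uses exactly as many arrays as $\mathcal A$ uses bins.

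Next, play the adversary of \cref{lem:LBsorting} against $\mathcal A'$. It presents $n$ distinct numbers $a_1,\dots,a_n$, each in $\{1,\dots,2^n-1\}$, chosen adaptively from the responses of $\mathcal A'$ and hence of $\mathcal A$. The corresponding \Lshapes are $L_{a_1},\dots,L_{a_n}$, which are pairwise distinct, so there are no duplicates. The argument of \cref{lem:LBsorting} gives that every array receives at most $\lfloor\log(n+1)\rfloor$ of the numbers. Therefore $\mathcal A'$, and thus $\mathcal A$, uses at least $n/\lfloor\log(n+1)\rfloor\ge n/\log(n+1)$ arrays, respectively bins.

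It remains to show that all $n$ presented \Lshapes fit into a single bin offline. By \cref{th:l_shapes_sorted}, any set $S\subset\mathcal L^n$ with $|S|\le n$ packs into one bin when placed in descending order of indices, and $|S|=n$ here. For completeness I would recheck that the sufficiency part of \cref{th:l_shapes_sorted} holds for arbitrary index sets, not only consecutive ones. Along $x$, $n$ items of width $w_x=\nicefrac{1}{2n}$ placed side by side use $\nicefrac12-\nicefrac1{2n}$ for the reference points, and the longest arm $\ell_x=\nicefrac12+\nicefrac1{2n}$ still ends within $1$. Along $y$, the free space above $L_i$ is at least $\ly(L_{i-1})\ge\ly(L_j)$ for every $j<i$, so gaps in the index sequence only help.

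The main obstacle is the bookkeeping rather than any new idea. The adversary lives in the sorting world and the bound in \cref{lem:LBsorting} is stated asymptotically, so it must be checked that the adversary's adaptivity carries over through the deterministic simulation. It does, because $\mathcal A'$ is a deterministic function of $\mathcal A$'s placements. One must also make sure the bound is stated per instance: the number of bins used is at least $n/\lfloor\log(n+1)\rfloor$ while $\opt=1$. This is exactly what the corollary claims, and since $\opt=1$ the constant $\beta$ cannot absorb it as $n\to\infty$, which yields \cref{thm:LBbin}.
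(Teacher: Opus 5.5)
Your proposal is correct and is essentially the paper's own argument. Like the paper, you run the \playerA strategy of \cref{lem:LBsorting} against the simulation of \cref{lem:packingToSorting} with $k=n$, and you use \cref{th:l_shapes_sorted} to pack the $n$ distinct \Lshapes into one bin. One cosmetic point, inherited from the paper: larger indices are forced further left, so the slot map $\lfloor x\cdot 2n\rfloor+1$ makes each array decreasing. Use slot $n-\lfloor x\cdot 2n\rfloor$ instead, or note that the adversary's argument is symmetric under reversing the arrays.
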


\section{Strip packing of \Lshapes{} -- Proof of
\autoref{thm:conseq}(\ref{item:i})}\label{sec:stripPacking}
We now prove \Cref{thm:conseq}(\ref{item:i}). To this end, we show
how a strip packing algorithm yields a bin packing algorithm.

\begin{lemma}\label{lem:strip}
  An algorithm $\alg$ for strip packing a sequence of $n$ \Lshapes
    from~$\mathcal L^n$ (without duplicates) with asymptotic competitive
    ratio $\alpha$ yields an algorithm $\mathcal A'$ for \trans with
  asymptotic competitve ratio $2\alpha$.
\end{lemma}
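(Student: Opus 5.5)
Given the strip packing algorithm $\alg$, the bin packing algorithm $\mathcal A'$ runs $\alg$ on the same sequence of \Lshapes from $\mathcal L^n$ and cuts the strip into unit-width slabs $[j,j+1]\times[0,1]$ for $j=0,1,2,\dots$. Each slab with the items assigned to it is treated as a bin. Each item $L_i$ is assigned online, at the moment $\alg$ places it, to the slab containing its reference point, which has $x$-coordinate $x$ and lies in slab $\lfloor x\rfloor$. The difficulty is that an item may cross a slab boundary, since $\ell_x(L_i)=\nicefrac12+\nicefrac1{2n}$. So I would not use the slabs directly as bins. Instead, each slab $j$ gets two bins in $\mathcal A'$: one for items lying entirely inside $[j,j+1)$, and one for items that cross the line $x=j+1$. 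This explains the factor $2$.

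\textbf{Validity.} Items lying entirely in a slab are pairwise disjoint and fit in a unit square, so translating them by $-j$ gives a valid bin. For items crossing $x=j+1$, their reference points lie in $[j+\nicefrac12-\nicefrac1{2n},\,j+1)$ and their horizontal extents lie in $[j+\nicefrac12-\nicefrac1{2n},\,j+\nicefrac32+\nicefrac1{2n})$, which has width slightly larger than $1$. Here I would use the structure of $\mathcal L^n$. Two items whose horizontal arms overlap in $x$ must be stacked, and by \cref{th:l_shapes_sorted}-type arguments their vertical extents ($\ell_y$ close to $1$) prevent this unless they are in the nested descending order. Alternatively, and more simply, I would assign crossing items to the bin of slab $j+1$ shifted by $j+\nicefrac12$, that is, use the window $[j+\nicefrac12,j+\nicefrac32]$. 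Any item with reference point in $[j+\nicefrac12,j+1)$ has right end at most $j+1+\nicefrac12+\nicefrac1{2n}$. To avoid that overshoot, I would shift each window slightly and use windows aligned at half-integers: every item has width $\le \nicefrac12+\nicefrac1{2n}<1$, so it lies entirely in $[j,j+1]$ or entirely in $[j+\nicefrac12,j+\nicefrac32]$ for some $j$, provided $\nicefrac12+\nicefrac1{2n}\le\nicefrac12$. That inequality fails marginally, so the main obstacle is this boundary slack. I would handle it by placing the item, within its half-integer window, by its reference point: the $x$-coordinate of the reference point modulo $1$ already determines a valid position, as in the proof of \cref{lem:packingToSorting}. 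If necessary, I would fall back to the sorted-order characterization of \cref{th:l_shapes_sorted}: items in one window are placed by $\alg$ disjointly and, projecting to at most $n$ slots by their reference $x$, they form an increasing sequence, which can be repacked canonically into one bin as in \cref{fig:LLBSortedOrder}.

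\textbf{Counting.} If $\alg$ uses strip length $W\le\alpha\cdot\OPT_{\text{strip}}+\beta$, then $\mathcal A'$ opens at most $2\lceil W\rceil\le 2\alpha\,\OPT_{\text{strip}}+2\beta+2$ bins. Finally, $\OPT_{\text{strip}}\le\OPT_{\text{bin}}$, because the bins of an optimal bin packing can be laid side by side in the strip. This gives asymptotic ratio $2\alpha$.
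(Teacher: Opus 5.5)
There is a genuine gap: the construction you commit to does not produce valid bins, and none of the patches you list closes it. In the ``two bins per slab'' scheme, the crossing items of slab $j$ have reference points anywhere in $(j+\nicefrac12-\nicefrac1{2n},\,j+1)$. That interval has length $\nicefrac12+\nicefrac1{2n}$, so it covers $n+1$ slots of width $\nicefrac1{2n}$. Each patch fails:
\begin{itemize}
\item Plain translation fails, since the horizontal extents span up to $1+\nicefrac1n$.
\item Half-integer windows fail, as you note yourself.
\item Using $x \bmod 1$ as the bin coordinate fails whenever the fractional part exceeds $\nicefrac12-\nicefrac1{2n}$.
\item Slot-based placement fails, because an item in the last of these $n+1$ slots would be put at $x'=\nicefrac12$ and reach $1+\nicefrac1{2n}$.
\item ``Repacking canonically'' is not available either. $\mathcal A'$ is online and must fix each position when the item arrives. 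You give no rule that maps $n+1$ candidate slots order-preservingly onto the at most $n$ admissible positions without knowing which slots will be occupied later.
\end{itemize}
In addition, the sortedness and distinct-slot facts you fall back on are only asserted. Also, ``horizontal arms overlap in $x$'' is not the right criterion: what matters is the horizontal arm of one item against the vertical arm of the other.

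The missing idea is to group items by their reference point into windows of width exactly $\nicefrac12$, not by unit slabs. If $\lfloor 2nx\rfloor=mn+r$ with $0\le r<n$, put $L$ into bin $m+1$ at $(\nicefrac{r}{2n},\,1-\ell_y(L))$. You then need two facts.
\begin{itemize}
\item[(a)] If $x_i<x_j$ lie in the same window, then $x_j-x_i<\nicefrac12<\ell_x$. So the vertical arm of $L_j$ lies in the $x$-range of the horizontal arm of $L_i$. Since $\ell_y(L_i)+\ell_y(L_j)>1$, $L_j$ cannot lie below that arm, so it lies above it. This forces $w_y(L_i)+\ell_y(L_j)\le 1$, i.e., $i>j$ (no duplicates).
\item[(b)] Two vertical arms cannot overlap in $x$, again because $\ell_y(L_i)+\ell_y(L_j)>1$. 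Hence $|x_i-x_j|\ge\nicefrac1{2n}$ and the slots $r$ are distinct.
\end{itemize}
Descending indices, distinct slots $r\le n-1$, and top-aligned $y$-coordinates then give a valid bin, exactly as in \cref{th:l_shapes_sorted}. The factor $2$ comes from there being at most $\lceil 2W\rceil\le 2\alpha\OPT+2\beta+1$ such windows, not from two bins per unit slab. Your observation $\OPT_{\text{strip}}\le\OPT_{\text{bin}}$ and the rest of the counting are fine.
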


\begin{proof}
  Let $\mathcal A$ be an asymptotic $\alpha$-competitive algorithm for
  \transStrip and consider an instance  of \trans with $n$ objects from
  $\mathcal L^n$. We describe an online algorithm~$\mathcal A'$ for \trans which
  uses $\mathcal A$ as a subroutine.  For each each $L\in \mathcal L^n$,
  $\mathcal A'$ presents it to $\mathcal A$ and observes its placement in the
  strip. The idea is to partition the strip into slots of widths
  $\nicefrac{1}{2n}$; each slot will correspond to a placement in a bin. For an
  illustration, see \autoref{fig:stacks}. We show that objects of $n$
  consecutive slots can be packed into one bin. If the reference point of $L$ is
  placed at $(x,y)$ in the strip, then $\mathcal A'$ places it as follows. Let
  $m,r\in \mathbb N$ such that $\lfloor x\cdot 2n\rfloor=mn+r$ and $r<n$. Then
  $\mathcal A'$ places $L$ in the $(m+1)$st bin such that its reference point is
  at $(x',y'):=(r/2n,1-\ell_y(L))$.

  \begin{figure}[htb]
    \centering
    \includegraphics[page=1,scale=.93]{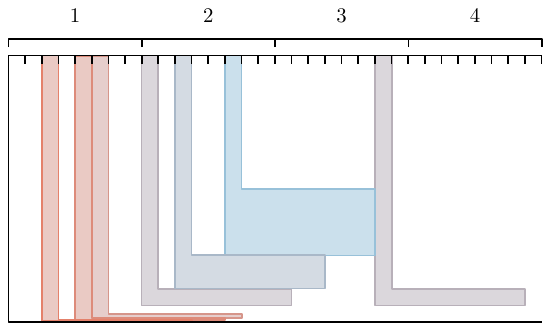}
    \caption{Illustration for the proof of \cref{lem:strip}.
        {
            The position assignments made by an online algorithm for \transStrip
            can be used to construct an online algorithm for \trans.
        }
    }
    \label{fig:stacks}
  \end{figure}

  We now argue that this yields a valid bin packing.  Let $L_i$ and $L_j$ be two
  \Lshapes  in a same bin and denote their $x$-coordinates in the strip by $x_i$
  and $x_j$, and in the bin by $x'_i$ and $x'_j$, respectively. We assume that
  $x_i<x_j$. By construction, we have that $|x_i-x_j|\leq \nicefrac{1}{2}$. The
  placement $x_i<x_j$ implies that $w_y(L_i)+\ell_y(L_j)\leq 1$ and hence $i\geq
  j$, i.e., in each bin the \Lshapes are non-increasing. Moreover, due to the
  equal width of $\nicefrac{1}{2n}$, no two \Lshapes intersect in the
  $x$-direction.  Without duplicates, no intersections in the $y$-direction
  follow from the definition of $\mathcal L^n$. Lastly, if $x'_j-x'_i\leq
  \nicefrac{n-1}{2n}$ then $x'_j+\ell_x(L_i)\leq
  (\nicefrac{n-1}{2n}+x_i)+\nicefrac{1}{2}+\nicefrac{1}{2k}\leq x_i+1$.
  Therefore, the $y$-span of the \Lshapes is at most 1.

  It remains to check the number of bins. Let $\OPT_S$ and $\OPT_B$ denote the
  offline optimum values for \transStrip and \trans, respectively. Clearly, we
  have $\OPT_S\leq \OPT_B=1$; recall that all
  objects pack into one bin by \cref{th:l_shapes_sorted}.  As $\mathcal A$ is asymptotically
  $\alpha$-competitive, there is a constant $\beta > 0$ such that the width of
  the strip packing is at most $\alpha \cdot \OPT_S + \beta$. The number of bins
  is $\lceil{\lceil 2n (\alpha \cdot \OPT_S + \beta) \rceil}/{n}\rceil\leq
  2\alpha\OPT_B + 2\beta + 2$, so $\mathcal A'$ is asymptotically
  $2\alpha$-competitive.
\end{proof}

Together, \cref{lem:strip,cor:bin} imply \cref{thm:conseq}(\ref{item:i}).

\section{Constant competitive algorithms for symmetric and small
\Lshapes{} -- Proof of \autoref{thm:CCAlgSym}}
\label{sec:ConstantCompetitive}

In this section we present constant competitive algorithms, i.e., we
prove  \cref{thm:CCAlgSym}.
\onlineSym*

We show the three parts in three lemmas. We start by  establishing
some useful properties of packings of large \Lshapes.

\subsection{Useful properties of special packings}
We call a bin packing of \Lshape{s} \emph{stacked}, when the arms of
each \Lshape can be extended to the boundary of the bin without
intersecting any other \Lshape. For illustrations see \cref{fig:usefulA,fig:usefulB}.
Note that in a stacked packing, the $x$- and $y$-order of the
\Lshapes coincide.
We call a packing of \Lshape{s} a \emph{gravity packing}, if no
\Lshape can be moved to the left or bottom without causing
intersections. While the packing in \cref{fig:usefulB} is not a
gravity packing, the one in \cref{fig:usefulC} is.

\IncludeSubfigures{useful}{figures/Lower_Bound_L-2}{7,5,6}[0.32,0.32,0.32]{
  Illustrations for stacked and gravity packings. (a) A packing that
  is not stacked. (b) A stacked packing that is not a gravity
  packing. (c) A stacked gravity packing.
}

\begin{lemma}\label{lem:stackedSymmetric}
  Let $S$ be a set of large \Lshapes. The following properties hold:
  \begin{romanenumerate}
  \item Any bin packing of $S$ is stacked.
  \item If all \Lshapes in $S$ are symmetric, then in any gravity
    packing of $S$, the reference point of each \Lshape is placed on
    the diagonal.
  \end{romanenumerate}
\end{lemma}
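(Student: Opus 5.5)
For (i), I will fix an \Lshape $L$ in a bin packing of large \Lshapes, with reference point $(x,y)$, and show that no other \Lshape $L'$ meets the extension of $L$'s horizontal arm to the right boundary or of its vertical arm to the top boundary. By symmetry it suffices to treat the horizontal arm, i.e., the strip $R=[x,1]\times[y,y+w_y(L)]$. The key observation is that two large \Lshapes cannot be placed side by side: since $\ell_x(L),\ell_x(L')\geq \nicefrac12$, the horizontal arms of $L$ and $L'$ overlap in $x$-projection unless both lengths equal exactly $\nicefrac12$ and the shapes merely touch. The analogous statement holds for the vertical arms.

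Suppose some $L'$ with reference point $(x',y')$ intersects $R$ but not $L$. I will first argue $x'\geq x$. Indeed, if $x'<x$, the vertical arm of $L'$ lies to the left of $x$, so $L'$ can reach $R$ only through its horizontal arm. That arm has $y$-range $[y',y'+w_y(L')]$, which must then meet $[y,y+w_y(L)]$, and its $x$-range $[x',x'+\ell_x(L')]$ reaches beyond $x$. So the horizontal arm of $L'$ overlaps the horizontal arm of $L$ in both coordinates, contradicting disjointness.

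Now take $x'\geq x$. Since $L'\cap R\neq\emptyset$, the point $L'\cap R$ lies in some piece of $L'$ whose lowest point is $(x',y')$ or lies on the bottom edge. This gives $y'\leq y+w_y(L)$.
\begin{itemize}
  \item If $y'\geq y$, the reference point $(x',y')$ lies in $[x,1]\times[y,y+w_y(L)]$. Using $x'\leq 1-\ell_x(L')\leq\nicefrac12\leq x+\ell_x(L)$, this point lies inside the horizontal arm of $L$, up to boundary cases.
  \item If $y'<y$, then the vertical arm of $L'$ spans $y$-values from $y'$ up to $y'+\ell_y(L')$. The vertical arm of $L$ spans from $y$ up to $y+\ell_y(L)$. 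Both have length at least $\nicefrac12$, and both $y'$ and $y$ are at most $\nicefrac12$, so the two $y$-ranges intersect. The $x$-ranges intersect as well, because $x'\in[x,\nicefrac12]$ lies within $L$'s horizontal extent, and $L'$ reaches $R$. Combining these, the vertical arm of $L'$ crosses the horizontal arm of $L$.
\end{itemize}
In both cases we get an overlap, a contradiction. I expect this case analysis to be the main obstacle. The cleanest route is to phrase it via the fact that the reference points of large \Lshapes lie in $[0,\nicefrac12]^2$, and that each arm of a large \Lshape crosses the lines $x=\nicefrac12$ or $y=\nicefrac12$ respectively. Touching boundaries (interiors disjoint) will need care throughout.

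For (ii), stackedness gives a common $x$- and $y$-order $L_1,\dots,L_m$, in which the vertical arm of $L_{i+1}$ must sit right of $L_i$'s vertical arm and above $L_i$'s horizontal arm. I will argue by induction along this order that $x_i=y_i$. In a gravity packing, $L_1$ sits at the origin. For $i\geq 2$, the only obstacles blocking a leftward or downward move of $L_i$ are the arms of $L_{i-1}$ and the bin boundary. So gravity forces $x_i=x_{i-1}+w(L_{i-1})$ and $y_i=y_{i-1}+w(L_{i-1})$: if either inequality were strict, $L_i$ could slide in that direction. The reason is that, by stackedness, the blocking in the $x$-direction comes only from the extended vertical arm of $L_{i-1}$, and in the $y$-direction only from its extended horizontal arm. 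Symmetry, i.e., $w_x=w_y$, then preserves $x_i=y_i$.
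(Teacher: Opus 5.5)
Your proposal is correct and follows the paper's proof. For (i), the paper compresses your coordinate case analysis into one observation: meeting an extended arm would force two large \Lshapes side by side, i.e.\ $\ell_x(L_i)+\ell_x(L_j)\le 1$. For (ii), the paper runs the same iteration from the origin through successive reflex corners, using $w_x=w_y$.

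One caution: like the paper's own proof, which tacitly uses $\ell_x,\ell_y>\nicefrac12$, your argument genuinely needs the strict inequality. With all arm lengths exactly $\nicefrac12$, two \Lshapes can sit side by side at $(0,0)$ and $(\nicefrac12,0)$, and this gives counterexamples to both (i) and (ii). So that case must be excluded outright, not treated as a touching-boundary detail.
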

\begin{proof}
  We prove both statements individually.
  \begin{romanenumerate}
  \item Suppose there exists a valid packing where an (extended)
    $x$-arm of an \Lshape $L_i$ intersects an (possibly
    extended) $L_j$, see \cref{fig:usefulA}. Then $\lx(L_i) +
    \lx(L_j) \leq 1$; otherwise the \Lshapes do not fit in one
    bin. However, as both \Lshapes are large, we have
    $\lx(L_i), \lx(L_j) > \nicefrac{1}{2}$, a contradiction.
    Analogous arguments apply for the $y$-direction.
  \item Clearly, the reference point of the bottom-leftmost \Lshape
    lies on the origin and thus on the diagonal; otherwise some
    \Lshape can be shifted to the left or bottom. By symmetry, the
    reflex corner of the bottom-most (symmetric) \Lshape lies on the
    diagonal in a  gravity packing and all other \Lshapes lie above
    and to the right of this point. Hence, iteratively using this
    argument for the remaining square subcontainer, the claim follows. \qedhere
  \end{romanenumerate}
\end{proof}

We note the following useful property of stacked gravity packings.

\begin{lemma}\label{lem:adding}
  Given a stacked gravity packing of \Lshapes $L_1,\dots, L_k$ in a
  unit bin, an \Lshape $L$ can be placed on top of the stack if and
  only if $\sum_{i=1}^k\wx(L_i) + \lx(L) \leq 1$ and
  $\sum_{i=1}^k\wy(L_i) + \ly(L) \leq 1$.
\end{lemma}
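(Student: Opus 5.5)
The plan is to first pin down the geometry of a stacked gravity packing, and then check the two directions of the equivalence separately.

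\emph{Structure of the packing.} Order $L_1,\dots,L_k$ by their common $x$- and $y$-order; in a stacked packing these coincide. By the gravity property and stackedness, I claim inductively that the reference point of $L_j$ is at
\[
\Bigl(\sum_{i<j}\wx(L_i),\ \sum_{i<j}\wy(L_i)\Bigr).
\]
The induction runs as follows. $L_1$ is bottom-leftmost, so it must sit at the origin; otherwise it could be pushed left or down. Since the packing is stacked, each later $L_j$ lies weakly above and to the right of the reflex corner of $L_{j-1}$. The gravity property then forces $L_j$ to touch: pushing it down hits the horizontal arm of $L_{j-1}$, and pushing it left hits the vertical arm. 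I expect this step to be the main obstacle. One has to rule out that $L_j$ is blocked by some earlier $L_i$ with $i<j-1$ rather than by $L_{j-1}$. For this I would use stackedness: the extended arms of $L_{j-1}$ separate $L_j$ from all earlier shapes, and the arms of $L_{j-1}$ already lie at least as far up and right as those of any $L_i$ with $i<j-1$. Hence the first obstacle met when moving $L_j$ down is $L_{j-1}$'s horizontal arm, at height $\sum_{i<j}\wy(L_i)$, and symmetrically for the leftward move. With this, the free region on top of the stack is the axis-parallel square region with lower-left corner at $\bigl(\sum_{i=1}^k \wx(L_i),\ \sum_{i=1}^k \wy(L_i)\bigr)$.

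\emph{Sufficiency.} Place $L$ with its reference point at this corner. Its horizontal arm spans $x$-coordinates up to $\sum_i\wx(L_i)+\lx(L)\le 1$, and its vertical arm spans $y$-coordinates up to $\sum_i\wy(L_i)+\ly(L)\le 1$, so $L$ lies inside the bin. Every $L_i$ is contained in the union of the region $x\le \sum_{i}\wx(L_i)$ and the region $y\le \sum_i\wy(L_i)$, using the explicit reference points computed above. The shape $L$ lies in the complementary closed quadrant, so its interior is disjoint from every $L_i$. Hence the placement is valid, and the resulting packing is again stacked and gravity.

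\emph{Necessity.} Interpret ``on top of the stack'' as placing $L$ so that it lies in the quadrant above and to the right of the reflex corner of $L_k$, i.e., so that the packing remains stacked with $L$ last in the order. Then the reference point of $L$ has $x$-coordinate at least $\sum_i \wx(L_i)$ and $y$-coordinate at least $\sum_i\wy(L_i)$. Containment in the unit bin therefore forces both inequalities in the statement.
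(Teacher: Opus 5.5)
Your proposal is correct and follows the same route as the paper: the leftmost-bottommost admissible position on top of the stack is the corner $\bigl(\sum_{i=1}^k \wx(L_i), \sum_{i=1}^k \wy(L_i)\bigr)$, and $L$ fits there exactly when both arm-length inequalities hold. The paper takes the location of this corner as clear, whereas you justify it by an induction using gravity and stackedness. One caveat on that induction: the coincidence of the $x$- and $y$-orders you invoke needs the gravity property as well, since two small, well-separated \Lshapes near the top-left and bottom-right corners already form a stacked packing with opposite orders; it does hold under the lemma's hypotheses.
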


\begin{figure}[htb]
	\centering
	\includegraphics{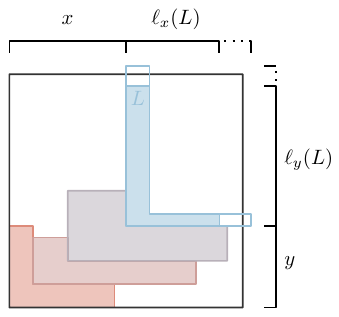}
	\caption{Illustration for \autoref{lem:adding}.  The \Lshape $L$
		can be added to the stack, if its arms and the stack are small
		enough to respect the bin capacity.  Conversely, if $L$ does not
		fit,  the bin size is exceeded in at least one dimension.}
	\label{fig:binfit}
\end{figure}

\begin{proof}
Clearly, the best placement of $L$ on the stack is leftmost and bottommost.
    In other words, we need to check whether $L$ can be placed at
    $(x, y)$ with $x \coloneqq \sum_{i = 1}^k \wx(L_i)$ and
    $y\coloneqq\sum_{i = 1}^k \wy(L_i)$, see \cref{fig:binfit} for an
    illustration.
    This is the case if and only if there is enough space to
    accommodate the arm lengths, i.e., if and only if $x + \lx(L)\leq
  1$ and $y + \ly(L)\leq 1$.
\end{proof}

\subsection{Large and symmetric \Lshapes}

We now show a constant competitive online algorithm for large and symmetric \Lshapes.

\begin{proposition*}[corresponding to \cref{thm:CCAlgSym}(\ref{item:lasy}) -- based on Theorem 1.2 of~\cite{devanurOnlineAlgorithmsMachine2014}]
    \label{lem:OnlineSymLarge}
	There is a 33-competitive online algorithm for \trans when all \Lshapes are large and symmetric.
\end{proposition*}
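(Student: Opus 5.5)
Since all \Lshapes are large and symmetric, I would identify each \Lshape $L$ with arm length $\ell$ and width $w$ with a job with release time $0$, processing time $w$ and deadline $1-\ell$. The first step makes the correspondence between bins and machines precise. By \cref{lem:stackedSymmetric}, every packing of large \Lshapes is stacked, so the \Lshapes in a bin are totally ordered. After a gravity compression, the reference points lie on the diagonal. If the \Lshapes in a bin appear in stack order $L_1,\dots,L_k$, then $L_j$ has its reference point at $(s_j,s_j)$ with $s_j=\sum_{i<j}w(L_i)$. The only constraint is $s_j+\ell(L_j)\le 1$, i.e., the job of $L_j$ completes by time $s_j+w(L_j)$. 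Since $\ell\ge w$, the outer boundary is bounded by the arm lengths.

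Thus a bin corresponds exactly to a single-machine schedule in which job $j$ starts at $s_j$ and satisfies $s_j\le 1-\ell_j$. This is a deadline on the \emph{start} time $1-\ell_j$, not the completion time. To get the standard form, I would define the deadline as $d_j=1-\ell_j+w_j$, so that completion by $d_j$ is equivalent. Note $d_j\le 1$ and $p_j=w_j\le d_j$. Conversely, any feasible single-machine schedule of such jobs can be left-shifted to be contiguous from time $0$. This gives a stacked gravity packing, by \cref{lem:adding} applied iteratively. Hence \opt of the packing instance equals the optimal machine-minimization value for jobs with common release time $0$.

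The second step is to invoke Devanur et al.~\cite{devanurOnlineAlgorithmsMachine2014}, Theorem~1.2. It gives a constant-competitive online algorithm for machine minimization with equal release times, processing jobs online in arbitrary order. The subtlety is that their setting probably allows scheduling decisions over time, while the packing must fix a position irrevocably on arrival. Moreover, a new \Lshape can be inserted anywhere in a bin, not only on top of the stack, as long as the stack order stays valid.

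I expect this step to be the main obstacle. Their algorithm presumably classifies jobs by deadline (or by the ratio $p/d$) and assigns them to machines groupwise. The machine positions are then fixed, so they translate into fixed placements, with each assigned job put at a start time in a reserved interval. The factor $33=2\cdot16+1$ suggests the following plan. First, split the jobs into ``tight'' jobs with $p_j>d_j/2$ and ``loose'' jobs. The loose jobs are handled by a 16-competitive equal-deadline style argument, losing a factor 2 from rounding deadlines to powers of two. The tight jobs mutually overlap around their midpoints, forming an interval-graph-like conflict structure; these are handled by the Kierstead--Trotter online interval coloring~\cite{kierstead1981extremal}, whose ratio $3$ is combined into the additive/multiplicative budget.

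Concretely, tight jobs must contain the time interval $[d_j-p_j,p_j]$ in their execution. Two tight jobs whose forced intervals intersect cannot share a bin. Conversely, tight jobs with pairwise disjoint forced intervals, ordered along the diagonal, fit together. So online coloring of these intervals uses at most $3\omega\le 3\opt$ colors, each color class forming one bin. I would verify the constants add up to $33\opt+2$, adjusting the split threshold if needed. Finally, I would translate the machine schedules back into \Lshape placements via the diagonal embedding.
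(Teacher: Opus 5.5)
Your first step is correct and matches the paper: stacking plus gravity turns a bin into a one-machine schedule with processing time $w_i$, release time $0$ and deadline $1-\ell_i+w_i$. The gap is everything after that. Devanur et al.\ cannot be invoked as a black box here. Their constant-competitive algorithm is for equal deadlines with jobs arriving in order of release times (after time reversal: equal release times, jobs arriving in order of deadlines). L-shapes arrive in arbitrary order, so your ``loose'' part has no actual algorithm or analysis behind it. The bookkeeping ($33=2\cdot 16+1$ while the coloring costs a factor $3$) also does not add up.

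The tight-job step is false as stated. For a tight shape ($w>1-\ell$) the forced interval is $[1-\ell,w]$, but pairwise disjoint forced intervals do not make a color class packable. Take $(\ell,w)=(0.8,0.21),(0.6,0.41),(0.55,0.5)$: the forced intervals $[0.2,0.21]$, $[0.4,0.41]$, $[0.45,0.5]$ are pairwise disjoint. The total width is $1.12>1$, which no stack can accommodate.

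The missing idea is to fix the position at arrival and color the \emph{actual} $x$-ranges. The paper places each long shape rightmost, at $\hat x_i=1-\ell_i$, and colors the intervals $(\hat x_i,\hat x_i+w_i)$ online with Kierstead--Trotter. Color classes are then genuinely packable, and the real work is bounding the clique number by $\OPT$. Group shapes by $\hat x_i+w_i\in I_k=[2^{-k},2^{-k+1})$ and call a shape long if $w_i\ge 2^{-k-2}$. Every long shape whose rightmost range meets $I_k$ covers at least $\frac19$ of $[0,2^{-k+1}+2^{-k-2})$ in \emph{any} placement. Hence $\omega\le 9\OPT_L$, and long shapes use at most $27\OPT_L$ bins. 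Short shapes are packed by FirstFit into the copy of $I_{k+1}$ in dedicated bins, which is valid since every position there meets the deadline. This costs $6\OPT_S+2$, giving $33\OPT+2$ in total.
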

\newcommand{\longL}{long}
\newcommand{\shortL}{short}

Let $L_1,\dots,L_n$ denote a sequence of large and symmetric \Lshapes where each
$L_i$ is specified by an arm length $\ell_i$ and a width $w_i$.  We show how the
bin packing problem of large symmetric \Lshapes reduces to a one-dimensional
problem which has been studied in the context of machine minimization.  By
\cref{lem:stackedSymmetric}, each packing is stacked and the $x$- and
$y$-coordinate of each reference point are equal, see also
\autoref{fig:OnlineSymLargeIntervalsA}.

Hence, it suffices to specify the $x$-coordinate. Moreover, for each $L_i$, the
$x$-coordinates occupied by the vertical arm form an interval $[x_i, x_i + w_i]
\subseteq [0, 1-\ell_i+w_i]$, which we call the \emph{$x$-range} of $L_i$.  As
every packing is stacked, the $x$-ranges are interior disjoint, consider
\autoref{fig:OnlineSymLargeIntervalsA} for an illustration.  For each L-shape
$L_i$,  the rightmost valid $x$-coordinate of the reference point is $\hat x_i
:= 1 - \ell_i \in [0, 1]$.  In the corresponding one-dimensional problem of
packing the $x$-ranges, we seek a position $x_i \in [0, \hat x_i]$ for each $i
\in [n]$ and a bin such that no two (open) $x$-ranges of a bin intersect.

This problem can be rephrased as a job scheduling problem with deadlines, known
as \emph{online machine minimization}: Given a sequence of jobs, where each job
$j$ is specified by a processing time $p_j$, a release time $r_j$ and a deadline
$d_j$, the goal is to schedule the jobs on a minimum number of identical
machines. In a valid schedule, each job is assigned to a machine and a start
time $s_j$ such that $r_j\leq s_j$, $s_j+p_j\leq d_j$ and the (open) intervals
\IncludeSubfigures {OnlineSymLargeIntervals} {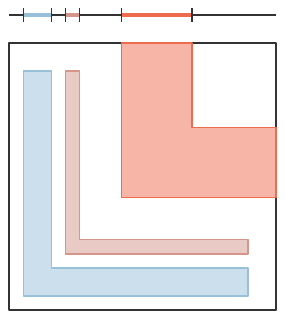} {1,2} [0.34,0.34] {
	Bin Packing of large symmetric L-shapes and the corresponding
	one-dimensional problem of packing $x$-ranges with deadlines.  (a) The
	$x$-ranges in for a bin packing of large symmetric L-shapes and (b) the
	corresponding $(\hat x + w)$-values, assigning each L-shape to an interval $I_k, k
	\in \mathbb N$.
}
$(s_j,s_j+p_j)$ assigned to the same machine are pairwise disjoint.

With our considerations above, the translation from our setting is
straight-forward: Each bin corresponds to a machine and each L-shape models a
job with processing time $w_i$, release time $0$, and deadline time $\hat x_i +
w_i$.

Devanur, Makarychev and Panigrahi~\cite{devanurOnlineAlgorithmsMachine2014}
present a constant-competitive online algorithm for the case where all deadlines
are equal and the jobs are presented in order of their release times. By
reversing the direction of time, essentially exchanging release times and
deadlines, the result holds equivalently for the case when all release times are
equal and the jobs are presented in the order of their deadlines. Hence, in our
setting, we  meet the first requirement.  However, the L-shapes may arrive in
any order, so the second requirement is not met and we cannot apply their
algorithm directly.  In following, we show how combining it with an online
coloring algorithm for intervals graphs by  Kierstead and Trotter~\cite[Theorem
5]{kierstead1981extremal} yields a constant-competitive algorithm for our
problem, where L-shapes may appear in any order.

The first step is to group the \Lshapes based on the rightmost valid point of
the $x$-ranges, i.e., $\hat x_i+w_i$. To this end, we partition $[0, 1]$ into
intervals. For $k \in \mathbb N$, we define $I_k = [\frac{1}{2^k}, \frac{1}{2^{k
- 1}})$; note that $|I_k| = 2^{-k}$.  An L-shape $L_i$ belongs to category $C_k$
if $\hat x_i + w_i \in I_k$, see also \autoref{fig:OnlineSymLargeIntervalsB}.
An L-shape $L_i$ in category $C_k$ is called \emph{\shortL} (in $C_k$) if $w_i <
|I_k|/4 = 2^{-k-2}$, and \emph{\longL} otherwise.

We now describe the online algorithm \LasylPacker for \trans when all \Lshapes
are large and symmetric: It classifies the \Lshapes according to the categories
as described, and packs  \shortL{} L-shapes and \longL{} L-shapes into different
bins.  The $x$-range of a \shortL{} \Lshape of category $C_k$ is placed into an
interval $I_{k + 1}$ of a bin for \shortL{} \Lshapes, using \textsc{FirstFit} as
a subroutine, i.e., it is placed leftmost into the first $I_{k + 1}$ it fits.
For a  \longL{} \Lshape $L_i$, we set $x_i := \hat x_i$. To determine the bin,
we maintain a conflict graph $G$.  The vertices of $G$ represent the (open)
$x$-ranges $(\hat x_i, \hat x_i + w_i)$ for each long \Lshape $L_i$; two
vertices share an edge, whenever two corresponding $x$-ranges intersect. Note
that, by definition, $G$ is an interval graph.  Kierstead and
Trotter~\cite[Theorem 5]{kierstead1981extremal}  present a 3-competitive
algorithm for online coloring of interval graphs.  We use this algorithm, called
\textsc{Colorer} in the following, to maintain a coloring of the interval graph.
As every color class is an independent set in the conflict graph, all L-shapes
that are assigned the same color can be placed into a common bin.  Let $\OPT_L$
denote the minimum number of bins required to pack the \longL{} \Lshapes, and
similarly $\OPT_S$ for the \shortL{} \Lshapes. We show that both subroutines are
constant competitive algorithms.

\begin{lemma}
    \label{lemma:OnlineSymLargeShort}
    For \shortL{} L-shapes, the algorithm \LasylPacker uses at most $6\OPT_S + 2$ bins.
\end{lemma}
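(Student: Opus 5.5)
We prove the bound by a per-category volume argument combined with a lower bound on $\OPT_S$. The key point is that the \FF subroutine is run separately for each category $C_k$, inside the intervals $I_{k+1}$ of the bins for \shortL{} \Lshapes.

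First we check that the placement is valid. A \shortL{} \Lshape $L_i$ of category $C_k$ has $\hat x_i + w_i \ge 2^{-k}$. Its $x$-range is placed inside $I_{k+1} = [2^{-k-1}, 2^{-k})$, so its right end is at most $2^{-k} \le \hat x_i + w_i$, which respects the deadline. Since $w_i < 2^{-k-2}$, every \shortL{} \Lshape of $C_k$ fits into an empty copy of $I_{k+1}$. Different categories use the disjoint intervals $I_{k+1}$ of the same bin, so they never interact. By \cref{lem:stackedSymmetric}, placing the reference point on the diagonal with interior-disjoint $x$-ranges gives a valid stacked packing.

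Next we bound the number of bins via \FF. Fix $k$ and let $W_k$ be the total width of the \shortL{} \Lshapes in $C_k$. When \FF opens a new copy of $I_{k+1}$ for an item of width $< 2^{-k-2}$, every earlier copy is filled to more than $2^{-k-1} - 2^{-k-2} = 2^{-k-2}$, i.e., to more than half its length. Hence the number of copies used for $C_k$ is at most $2W_k/2^{-k-1} + 1 = 2^{k+2}W_k + 1$. The number of \shortL{} bins equals the maximum over $k$ of the copies used, so
\[
\#\text{bins} \le \max_k \bigl(2^{k+2}W_k\bigr) + 1 .
\]

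Finally we lower-bound $\OPT_S$. Every $x$-range of an \Lshape in $C_k$ must lie in $[0, \hat x_i + w_i] \subseteq [0, 2^{-k+1})$. In an optimal packing the $x$-ranges within a bin are interior disjoint, so each bin contributes total width less than $2^{-k+1}$ from category $C_k$. This gives $\OPT_S \ge W_k 2^{k-1}$, i.e., $2^{k+2}W_k \le 8\,\OPT_S$.

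This yields $8\OPT_S + 1$, not the claimed $6\OPT_S + 2$. The main obstacle is sharpening the constant. The slack is in the lower bound: the worst case $W_k \approx \OPT_S \cdot 2^{-k+1}$ requires essentially all of $[0, 2^{-k+1})$ to be filled by $C_k$ items in every optimal bin.

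To recover the factor 6, I would try a refined accounting. The last \FF copy may also be nearly empty, which is one source of the additive $2$. More importantly, the fill bound can be improved: when a $C_k$ item fails to fit, the occupied part exceeds $2^{-k-1} - w$, and summing this over closed copies against the stated width bound should give a ratio of roughly $3/4$ of $8$.

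Alternatively, the lower bound can be strengthened with the deadline structure. Items of $C_k$ have deadline at least $2^{-k}$, so the intervals $[0,2^{-k})$ and $[2^{-k},2^{-k+1})$ must be split carefully. I would first try the \FF fill refinement, accepting an additive constant of 2.
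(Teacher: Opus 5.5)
There is a genuine gap. What you prove is $8\OPT_S+1$, which for $\OPT_S\ge 1$ is strictly weaker than the claimed $6\OPT_S+2$. The two refinements you sketch are left as heuristics: ``should give roughly $3/4$ of $8$'' is not an argument.

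Your validity check and your lower bound $\OPT_k\ge W_k 2^{k-1}$ are fine, where $\OPT_k\le\OPT_S$ is the optimum for the \shortL{} shapes of $C_k$. In fact this lower bound cannot be improved in general. All items of $C_k$ have release time $0$, so items with $\hat x_i+w_i$ just below $2^{-k+1}$ can essentially fill $[0,2^{-k+1})$ in a single bin. Hence the deadline-based strengthening you mention is a dead end. The missing factor must come from the \FF fill level, and your bound of $\frac12$ per copy is too weak.

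The missing idea is that \FF fills all but at most two copies of $I_{k+1}$ to at least $\frac23$, because every item has width $<\frac12|I_{k+1}|$. The argument has three steps.
\begin{itemize}
\item No copy except the last ends with a single item. When the next copy was opened, that copy held one item of width $<\frac12|I_{k+1}|$, so the opening item, also of width $<\frac12|I_{k+1}|$, would have fit.
\item Let $B$ be the first copy whose final level is below $\frac23|I_{k+1}|$. Every item in a later copy failed to fit into $B$, so it has width $>\frac13|I_{k+1}|$.
\item Therefore every later copy other than the last contains at least two such items, and so has level $>\frac23|I_{k+1}|$.
\end{itemize}
If $N_k$ copies are used, this gives $W_k\ge (N_k-2)\cdot\frac23\cdot 2^{-k-1}=(N_k-2)\cdot\frac16\cdot 2^{-k+1}$. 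Combined with your lower bound, $N_k\le 6\OPT_k+2$. Taking the maximum over $k$ and using $\OPT_k\le\OPT_S$ yields $6\OPT_S+2$.

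This is exactly the paper's argument: the factor $\frac16=\frac23\cdot\frac14$ comes from $|I_k^+|=4|I_{k+1}|$.
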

\begin{proof} 
    Let $k\in\mathbb N$ and consider \Lshapes of category $C_k$. Let $\OPT_k$
    denote the minimum number of bins required to place the \shortL{} L-shapes
    of category $C_k$.  By definition of $C_k$, the $x$-range of any $L\in C_k$
    is contained in the interval $I_k^{+} \coloneqq [0, 2^{-k + 1})$ in any
    valid, and, in particular, any optimal placement.  For each  $L\in C_k$, the
    largest valid $x$-coordinate is in interval $I_k$, so any placement of the
    $x$-range in $I_{k + 1}$ is valid, see also  \cref{fig:OnlineSymLargeShort}.

    \begin{figure}[htb]
        \centering
        \includegraphics[page=2]{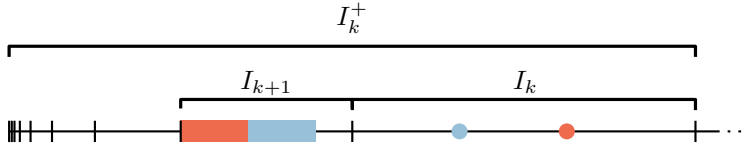}
        \caption{Illustration for \autoref{lemma:OnlineSymLargeShort}.
            {
                For two \Lshapes of category $C_k$, the rightmost valid point
                of their $x$-ranges is shown as a point.  The algorithm
                \LasylPacker places the corresponding $x$-ranges into
                interval $I_{k + 1}$.
            }
        }
        \label{fig:OnlineSymLargeShort}
    \end{figure}

    The algorithm only uses the interval $I_{k + 1}$ of each bin, and is
    therefore equivalent to bin packing where the bins are copies of $I_{k +
    1}$.  \textsc{FirstFit} fills all but at most two copies of $I_{k + 1}$ to a
    level of at least $\nicefrac23$: As long as a bin is filled less than
    $\nicefrac23$, items packed into any following bin have size at least
    $\nicefrac13$ of the bin size. As each item has size at most $\nicefrac12$
    of the bin size, at least two items fit into each bin. Hence, a new bin is
    opened only if the previous bin contains at least two items and is thus
    filled to at least $\nicefrac23$.  Because $|I^+_k |=2^{-k + 1} = 4|I_{k +
    1}|$, \textsc{FirstFit} fills all but at most two copies of $I_k^{+}$ to a
    level of at least $\nicefrac23\cdot \nicefrac14 = \nicefrac16$, i.e., it
    uses at most $6\OPT_k + 2$ bins for \shortL{} \Lshapes in category $C_k$.
        
    As the \Lshapes of different categories are placed into disjoint intervals
    and can thus be combined, \LasylPacker needs at most $\max_{k \in \mathbb N}
    6\OPT_k + 2 \leq 6 \OPT + 2$ bins in total.
\end{proof}

To analyze the \longL{} shapes, we consider, for $k \in \mathbb N$, the
intervals $I_k^{++} \coloneqq [0, \frac{1}{2^{k - 1}} + \frac{1}{4\cdot 2^k})$.
We show that in every valid placement every shape occupies a large portion of
$I_k^{++}$, or doesn't intersect $I_k$ at all.

\begin{lemma} 
    Let $k\in \mathbb N$. For each long $L_i$, whose $x$-range intersects $I_k$,
    the interval  $I_k^{++} \cap [x_i, x_i + w_i]$ has a size of at least
    $\nicefrac19|I_k^{++}|$.\label{lemma:OnlineSymLargeIkplus}
\end{lemma}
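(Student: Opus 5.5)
The plan is a direct case analysis on the category of the long \Lshape $L_i$. Say $L_i\in C_j$, so $\hat x_i+w_i\in I_j=[2^{-j},2^{-j+1})$ and, because $L_i$ is long, $w_i\geq 2^{-j-2}$. Any valid placement gives $[x_i,x_i+w_i]\subseteq[0,\hat x_i+w_i]\subseteq[0,2^{-j+1})$. The target quantity is
\[
|I_k^{++}| = 2^{-k+1}+2^{-k-2}=\tfrac94\cdot 2^{-k},
\qquad\text{so}\qquad
\tfrac19|I_k^{++}|=2^{-k-2}.
\]
So it suffices to show that $[x_i,x_i+w_i]\cap I_k^{++}$ has length at least $2^{-k-2}$.

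First I restrict which categories can occur. If the $x$-range meets $I_k=[2^{-k},2^{-k+1})$, then its right end satisfies $x_i+w_i\geq 2^{-k}$. Together with $x_i+w_i<2^{-j+1}$ this forces $2^{-k}<2^{-j+1}$, i.e., $j\leq k$. Meeting $I_k$ also gives $x_i<2^{-k+1}$, which I use in the second case.

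\textbf{Case $j=k$.} The whole $x$-range lies in $[0,2^{-k+1})\subseteq I_k^{++}$. Hence the intersection is the full range, of length $w_i\geq 2^{-k-2}$, as required.

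\textbf{Case $j<k$.} Here the width is large relative to $I_k$: $w_i\geq 2^{-j-2}\geq 2^{-k-1}$. The range starts at $x_i<2^{-k+1}$, strictly left of the right end $2^{-k+1}+2^{-k-2}$ of $I_k^{++}$. The intersection therefore has length
\[
\min\bigl(w_i,\;2^{-k+1}+2^{-k-2}-x_i\bigr)\;\geq\;\min\bigl(2^{-k-1},\,2^{-k-2}\bigr)=2^{-k-2}.
\]

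The only delicate point is bookkeeping with open versus closed endpoints, i.e., the precise meaning of ``intersects $I_k$''. Since all inequalities used have slack or only affect measure-zero boundaries, this does not change the bound. The choice of the extra margin $2^{-k-2}$ in $I_k^{++}$ is exactly what makes the second case work, so I expect no real obstacle beyond confirming $j\leq k$.
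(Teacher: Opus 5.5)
Your proof is correct and follows the paper's argument. Like the paper, you split on the category $C_j$ of $L_i$ relative to $k$: you rule out $j>k$, handle $j=k$ via containment in $I_k^{++}$ together with $w_i\geq 2^{-k-2}$, and handle $j<k$ via $w_i\geq 2^{-k-1}$. Your single bound $\min\bigl(w_i,\,2^{-k+1}+2^{-k-2}-x_i\bigr)$ simply merges the paper's two subcases for $j<k$ (the range either covers $I_k^{++}\cap I_{k-1}$ or ends inside $I_k^{++}$) into one line.
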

\begin{proof}
    By definition, $|I_k^{++}| = 2|I_k| + \nicefrac{1}{4}|I_k| =
    \nicefrac94|I_k|$.  Let $L_i\in C_j$, then $x_i+w_i\leq \hat x_i +
    w_i<2^{-j+1}$. We distinguish three cases.  If $j>k$, then the $x$-range of
    $L_i$ does not intersect $I_k$, as the rightmost valid position for the
    right endpoint of the $x$-range is in $I_j$, which is still to the left of
    $I_k$. For an illustration consider \autoref{fig:OnlineSymLargeIkplus}(a).  

    \begin{figure}[bh]
        \centering
        \includegraphics[page=1]{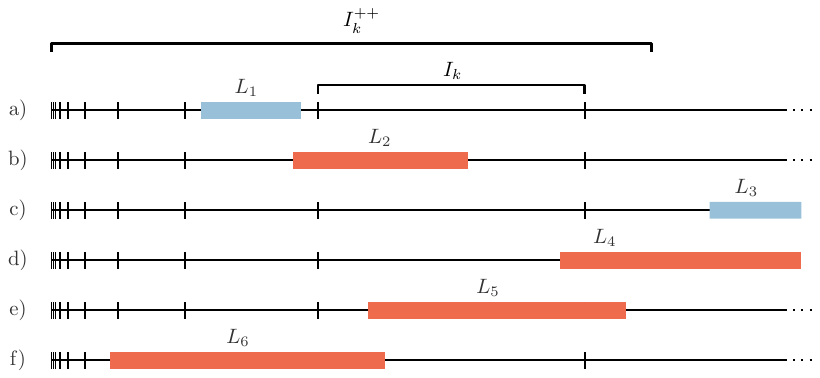}
        \caption{
            Illustration for \autoref{lemma:OnlineSymLargeIkplus}.  (a) The
            $x$-range of an \Lshape in $C_j, j > i$, doesn't intersect~$I_k$.
            (b) The $x$-range of an \Lshape in $C_k$ is contained in $I_k^{++}$.
            (c-f) In any other case, the $x$-range may end further to the right:
            (c) It may not intersect $I_k^{++}$ in which case it also doesn't
            intersect $I_k$. (d) The  $x$-range which may contain $I_k^{++} \cap
            I_{k - 1}$ or otherwise (e)/(f) the $x$-range may be contained in
            $I_k^{++}$.
        }
        \label{fig:OnlineSymLargeIkplus}
    \end{figure}
    
    If $j=k$, then $w_i\geq \frac14|I_k|$ because $L_i$ is long and the
    $x$-range is contained in $I_k^{++}$, as shown in
    \autoref{fig:OnlineSymLargeIkplus}(b).  Thus, we have $|I_k^{++} \cap [x_i,
    x_i + w_i]| = |[x_i, x_i + w_i]| = w_i \geq \nicefrac14|I_k| =
    \nicefrac19|I_k^{++}|$.

    If $j<k$, then $w_i \geq \nicefrac14|I_{k - 1}| = \nicefrac12|I_k|$.  If the
    $x$-range of $L_i$ intersects $I_k$, either the $x$-range of $L_i$ contains
    $I_k^{++} \cap I_{k - 1}$, for which $|I_k^{++} \cap I_{k - 1}| =
    \nicefrac14|I_k| = \nicefrac19|I_k^{++}|$, or its rightmost $x$-coordinate,
    $x_i+w_i$ is in $I_k^{++}$, so it is  contained in $I_k^{++}$, and
    $|I_k^{++} \cap [x_i, x_i + w_i]| = w_i = \nicefrac12|I_k| =
    \nicefrac29|I_k^{++}|$, as shown in
    \autoref{fig:OnlineSymLargeIkplus}(d)-(f).
\end{proof}

This gives us a lower bound for the optimal packing, from which we get a bound
on the number of required bins for \longL{} L-shapes.

\begin{lemma}
    For \longL{} L-shapes, the algorithm \LasylPacker uses at most
    $27\cdot\OPT_L$ bins.
    \label{lemma:OnlineSymLargeLong}
\end{lemma}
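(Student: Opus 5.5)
We prove the bound by comparing the number of colors used by \textsc{Colorer} with the clique number of the conflict graph $G$, and then relating that clique number to $\OPT_L$ via \cref{lemma:OnlineSymLargeIkplus}. Since \textsc{Colorer} is the 3-competitive online interval coloring algorithm of Kierstead and Trotter, and interval graphs are perfect, the number of colors (and hence of bins for \longL{} \Lshapes) is at most $3\omega(G)$. Here $\omega(G)$ is the maximum number of pairwise intersecting $x$-ranges $(\hat x_i, \hat x_i+w_i)$. It therefore suffices to show $\omega(G)\le 9\,\OPT_L$.

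\textbf{Locating a clique.} By the Helly property for intervals, a maximum clique of $G$ is a set $Q$ of long \Lshapes whose ranges $(\hat x_i,\hat x_i+w_i)$ all contain a common point $p$. We have $p\in(0,1)$, so $p$ lies in some $I_k$. Hence every $L_i\in Q$ satisfies $\hat x_i+w_i>p\ge 2^{-k}$.

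\textbf{Lower-bounding $\OPT_L$.} Consider an optimal packing of all \longL{} \Lshapes into $\OPT_L$ bins, with $x$-positions $x_i\le\hat x_i$. We need each $L_i\in Q$ to have its placed $x$-range meet $I_k$, or else to cover a definite portion of $I_k^{++}$ anyway. The placed range $[x_i,x_i+w_i]$ may have slid to the left of $I_k$, so we examine the proof of \cref{lemma:OnlineSymLargeIkplus} case by case.
\begin{itemize}
\item If $L_i\in C_k$, the range lies in $I_k^{++}$ and has length $w_i\ge\frac14|I_k|$.
\item If $L_i\in C_j$ with $j<k$, the range has length $w_i\ge\frac12|I_k|$. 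Either it lies inside $I_k^{++}$, or it contains the stretch between $2^{-k+1}$ and the right end of $I_k^{++}$.
\item The case $j>k$ is impossible, since then $\hat x_i+w_i<2^{-k}\le p$.
\end{itemize}
So in every valid placement, each $L_i\in Q$ intersects $I_k^{++}$ in length at least $\frac19|I_k^{++}|$. This is the statement of \cref{lemma:OnlineSymLargeIkplus}, extended from ``ranges intersecting $I_k$'' to ``\Lshapes whose deadline exceeds $2^{-k}$''. The extension should follow from the same case distinction, since that argument only uses the category and the length of the range, not its position.

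\textbf{Counting.} Within one bin the $x$-ranges are interior-disjoint because the packing is stacked (\cref{lem:stackedSymmetric}). So one bin holds at most $9$ members of $Q$ within $I_k^{++}$. Therefore $|Q|\le 9\,\OPT_L$, which gives $27\,\OPT_L$ bins in total.

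The main obstacle is this extension step: making sure the case analysis covers ranges that the optimum shifted to the left of $I_k$. If it fails, we would instead anchor the clique at $p$ and use the category $C_j$ directly.
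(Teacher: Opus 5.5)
Your outline is the paper's proof: 3-competitive \textsc{Colorer}, perfectness, a Helly point $p\in I_k$, the $\nicefrac19$-occupancy of $I_k^{++}$, and interior-disjointness within a bin. However, the one step with real content, transferring the occupancy bound to the optimal placement, is not established.

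Your second bullet claims that the placed range of a $C_j$-shape with $j<k$ either lies in $I_k^{++}$ or contains $[2^{-k+1},\,2^{-k+1}+2^{-k-2})$. That dichotomy needs the left-endpoint bound $x_i\le 2^{-k+1}$, which you never derive. Your stated justification is that the argument ``only uses the category and the length of the range, not its position,'' so it extends to all \Lshapes with deadline above $2^{-k}$. This is false. Take $k=3$, so $I_3^{++}=[0,\nicefrac{9}{32})$, and a long shape with $\ell_i=0.6$, $w_i=0.15$. It lies in $C_1$ and its deadline $0.55$ exceeds $\nicefrac18$. Yet at the valid position $x_i=\hat x_i=0.4$ its range $[0.4,0.55]$ misses $I_3^{++}$ entirely. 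Position matters, and you leave this step explicitly flagged as possibly failing.

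The missing observation is short. For $L_i\in Q$ you have $x_i\le\hat x_i<p<2^{-k+1}$, because $p$ lies both in the open range $(\hat x_i,\hat x_i+w_i)$ and in $I_k$. With this bound the case analysis closes. If $x_i+w_i$ is below the right end $r$ of $I_k^{++}$, the whole range (length $\ge\frac12|I_k|$) lies in $I_k^{++}$. Otherwise it contains $[2^{-k+1},r)$, of length $\frac14|I_k|=\frac19|I_k^{++}|$.

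The paper's route is equivalent. Since $I_k^{++}=[0,r)$ is a prefix of $[0,1]$, the overlap $|[x,x+w]\cap I_k^{++}|=\max\{0,\min\{w,r-x\}\}$ is non-increasing in $x$. So \cref{lemma:OnlineSymLargeIkplus}, applied at \LasylPacker's rightmost placement $x_i=\hat x_i$ (where the range meets $I_k$ because it contains $p$), gives the same bound at every valid position $x_i\le\hat x_i$, including the optimal one. Your worry about ranges that ``slid to the left of $I_k$'' is therefore backwards: leftward shifts can only increase the overlap. The only danger is a range lying to the right of $I_k^{++}$. That is excluded by $x_i\le\hat x_i<p$, i.e., by membership in the clique, not by any condition on the deadline alone.
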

\begin{proof}
    The number of bins used by \LasylPacker corresponds to the number of colors
    \textsc{Colorer}  uses  to color $G$. As \textsc{Colorer}  is
    $3$-competitive \cite{kierstead1981extremal}, the number of bins is at most
    $3\cdot \chi(G)$.  Because $G$ as an interval graph is perfect, we have
    $\chi(G)=\omega(G)$.  Assuming $\omega(G)\leq 9\cdot \OPT_L$, the number of
    used bins is 
    \[
        3\cdot \chi(G)
        = 3\cdot \omega(G)
        \leq 3\cdot 9\cdot \OPT_L
        \leq 27\cdot  \OPT_L.
    \]

    It remains to show $\omega(G)\leq 9\cdot \OPT_L$.  Note that a set of
    pairwise intersecting intervals has a common point. It therefore suffices to
    bound the intervals present at each point. We show a slightly stronger
    statement, namely that for each $k$, the number of $x$-ranges intersecting
    $I_k$ is at most $9\cdot\OPT_L$.  To this end, let $J_k$ denote the set of
    L-shapes whose $x$-range $(\hat x_i, \hat x_i+w_i)$ intersects $I_k$.  By
    \autoref{lemma:OnlineSymLargeIkplus}, the $x$-range of any \Lshape in $J_k$
    occupies at least $\nicefrac19$ of the interval $I_k^{++}$.  Clearly, this
    fact remains true for any packing where the \Lshapes are possibly placed
    further to the left.  As all \Lshapes are placed rightmost by \LasylPacker,
    it follows that any packing of the \Lshapes in $J_k$ needs at least
    $|J_k|/9$ bins, i.e., $\OPT_L\geq |J_k|/9$.  Consequently,  $\omega(G)\leq
    \max _k |J_k| \leq 9 \cdot \OPT_L$. 
\end{proof}

Together \cref{lemma:OnlineSymLargeShort,lemma:OnlineSymLargeLong} yield
\cref{thm:CCAlgSym}(\ref{item:lasy}): Every shape is either \shortL{} or
\longL{}.  Hence, in total, \LasylPacker uses at most  \[(6\cdot \OPT_S + 2) +
27\cdot\OPT_L \leq 33\cdot\OPT + 2\] bins.  This finishes the proof.

\subsection{Small \Lshapes}
We now turn our attention to small \Lshapes and present a constant competitive
algorithm.

\begin{proposition*}[corresponding to
  \cref{thm:CCAlgSym}(\ref{item:small})]\label{lem:OnlineSymSmall}
  There is an 8-competitive online algorithm for \trans when
  all \Lshapes are small.
\end{proposition*}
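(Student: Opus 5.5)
The plan is to charge everything against area. For a small \Lshape $L$ with parameters $\lx,\ly,\wx,\wy$, define the weight $\mu(L):=\wx\ly+\wy\lx$. The two arm rectangles of $L$ cover $L$ and each lies inside $L$, so $\tfrac12\mu(L)\le\area(L)\le\mu(L)$. Since the \Lshapes in one bin of an optimal packing have total area at most $1$, we get $\OPT\ge\tfrac12\sum_L\mu(L)$. It therefore suffices to give an online algorithm in which every bin, except a constant number ($7$), carries weight at least $\nicefrac14$. That yields at most $8\cdot\OPT+7$ bins.

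The packing primitive is the stacked gravity packing of \cref{lem:adding}, applied inside an axis-parallel box of width $X$ and height $Y$ rather than the unit bin. There, $L$ can be put on top of a stack $L_1,\dots,L_k$ if and only if $\sum\wx(L_i)+\lx(L)\le X$ and $\sum\wy(L_i)+\ly(L)\le Y$. First round the arm lengths: put $L$ in class $(a,b)$ if $\lx\in(2^{-a-1},2^{-a}]$ and $\ly\in(2^{-b-1},2^{-b}]$, where $a,b\ge1$ because $L$ is small. All shapes of class $(a,b)$ are stacked into boxes of size $2\cdot2^{-a}\times 2\cdot2^{-b}$. Every shape of the class fits into such a box as long as $\sum\wx\le 2^{-a}$ and $\sum\wy\le 2^{-b}$. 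So when a new box must be opened, the current box has $\sum\wx>2^{-a}$ or $\sum\wy>2^{-b}$. In the first case its weight is at least $\sum\wx\ly>2^{-a}2^{-b-1}$, and the second case is symmetric. Hence every closed box has weight at least a constant fraction, namely $\nicefrac18$, of its area $4\cdot2^{-a-b}$. Sides of the box are at most $1$, so a box always fits into a bin.

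The second step is to assign the boxes (rectangles whose side lengths are powers of two) to bins online without losing much area. All boxes are dyadic rectangles, so I would use a dyadic, quadtree-like allocation inside each bin. Each box is placed \textsc{FirstFit} into the first bin that has a free aligned dyadic cell of its shape. Alignment implies that a bin is rejected only when a constant fraction of it is already covered by boxes. Combining the two constants gives a bin weight of at least $\nicefrac14$, possibly after merging the two sparsest constants more carefully. Only the currently open (not yet full) box of each class and a constant number of partially filled bins escape the charging, which should account for the additive $7$.

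The main obstacle is twofold. First, the number of classes $(a,b)$ is unbounded, so the open boxes cannot each be charged an additive constant. I would try to handle this by letting open boxes of different classes share bins through the dyadic allocation, and by showing that their total area is dominated by a geometric series. Second, very elongated boxes (large $|a-b|$) do not pack densely with squarish ones. Here I would try to show that the aligned allocation still leaves bins at least half full of boxes. If this fails, I would split into elongated and non-elongated classes and pay a constant factor. Tuning these constants to reach exactly $8$ and $7$ is where most of the work will go.
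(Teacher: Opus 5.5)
Your first step is the paper's (\cref{lem:small_lshapes_density_gen} with $c=\nicefrac{1}{4}$, $C=\nicefrac{1}{2}$): dyadic classes, stacked gravity packings in boxes, and closed boxes of density $\nicefrac{1}{8}$. The argument has a gap after that, in two places.

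\textbf{The weight $\mu$ costs a factor $2$ you cannot recover.} Closing a box only guarantees that one of $\sum\wx$, $\sum\wy$ is large. So a closed box is guaranteed weight only $\nicefrac{1}{8}$ of its area, and this is essentially tight: take $\wy\to 0$, $\lx=2^{-a}$, and $\ly$ just above $2^{-b-1}$. The boxes in a bin have total area at most $1$, so no bin can be guaranteed weight $\nicefrac{1}{4}$. Combined with $\OPT\ge\frac12\sum\mu$, this accounting gives at best ratio $16$. The fix is to drop $\mu$. The quantity you actually bound, $\sum\wx\ly$, is already a lower bound on $\sum\area(L)$. Hence closed boxes carry \Lshape area at least $\nicefrac{1}{8}$ of their own area, and $\OPT\ge\sum_L\area(L)$ directly.

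\textbf{The bin allocation is missing.} Even with area, the constant $8$ needs the boxes to fill the bins up to an \emph{additive} total waste, not a constant fraction per bin; a per-bin fraction $\rho<1$ only gives $8/\rho$. You do not show that aligned quadtree \textsc{FirstFit} achieves this, and your stated reason is false. A bin holding a single $1\times 2^{-k}$ box rejects every $2^{-j}\times 1$ box, although only $2^{-k}$ of it is covered, and both box types occur in your scheme.

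The missing idea is the paper's \textsc{NiceRectanglePacker} (\cref{lem:packing_rectangles_optimal}). Boxes of equal height $h$ go by one-dimensional \textsc{FirstFit} into $1\times h$ shelves, and shelves go by \textsc{FirstFit} into bins. Both subproblems are strongly divisible, so by \cref{lem:divisible_sizes} a new shelf or bin is opened only when the total free space is smaller than the new item. Hence shelves of height $h$ waste area at most $h$, which sums to at most $2$ over $h=1,\nicefrac{1}{2},\dots$. The bins waste less than $1$ outside their shelves. So if $k$ bins are used, they contain boxes of total area at least $k-3$.

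Your worry about open boxes is settled by this \emph{global} area count, not by per-bin charging. Per-bin charging cannot work, since open boxes may lie in unboundedly many bins. There is at most one open box per class, and their total area is at most $\sum_{i,j\ge 0}2^{-i-j}=4$ (\cref{lem:rectangleAreaTotal}). Thus $k-3\le 8\sum_L\area(L)+4\le 8\OPT+4$, i.e., $k\le 8\OPT+7$. No separate treatment of elongated boxes is needed.
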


The idea of the algorithm is as follows.  Firstly, it categorizes the \Lshapes
into classes, depending on its two arm lengths. Then, for each class, we show
how to pack the items into rectangles of constant density. In a last step, we
show that we can pack the resulting rectangles almost optimally.

We start by showing how to pack \Lshapes in rectangles to guarantee some
density.  To this end, we employ first fit algorithms. For the classical bin
packing problem in 1D, which we denote by \oneDBin, the online algorithm \FF
keeps a list of open bins, which is initially empty.  When a new item arrives,
\FF identifies the first bin into which the item can be packed and places it
leftmost into the bin. If no bin can accommodate the item, a new bin is opened
and the item is placed leftmost inside it. In the following, we use \FF or
slight variations as subroutines.

\begin{lemma}
    \label{lem:small_lshapes_density_gen}
    Let $A_x,A_y$ and $c,C$ be reals with $0\leq c\leq C\leq 1$, and
    $L_1,L_2,\dots$ be a sequence of \Lshapes where for $z\in\{x,y\}$ we have  $c
    \cdot A_z< \ell_z(L_i) \leq C\cdot A_z$.  Then, there exists an online
    algorithm that packs the sequence of \Lshapes in rectangles of size
    $(A_x\times A_y)$ such that the \Lshapes in every but the last rectangle
    have a total area exceeding $c(1-C)\cdot A_xA_y$.
\end{lemma}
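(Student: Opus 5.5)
I would use a \textsc{NextFit}-type staircase packing, following the stacked packings of \cref{lem:adding}. The algorithm keeps one open rectangle $R$ of size $A_x\times A_y$. It also stores the sums $X:=\sum_{L\in R}\wx(L)$ and $Y:=\sum_{L\in R}\wy(L)$ over the \Lshapes already in $R$; both are $0$ when $R$ is opened. When $L_i$ arrives, we check whether $X+\lx(L_i)\le A_x$ and $Y+\ly(L_i)\le A_y$. If so, $L_i$ is placed with its reference point at $(X,Y)$ relative to the lower left corner of $R$, and $X$ and $Y$ are updated. Otherwise $R$ is closed, a new rectangle is opened, and $L_i$ is placed at its lower left corner. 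This fits because $\ell_z(L_i)\le C A_z\le A_z$.

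\textbf{Validity.} The new \Lshape lies in the region $[X,X+\lx]\times[Y,Y+\ly]$, which is contained in $R$ by the test. Every earlier \Lshape $L_j$ in $R$ has reference point $(X_j,Y_j)$ with $X_j+\wx(L_j)\le X$ and $Y_j+\wy(L_j)\le Y$.
\begin{itemize}
\item Its vertical arm lies in the strip $x\le X_j+\wx(L_j)\le X$.
\item Its horizontal arm lies in the strip $y\le Y_j+\wy(L_j)\le Y$.
\end{itemize}
The new \Lshape lies in the quadrant $x\ge X$, $y\ge Y$, so the interiors are disjoint. This is exactly the stacked picture of \cref{fig:binfit}, scaled to $A_x\times A_y$.

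\textbf{Density.} Consider a rectangle $R$ that is not the last one. It was closed because some $L$ was rejected, so $X+\lx(L)>A_x$ or $Y+\ly(L)>A_y$. Suppose the first case holds (the second is symmetric). Since $\lx(L)\le C A_x$, we get $X>(1-C)A_x$.

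Each $L_j\in R$ contains its vertical arm, a $\wx(L_j)\times\ly(L_j)$ rectangle. Hence
\[
\area(L_j)\ge \wx(L_j)\,\ly(L_j)\ge c\,A_y\,\wx(L_j),
\]
where the second inequality is an equality when $\wx(L_j)=0$ and is strict otherwise, because $\ly(L_j)>cA_y$. Summing over $R$ gives
\[
\sum_{L_j\in R}\area(L_j)\ge c\,A_y X> c(1-C)A_xA_y .
\]
The last inequality is strict as long as $c>0$. In the symmetric case we use the horizontal arm instead, giving $\area(L_j)\ge \wy(L_j)\ly$-type bound $\ell_x(L_j)\wy(L_j)\ge c\,A_x\,\wy(L_j)$, and conclude the same way from $Y>(1-C)A_y$.

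\textbf{Expected obstacle.} I do not expect a serious one; the only care needed is the strictness of the bound. If $c=0$, the claimed strict inequality degenerates to ``exceeding $0$'', which needs a closed rectangle to be nonempty. A closed rectangle always contains at least one \Lshape, since every item fits into an empty rectangle. So it has positive area, unless arm widths of $0$ are allowed, which I would treat as a degenerate case and exclude.
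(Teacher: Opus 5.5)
Your proof is correct and follows the paper's argument. You build a stacked staircase packing in each $A_x\times A_y$ rectangle, use the fit criterion of \cref{lem:adding} to get $\sum_{L\in R} w_z(L) > (1-C)A_z$ for some $z$ whenever a rectangle is abandoned, and bound $\area(L)\geq \ell_{\overline z}(L)\,w_z(L)\geq cA_{\overline z}\,w_z(L)$ via the perpendicular arm. The only difference is that you keep a single open rectangle (\textsc{NextFit}), whereas the paper's \FFL scans all rectangles \textsc{FirstFit}-style and argues via the \Lshape that opened the last one; your variant is in fact how \smallLPacker later uses the lemma, with one active rectangle per category.
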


\begin{proof}
    The online algorithm \FFL keeps a list of rectangles each containing a
    stacked gravity packing of \Lshapes. In particular, for each new \Lshape
    $L$, it identifies the first rectangle $R$ such that the reference point of
    $L$ can be packed in the reflex corner of the current topmost \Lshape of
    $R$.  Let $L'$ denote the \Lshape for which the last rectangle is opened.
    Then $L'$ does not fit into any previous rectangle $R$.  By
    \cref{lem:adding}, there exists $z\in\{x,y\}$ with $\ell_z(L')+\sum_{L \in
    R} w_z(L) > A_z$, see also \cref{fig:small}. In particular, this implies
    that $\sum_{L \in R} w_z(L) >  A_z-\ell_z(L')\geq (1-C)A_z$.  Let $\overline
    z$ be such that $\{z,\overline z\}=\{x,y\}$.  Then, for the total area of
    all \Lshapes in bin $R$ we have:
    \begin{align*}
        \sum_{L\in R}\area(L)
        &\geq \sum_{L \in R} l_{\overline z}(L) w_z(L)
        > c\cdot A_{\overline z} \sum_{L\in R} w_z(L)
        > c \cdot  (1-C) \cdot A_{\overline z}A_z
    \end{align*}
    This proves the claim.
\end{proof}
\IncludeSubfigures {small} {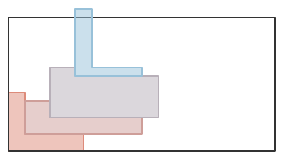} {1,2} [0.34,0.34] {
  Illustration for the proof of \Cref{lem:small_lshapes_density_gen}.
    When (a)  a new \Lshape does not fit into the rectangule, (b) the  already present \Lshapes occupy  a
    constant fraction of the rectangle's area.  
}

In order to make use of \cref{lem:small_lshapes_density_gen}, we partition the
\Lshapes as follows.  An \Lshape~$L$ belongs to category $C(a,b)$ if $2^{-a-2} <
\lx(L) \leq 2^{-a-1}$ and $2^{-b-2} < \ly(L) \leq 2^{-b-1}$. Because all
\Lshapes are small, we have $a,b\geq 0$.  With $c=\nicefrac{1}{4}$,
$C=\nicefrac{1}{2}$, $A_x=2^{-a}$, and $A_y=2^{-b}$,
\cref{lem:small_lshapes_density_gen} yields the following.
\begin{lemma}
    \label{lem:small_lshapes_density}
    The  algorithm \FFL packs  a sequence of \Lshapes from 
    $C(a,b)$  into
    $(2^{-a}\times 2^{-b})$-rectangles such that all but one rectangle have a
    packing density of~$\nicefrac{1}{8}$.
\end{lemma}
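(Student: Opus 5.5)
This is a direct instantiation of \cref{lem:small_lshapes_density_gen}. Fix a category $C(a,b)$ and set $A_x:=2^{-a}$, $A_y:=2^{-b}$, $c:=\nicefrac14$ and $C:=\nicefrac12$.

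First I check the hypotheses. For $L\in C(a,b)$ we have
\[
2^{-a-2}=\tfrac14\cdot 2^{-a}=c\cdot A_x<\lx(L)\leq 2^{-a-1}=\tfrac12\cdot 2^{-a}=C\cdot A_x,
\]
and in the same way $c\cdot A_y<\ly(L)\leq C\cdot A_y$. The constants satisfy $0\leq c\leq C\leq 1$. Since every small \Lshape has arm lengths in $(0,\nicefrac12]$, it lies in $C(a,b)$ for a unique pair of integers $a,b\geq 0$. So the rectangles $2^{-a}\times 2^{-b}$ have side lengths at most $1$ and make sense as sub-containers.

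Applying \cref{lem:small_lshapes_density_gen}, \FFL packs the sequence into $(2^{-a}\times 2^{-b})$-rectangles. In every rectangle except the last, the \Lshapes have total area exceeding
\[
c(1-C)\,A_xA_y=\tfrac14\cdot\tfrac12\cdot 2^{-a}2^{-b}=\tfrac18\,\area(R).
\]
That is exactly a packing density of at least $\nicefrac18$.

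There is no real obstacle in this argument. The only points needing care are that the strict and non-strict inequalities match the lemma's hypotheses, which they do, and that "all but one rectangle" refers to the last rectangle opened.
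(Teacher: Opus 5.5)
Your proposal is correct and follows the paper's own argument exactly: the paper also obtains this lemma by instantiating \cref{lem:small_lshapes_density_gen} with $c=\nicefrac14$, $C=\nicefrac12$, $A_x=2^{-a}$ and $A_y=2^{-b}$. This gives density $c(1-C)=\nicefrac18$ in every rectangle except the last one opened.
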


Next, we pack the resulting rectangles efficiently.  Let $\mathcal R$
denote the set of all $(a\times b)$-rectangles where $a,b\in
\{2^{-i}\mid i\in \mathbb N\}$.
We make use of the fact that each sidelength has the form $2^{-i}$
for some $i$. In particular, in 1D, \textsc{FirstFit} processes input
of this form optimally.  More generally, we call an instance of
\oneDBin \emph{strongly divisible}, if any item of size $s_i$ is a
divisor of any larger item of size $s_j$ and the bins capacity, i.e.,
there exist  $k_1,k_2\in \mathbb N$ such that $s_ik_1=s_j$ and $s_ik_2=1$.
Coffman, Gary, and Johnson showed that \FF is optimal for
divisible instances.
\begin{lemma}[\cite{coffmanBinPackingDivisible1987}, Theorem 3]
  \label{lem:divisible_sizes}
  For  strongly divisible instances, \textsc{FirstFit}  computes
  optimal solutions for \oneDBin. In particular, if a new bin for an
  item with size~$s$ is opened, then the total free space of all bins
  is smaller than $s$.
\end{lemma}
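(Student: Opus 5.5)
Although this lemma is cited from Coffman, Garey, and Johnson, I would prove it directly from one structural invariant that \FF maintains on strongly divisible inputs. Write $s_1 \ge s_2 \ge \dots$ for the distinct item sizes. By strong divisibility, each $s_i$ divides every larger size and the bin capacity $1$. The invariant I aim for is: at any time, every open bin's fill level is an integer multiple of the smallest item size $s_{\min}$ seen so far. In addition, for any size $s$ that has appeared, the fill level of every bin except possibly the last opened one is congruent to $0$ modulo $s$, or that bin is ``closed'' for items of size $\ge s$.

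I do not think this invariant is quite the right one, so my actual plan uses the direct claim. Call a bin \emph{$s$-full} if its free space is less than $s$. I will show the following: whenever an item of size $s$ arrives and \FF opens a new bin, the total free space of all existing bins is smaller than $s$. First, since every item and the capacity are multiples of the smallest size present, every fill level and every free space is a multiple of that smallest size. The key observation is this. Suppose a bin has free space $f<s$ with respect to an item of size $s$. The bin's contents consist of items of sizes $s'$. Those with $s' \ge s$ contribute multiples of $s$. Those with $s'<s$ are divisors of $s$.

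The main step is an exchange/ordering argument. Under \FF, an item smaller than $s$ in bin $B$ was placed there at a time when all earlier bins had free space less than that smaller item. I would prove by induction over the input sequence the following statement $(\ast)$: for every size $s$ and every bin $B$ that is not the last bin containing items of size $<s$, the free space of $B$ is $0$ modulo $s$ after rounding. In other words, the ``small'' items in the bins before the current one sum to multiples of $s$ except in at most one bin. Granting $(\ast)$, the conclusion follows. When a new bin is opened for an item of size $s$, every existing bin has free space $<s$. All bins except at most one then have free space that is a multiple of $s$ and less than $s$, hence equal to $0$. The remaining bin has free space $<s$. So the total free space is $<s$.

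Optimality then follows. Let \FF use $m$ bins, and consider the moment the $m$-th bin was opened by an item of size $s$. At that moment the first $m-1$ bins are filled to total more than $m-1-s$. Together with that item, the total volume exceeds $m-1$. Total volume is a lower bound on \OPT, so $\OPT>m-1$, giving $\OPT\ge m$. The main obstacle is establishing $(\ast)$, which requires carefully tracking how \FF interleaves small and large items. My first attempt there would be induction on the number of distinct sizes: merge items of the smallest size into blocks of the next size, and argue that \FF's behaviour is preserved.
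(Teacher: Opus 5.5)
The paper gives no proof of this lemma; it only cites Coffman, Garey, and Johnson. Your proposal therefore has to stand on its own, and it does not. The step that carries all the weight is false: $(\ast)$, together with its consequence that when \FF{} opens a new bin for an item of size $s$, all existing bins except at most one are completely full.

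Take sizes $\nicefrac12,\nicefrac14,\nicefrac18$ (strongly divisible) and the sequence $\nicefrac12,\nicefrac14,\nicefrac12,\nicefrac18,\nicefrac14,\nicefrac12$. \FF{} does the following:
\begin{itemize}
\item it puts the first two items into bin~1, leaving free space $\nicefrac14$;
\item it opens bin~2 for the third item;
\item it puts $\nicefrac18$ into bin~1, leaving free space $\nicefrac18$;
\item it puts $\nicefrac14$ into bin~2, leaving free space $\nicefrac14$;
\item it opens bin~3 for the last item.
\end{itemize}
At that moment two bins have nonzero free space. The total $\nicefrac38$ is indeed below $\nicefrac12$, but not for your reason.

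Taken literally, $(\ast)$ also fails for $s=\nicefrac12$. Bin~1 is not the last bin holding items smaller than $\nicefrac12$, yet its free space $\nicefrac18$ is not a multiple of $\nicefrac12$. So no amount of careful tracking of the interleaving will establish $(\ast)$. Your fallback of merging the smallest items into blocks of the next size does not obviously preserve \FF's online behaviour either, since those items arrive interleaved with larger ones. Your final volume argument (total size $>m-1$, hence $\OPT\ge m$) is fine once the free-space claim is available.

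The missing idea is that leftover space may be spread over several bins, but with at most one bin per \emph{scale}.
\begin{itemize}
\item \textbf{Setup.} Let $s_1>\dots>s_k$ be all sizes of the instance. All items and the capacity are multiples of $s_k$, so each free space has a unique expansion $f_j=\sum_i d_{j,i}s_i$ with $0\le d_{j,i}<s_{i-1}/s_i$ for $i\ge 2$.
\item \textbf{Invariant.} For every $i\ge 2$, at most one bin has $d_{j,i}\neq 0$.
\item \textbf{New bins.} A new bin has free space $1$, a multiple of $s_1$, so its digits at levels $\ge 2$ vanish.
\item \textbf{Placing an item.} Placing an item of size $s_m$ into bin $j$ either decrements $d_{j,m}$, or borrows from the largest level $i<m$ with $d_{j,i}\neq 0$. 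Borrowing makes the digits of bin $j$ at levels $i+1,\dots,m$ nonzero.
\item \textbf{No other bin is affected.} Bins before $j$ have $f<s_m$, which forces all their digits at levels $\le m$ to be $0$. A bin $j'>j$ with a nonzero digit at some level $\ell\in\{i+1,\dots,m\}$ must contain an item of size at most $s_\ell$; otherwise its free space would be a multiple of $s_{\ell-1}$. \FF{} put that item into $j'$ only because bin $j$ then had free space below $s_\ell<s_i\le f_j$. This contradicts the fact that free space never increases.
\item \textbf{Conclusion.} When a new bin is opened for $s_m$, every existing bin has $f<s_m$, so only digits at levels $>m$ survive, each level in at most one bin. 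The total free space is therefore at most $\sum_{\ell>m}(s_{\ell-1}-s_\ell)=s_m-s_k<s_m$.
\end{itemize}
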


We employ this fact for packing the rectangles in $\mathcal{R}$.

\begin{lemma}\label{lem:packing_rectangles_optimal}
  For packing rectangles from $\mathcal{R}$ into unit bins,
  there exists an online algorithm which uses at most $\opt +
  3$ bins. In particular, the total free space in all bins  has an area
  of most $3$.
\end{lemma}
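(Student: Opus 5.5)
We plan to reduce the packing of rectangles from $\mathcal R$ to strongly divisible one-dimensional bin packing in two levels, and to use \cref{lem:divisible_sizes} at each level. Every rectangle in $\mathcal R$ has dimensions $2^{-i}\times 2^{-j}$ with $i,j\ge 0$. The algorithm treats rectangles with $i\ge j$ (\emph{wide-or-square}) and rectangles with $i<j$ (\emph{tall}) separately, putting them into disjoint sets of bins. This split costs an additive constant, which we have to control.

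\textbf{Wide-or-square rectangles.} Here the height $2^{-i}$ is at most the width $2^{-j}$. We cut each unit bin into horizontal shelves of height $2^{-i}$, and a $2^{-i}\times 2^{-j}$ rectangle only goes into shelves of height exactly $2^{-i}$. The shelves themselves are created online by \FF on the strongly divisible 1D instance whose items are shelf heights. Inside one shelf of height $2^{-i}$ we run \FF on the widths $2^{-j}\le 2^{-i}\cdot 2^{i-j}$. These widths are powers of two and divide $1$, so this instance is also strongly divisible.

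The problem is that such a scheme wastes space in every partially filled shelf, and there could be many shelf heights. We therefore intend a more uniform recursive argument, which handles both classes at once.

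\textbf{Quadtree.} View each unit bin as a quadtree of dyadic squares. A $2^{-i}\times 2^{-j}$ rectangle with $i\ge j$ lives inside a dyadic square of side $2^{-j}$; it is a $2^{j-i}$-fraction strip of that square. Packing strips of the same orientation into dyadic squares is itself strongly divisible 1D packing. So the algorithm works as follows:
\begin{itemize}
\item first, it runs \FF on the sequence of dyadic squares of side $2^{-\min(i,j)}$, requested by the rectangles;
\item second, within a square of a given side and orientation class, it runs \FF on the long-side-normalized widths.
\end{itemize}
The first level is the classical divisible packing of squares in the quadtree: a new unit bin is opened for a square of side $s$ only if no free aligned square of side $s$ exists, and then the free area is less than $s^2\le 1$, in the spirit of \cref{lem:divisible_sizes}. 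At the second level, each pair (side length, orientation) has at most one partially filled square. By \cref{lem:divisible_sizes} applied within it, that square has free area less than its area.

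\textbf{Bounding the wasted area.} We must bound the sum of the areas of partially filled squares, over all side lengths $2^{-k}$ and both orientations, by roughly $2\sum_k 4^{-k}\le 8/3$. Adding the first-level slack of less than $1$ gives free area at most $3$ in total, up to being careful with constants. Finally, $\opt\ge$ the total area $=$ (used bins) $-$ (free area), so the algorithm uses at most $\opt+3$ bins.

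\textbf{Main obstacle.} The hardest step is the interaction between the levels. The first-level divisibility argument needs squares that are either fully allocated or free. Squares that are allocated but only partially filled must be charged separately. Keeping exactly one open square per class and summing the geometric series is what I would try first. If the constant comes out slightly above $3$, I would tighten it by merging the two orientation classes for the square sizes.
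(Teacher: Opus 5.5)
\textbf{There is a genuine gap: your final accounting does not give $3$, and a sharper count cannot fix it.} You add a first-level slack $<1$ and a second-level waste $<2\sum_{k\ge 0}4^{-k}=\frac83$. These sum to $\frac{11}{3}$, not to at most $3$.

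For your quadtree algorithm this bound is essentially reached (rectangles written as width$\times$height):
\begin{itemize}
\item Present a $1\times\varepsilon$ and an $\varepsilon\times 1$ strip; they fill bins~1 and~2.
\item Present $\frac12\times\varepsilon$, $\varepsilon\times\frac12$ and $\frac14\times\varepsilon$ strips; they take two quadrants of bin~3 and a quarter of a third quadrant.
\item Present two $\frac12\times\frac14$ strips; the second needs a new square, which takes the fourth quadrant.
\item Present two $\frac14\times\frac12$ strips; the second finds no free aligned square of side $\frac12$ and opens bin~4.
\end{itemize}
The free area is then about $1+1+\frac58+\frac78=\frac72$. So the free-area claim of the lemma is false for your algorithm.

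Merging the orientation classes cannot repair this. Inside a square of side $2^{-k}$, every full-width horizontal strip crosses every full-height vertical strip. The bound of $\opt+3$ bins would survive by integrality, since the number of bins minus $\opt$ is an integer smaller than $\frac{11}{3}$, but you do not argue this.

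Two intermediate steps are also unjustified as written.
\begin{itemize}
\item ``At most one partially filled square per class'' is false for \FF: sizes $\frac12,\frac14,\frac12$ leave two partially filled bins. What \cref{lem:divisible_sizes} actually gives is that all squares of one class together have free area less than one square.
\item The first-level claim is not \cref{lem:divisible_sizes}. That lemma concerns loads, not free \emph{aligned} squares. Your claim needs a hierarchical placement order such as Z-order plus its own inductive invariant, and it fails for, e.g., row-major order.
\end{itemize}

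\textbf{The missing idea is that the shelf scheme you discarded already works, even without the wide/tall split; it is exactly the paper's proof.}
\begin{itemize}
\item Put every rectangle of height $h$, whatever its width $2^{-j}\le 1$, into shelves of height $h$ and width $1$ using \FF. The widths divide $1$. Apply the second part of \cref{lem:divisible_sizes} to all shelves of height $h$ together: when a new shelf is opened for width $w$, the old ones have free length $<w$ and the new one has $1-w$. So their total free length stays below $1$, and their wasted area is below $h$.
\item Summing over $h=2^{-i}$, $i\ge 0$, the shelves waste less than $2$, however many shelves are partially filled.
\item Packing the shelves into bins by \FF on their heights, which also divide $1$, leaves free area less than $1$ by the same argument.
\end{itemize}
The total free area is therefore below $3$. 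Since the total area of the rectangles is at most $\opt$, at most $\opt+3$ bins are used.
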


\begin{proof}
  The algorithm \textsc{NiceRectanglePacker} uses a two step
  approach. Firstly, it categorizes the rectangles of
  $\mathcal{R}$ by their height and packs rectangles of the same
  height $h$ into rectangular containers, which we call strips,
  of height $h$ and width $1$. Secondly, it packs the strips
  into unit bins. For an illustration, see \cref{fig:niceRectanglePacker}.

  \begin{figure}[htb]
    \centering
    \includegraphics[page=3,width=\textwidth]{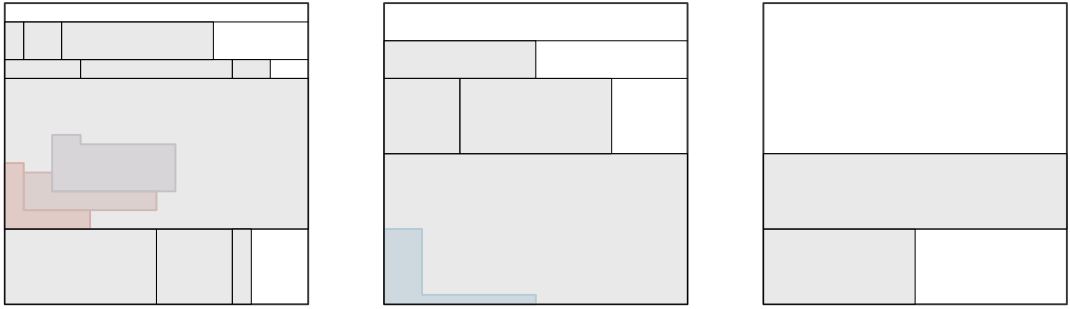}
    \caption{Illustration for {\textsc{NiceRectanglePacker} used in} the proofs of
      \Cref{lem:packing_rectangles_optimal} and
    \cref{thm:CCAlgSym}(\ref{item:small}).}
    \label{fig:niceRectanglePacker}
  \end{figure}

  Note that both problems reduce to \oneDBin, because on the one
    hand, the widths of the strips and bins are equal, and on the other
  hand, the height of the rectangles and strips are equal.
  For both tasks, we use \FF as a subroutine. In other
    words, a rectangle is placed in the leftmost position of one of the
    already open strips of the same height. When a new strip is needed,
    it is placed bottommost in one of the already open bins, if
  possible. A new bin is opened for the new strip otherwise. In each
  dimension, the size of the packed objects is  $2^i$ for some
  $i$ and
    \Cref{lem:divisible_sizes} is applicable to analyze both parts of
  the algorithm.

  We first consider the problem of
    packing strips of width $1$ into bins.
    We argue that the total free area in all bins is less than $1$.
    When a new bin is opened, then by \Cref{lem:divisible_sizes}, the
    total free area  in all old bins is $<h\cdot 1$. Moreover, the
    free area in the new bin is $(1-h)\cdot 1$. Consequently, the total
  free area is at most $1$.

  We now turn to the problem of
  packing rectangles of height $h$ into strips of height $h$. As
  for the bins, we may use \Cref{lem:divisible_sizes} to conclude that
  together all strips of height $h$ have free area of at most $1
  \cdot h$. Summing over all heights of the form $2^i$ for some
  $i\geq 0$, the free area in all strips is  upper bounded by
  $\sum_{i=0}^\infty \nicefrac{1}{2 ^i} =2$.

  So in total, the free space has area at most $1+2=3$.
  Clearly, the total area of the rectangles is a lower bound on
  $\OPT$. Hence, the claim follows.
\end{proof}

As our last ingredient, we need the following fact.
\begin{lemma}\label{lem:rectangleAreaTotal}
  \label{th:area_of_R}
  The total area of all rectangles in $\mathcal R$ is $4$.
\end{lemma}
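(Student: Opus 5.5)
The statement is read here as the sum of areas of one copy of each rectangle type in $\mathcal R$, i.e.\ one $(a\times b)$-rectangle for every pair $(a,b)$ with $a,b\in\{2^{-i}\mid i\in\mathbb N\}$. This is how it will be used later: every class $C(a,b)$ leaves at most one rectangle that is not dense. The plan is to compute this quantity directly as a double geometric series.

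Take $\mathbb N$ to include $0$, consistent with the proof of \cref{lem:packing_rectangles_optimal}, where heights $2^{-i}$ for $i\ge 0$ were summed to $2$. Then the total area is
\[
\sum_{i\ge 0}\sum_{j\ge 0} 2^{-i}\cdot 2^{-j}
=\Bigl(\sum_{i\ge 0}2^{-i}\Bigr)\Bigl(\sum_{j\ge 0}2^{-j}\Bigr)=2\cdot 2=4 .
\]
The factorization is justified because all terms are nonnegative, so the double series may be rearranged freely (Tonelli).

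The only real obstacle is the interpretation, namely pinning down the index range so that it matches the use in \cref{thm:CCAlgSym}(\ref{item:small}). The side lengths that actually occur there are $2^{-a}$ and $2^{-b}$ with $a,b\ge 0$. If $\mathbb N$ started at $1$, the same computation would give $1$ instead of $4$. That would still be an upper bound, but it contradicts the stated value, so I will fix $\mathbb N\ni 0$ explicitly.

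I will also note that the value $4$ is an upper bound on the total area of the final, possibly sparse, rectangles of \cref{lem:small_lshapes_density}, one per category $C(a,b)$. This is what produces the additive constant when the lemma is combined with \cref{lem:packing_rectangles_optimal}.
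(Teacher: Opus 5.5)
Your computation is correct, but the paper takes a different route. The paper argues geometrically: it places the $(a\times b)$-rectangle with its lower-left corner at $(a,b)$, so that it occupies $[a,2a]\times[b,2b]$. The dyadic intervals $[2^{-i},2^{-i+1}]$, $i\ge 0$, are pairwise interior-disjoint and cover $(0,2]$, so two of these rectangles can overlap only if they have the same $a$ and the same $b$, and together they tile the $2\times 2$ square, which has area $4$.

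Your double geometric series reaches the exact value more directly. Your choice $\mathbb N\ni 0$ matches both the $2\times 2$ square and the sum $\sum_{i\ge 0}2^{-i}=2$ in \cref{lem:packing_rectangles_optimal}. The paper's text only verifies interior-disjointness, which by itself gives the bound $\le 4$; that bound is all the proof of \cref{thm:CCAlgSym}(\ref{item:small}) actually uses. Equality there rests on the (easy) covering shown in the figure, whereas your series gives equality outright.

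What the geometric argument buys is an explicit placement of one copy of every rectangle class inside a box whose size is controlled by the largest side lengths. The paper reuses exactly this placement, with exponents in $\mathbb Z$, in the proof of \cref{thm:conseq}(\ref{item:ii}). There it parks the active rectangles so that their bounding box has perimeter at most $4(A+B)$, which a pure area computation does not provide.
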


\begin{proof}
  \cref{fig:perimeterA} depicts a tiling of the $2\times2$-square
  with the rectangles of $\mathcal R$ for $k=0$. In particular, the
  lower left point of an $(a\times b)$-rectangle is placed at
  $(a,b)$. It is easy to check that no two rectangles intersect
  interiorly. In particular, if the $x$-intervals  ($y$-intervals) of
  two rectangles intersect interiorly, then their start points must
  coincide. Because two rectangles intersect only if their $x$- and
  $y$-intervals intersect, this directly implies that no two distinct
  rectangles intersect.
\end{proof}

We are now ready to prove \cref{thm:CCAlgSym}(\ref{item:small}).

\begin{proof}[Proof of \cref{thm:CCAlgSym}(\ref{item:small})]
  The algorithm \smallLPacker classifies the \Lshapes into the
  categories $C(a,b)$ for $a,b\geq 0$.
Throughout the algorithm, we maintain an active rectangle for each
  category $C(a,b)$ of size $2^{-a}\times 2^{-b}$. The \smallLPacker
  proceeds as follows: If the appearing \Lshape can be packed
    into its active rectangle,  \FFL packs it there (as in
    \cref{lem:small_lshapes_density}). Otherwise, we close this
    rectangle, open a new rectangle,  use  \textsc{NiceRectanglePacker}
    to pack it into some bin, and declare it as active.
    \cref{fig:niceRectanglePacker} illustrates an example of a
  resulting packing where the \Lshapes of one category are depicted.

  Let $k$ denote the number of used bins and let $R,
    R_\text{cl},R_\text{ac}$ denote the set of all, closed, and active
  rectangles, respectively.
  By \cref{lem:packing_rectangles_optimal}, the total area of the
  rectangles is at least $k-3$.
  By \cref{lem:small_lshapes_density}, the density of each closed
  rectangle is $\nicefrac 18$.   The total area of active rectangles is $4$ by
  \cref{lem:rectangleAreaTotal}.
  Therefore, we have
  \[k-3\leq  \area(R)
    =\area(R_\text{cl})+\area(R_\text{ac})
    \leq 8\cdot \area(L)+4
  \leq 8 \cdot \opt +4.\]
  Consequently, \smallLPacker uses at most $8\cdot \opt +7$ bins.
\end{proof}

\subsection{Symmetric \Lshapes}
Together, \cref{thm:CCAlgSym}(\ref{item:lasy}) and (\ref{item:small})
imply  \cref{thm:CCAlgSym}(\ref{item:sym}).

\begin{proposition*} [corresponding to \cref{thm:CCAlgSym}(\ref{item:sym})]
  There is an online algorithm for \trans for symmetric L-shapes with
  asymptotic competitive ratio of 41.
  \label{thm:OnlineSym}
\end{proposition*}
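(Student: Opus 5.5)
The plan is to split the input online by a simple classification and run the two algorithms from parts (\ref{item:lasy}) and (\ref{item:small}) on the two classes in separate sets of bins. A symmetric \Lshape has $\ell_x=\ell_y=\ell$. So it is either large ($\ell\geq\nicefrac12$) or small ($\ell<\nicefrac12$; in particular $\ell\leq\nicefrac12$). The membership of each arriving item is therefore known at arrival, and the classification is trivially online.

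The algorithm feeds each large symmetric \Lshape to \LasylPacker and each small one to \smallLPacker. The two algorithms never share bins. Let $S_{\mathrm{large}}$ and $S_{\mathrm{small}}$ be the two subsequences, with offline optima $\OPT_{\mathrm{large}}$ and $\OPT_{\mathrm{small}}$. Both are at most $\OPT$, because restricting an optimal packing of all items to a subset is still a valid packing of that subset. The proof of (\ref{item:lasy}) shows that \LasylPacker uses at most $33\cdot\OPT_{\mathrm{large}}+2$ bins on $S_{\mathrm{large}}$. Part (\ref{item:small}) gives at most $8\cdot\OPT_{\mathrm{small}}+7$ bins on $S_{\mathrm{small}}$; small symmetric \Lshapes are a special case of small \Lshapes, so it applies.

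Summing the two bounds gives at most $(33+8)\OPT+9=41\cdot\OPT+9$ bins, which is the claimed asymptotic ratio $41$. The only points to verify are the monotonicity $\OPT_{\mathrm{large}},\OPT_{\mathrm{small}}\leq\OPT$ and that the two subroutines run independently on their own bins. Both are immediate, so I do not expect a real obstacle here.
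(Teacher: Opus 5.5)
Your proposal is correct and follows the paper's proof: it splits the input online into small and large symmetric \Lshapes, runs the algorithms of parts (a) and (b) on disjoint sets of bins, and adds the bounds $(33\cdot\OPT+2)+(8\cdot\OPT+7)=41\cdot\OPT+9$. Your remarks that every symmetric \Lshape lies in one of the two classes and that each sub-instance optimum is at most $\OPT$ make explicit steps the paper leaves unstated.
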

\begin{proof}
  We partition the sequence into two subsequences, one containing only
  small and one containing only large \Lshapes. By
  \cref{thm:CCAlgSym}(\ref{item:lasy}) and (\ref{item:small}), the
  number of used bins is at most $(8\cdot \opt + 7)+(33\cdot \opt +
  2)=41\cdot \opt + 9$.
\end{proof}

\section{Other packing variants -- Proof of
\autoref{thm:conseq}}\label{sec:variants}
Our insights for bin packing also have consequences for other packing
variants. In this section, we prove the following corollary.
\consequences*

We have already presented a proof of
  \Cref{thm:conseq}(\ref{item:i}) in \cref{sec:stripPacking}. It
therefore remains to consider the remaining three statements.

\subsection{Packings in the plane minimizing objectives of the bounding box}

In this section, we prove \cref{thm:conseq}(\ref{item:ii}) and (\ref{item:iii}).
The idea is to use a two-step approach. Firstly, we employ
\Cref{lem:small_lshapes_density} to pack \Lshapes of similar size
into rectangles so that the packing density is a constant (in all but
one \emph{active} rectangle for each category). Secondly, we pack the
rectangles in the plane; here we make use of previous work by
Abrahamsen and Beretta~\cite{AbrahamsenBeretta20} .
We start to show that there exists a $O(1)$-competitive algorithm for
perimeter packing.

\begin{proposition*} [\cref{thm:conseq}\ref{item:ii}]
	\StatementConseqII
\end{proposition*}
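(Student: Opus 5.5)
We follow the two-step approach announced at the start of the section. Since the problem is \peri, there is no bin, so L-shapes may have arm lengths in $[0,1]$ (or arbitrary; we only use relative sizes). We first classify each \Lshape $L$ into a category $C(a,b)$ with $a,b\in\mathbb Z$, namely $2^{-a-2}<\lx(L)\leq 2^{-a-1}$ and $2^{-b-2}<\ly(L)\leq 2^{-b-1}$; no smallness assumption is needed, so negative $a,b$ are allowed. For each category we maintain one active $(2^{-a}\times 2^{-b})$-rectangle and fill it with \FFL. Whenever an \Lshape does not fit, we close the active rectangle and open a new one. By \cref{lem:small_lshapes_density}, which is scale-invariant by \cref{lem:small_lshapes_density_gen}, every closed rectangle has density at least $\nicefrac18$. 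Each produced rectangle is then handed online to the $O(1)$-competitive perimeter-packing algorithm for rectangles of Abrahamsen and Beretta~\cite{AbrahamsenBeretta20}. Placing each \Lshape inside its rectangle yields the final packing.

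\textbf{Analysis against the offline optimum.} Let $P^*$ be the optimal perimeter for the L-shapes, and let $R$ be the set of produced rectangles. Since the rectangle algorithm is $O(1)$-competitive, it suffices to show $\OPT_{\text{rect}}(R)=O(P^*)$. For this we use two standard lower bounds for rectangle perimeter packing: roughly, $\OPT$ is at least $\Omega(\sqrt{\text{total area}})$ and at least twice the maximum side length. Both bounds also hold up to constants as \emph{upper} bounds on what an optimal offline rectangle packing of $R$ needs, since rectangles whose sides are bounded by $D$ and whose total area is $A$ can be packed (e.g.\ by shelf packing into a square of side $O(\max(D,\sqrt A))$) with perimeter $O(D+\sqrt A)$. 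Hence it suffices to bound the maximum side and the total area of $R$ in terms of $P^*$.

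\textbf{Bounding side lengths and area.} For the sides: a rectangle of category $C(a,b)$ is opened only if some \Lshape with $\lx>2^{-a-2}$ and $\ly>2^{-b-2}$ exists. Thus its sides are at most $4$ times arm lengths of input shapes, and both arm lengths are at most $P^*/2$. For the area: the closed rectangles have total area at most $8\,\area(L)\leq 8(P^*/4)^2$, because the bounding box of perimeter $P^*$ has area at most $(P^*/4)^2$.

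\textbf{The active rectangles are the main obstacle.} There is one active rectangle per nonempty category, possibly unboundedly many, and they are not dense. Here I would reuse the idea of \cref{lem:rectangleAreaTotal}. Let $a_0,b_0$ be the minimal indices occurring, so $2^{-a_0}\le 4\max\lx\le 2P^*$ and similarly for $b_0$. Then the active rectangles are distinct dyadic rectangles of size $2^{-a}\times 2^{-b}$ with $a\ge a_0,b\ge b_0$. By the tiling in \cref{lem:rectangleAreaTotal}, their total area is at most $4\cdot 2^{-a_0}2^{-b_0}=O((P^*)^2)$. Moreover, this tiling itself packs all of them into a box of size $2^{-a_0+1}\times 2^{-b_0+1}$, whose perimeter is $O(P^*)$. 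Combining the bounds, the rectangles admit an offline packing of perimeter $O(P^*)$. To be safe, one could also route active rectangles to a separate copy of the rectangle algorithm and place the two resulting packings side by side, which adds perimeters. Either way the online packing has perimeter $O(P^*)$, giving an absolute competitive ratio in $O(1)$.
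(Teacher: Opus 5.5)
Your proof is correct. It uses the same two ingredients as the paper: \FFL{} on dyadic categories, followed by the Abrahamsen--Beretta rectangle algorithm, with the dyadic tiling of \cref{lem:rectangleAreaTotal} controlling the sparse active rectangles. Where you differ is in how the active rectangles are handled.

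The paper routes the first rectangle of each category into the explicit tiling of $\mathcal R$ and all later ones into \textsc{BrickTranslation}. It then stacks the two packings (mirrored). Finally, it argues via an accounting swap, exchanging each active rectangle with its $\mathcal R$-counterpart, that the lower-bound analysis underlying \textsc{BrickTranslation} still holds when measured against the \Lshapes.

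You instead feed every rectangle to a single copy of the rectangle algorithm, which is legitimate online because a rectangle's dimensions are known when it is opened. You then use that algorithm purely as a black box: its output is at most a constant times $\OPT_{\text{rect}}(R)$. You bound $\OPT_{\text{rect}}(R)=O(D+\sqrt{\area(R)})$ by an offline shelf packing, with $D\le 2P^*$ and $\area(R)\le 8(P^*/4)^2+4\cdot 2^{-a_0}2^{-b_0}=O((P^*)^2)$. This route is cleaner and more robust, since it needs nothing about how the rectangle algorithm is analyzed. The cost is a looser, untracked constant, whereas the paper aims for an explicit one.

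Your optional ``to be safe'' variant should be dropped or rephrased. Whether a rectangle ends up active is unknown at the moment it must be placed, so you cannot route active rectangles to a separate copy online. The paper sidesteps this by routing the \emph{first} rectangle of each category instead. Your main argument does not need this variant anyway.
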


\begin{proof}
  The idea is to employ \Cref{lem:small_lshapes_density} to pack
  \Lshapes of similar sizes into rectangles so that the packing
  density is a constant, namely $\nicefrac{1}{8}$ in all but one
  \emph{active} rectangle for each size class.
  In the second step, we pack the rectangles with the $4$-competitive
  algorithm \textsc{BrickTranslation}  by Abrahamsen and
  Beretta~\cite{AbrahamsenBeretta20}.

  To take care of the active rectangles, we use the following strategy.
  Let $\mathcal R$ denote the set of all $(a\times b)$-rectangles where
  $a,b\in \{2^{k}\mid  k\in \mathbb Z\}$. Note that the rectangles of
  $\mathcal R$ can be packed in the plane such that the $(a\times
  b)$-rectangle has its lower left corner at $(a,b)$, see also
  \cref{fig:perimeterA}. We claim that for any subset $R$ of $\mathcal
  R$, the perimeter of the bounding box is at most twice as large as an
  in an optimal packing. For an illustration, consider
  \cref{fig:perimeterB}. Let $A:=\max_{R\in\mathcal R} a(R)$ and
  $B:=\max_{R\in\mathcal R} b(R)$. Then the bounding box has perimeter
  at most $4(A+B)$. Moreover, the width is lower bounded by $A$ and the
  height by $B$. Hence the perimeter of any bounding box of $R$ is
  $2(A+B)$. Similarly, an \Lshape packed into an $(a\times
  b)$-rectangle has width at least $a/2$ and height at least $b/2$.
  Therefore, when viewing $R$ as a set of active rectangles which
  contains some \Lshapes $L$, any bounding box containing $L$ has
  perimeter $2(A/2+B/2)=A+B$. Therefore, we can pack the active
  rectangles with this strategy and obtain a $4$-competitive algorithm
  on the perimeter of the bounding box.

  \IncludeSubfigures {perimeter} {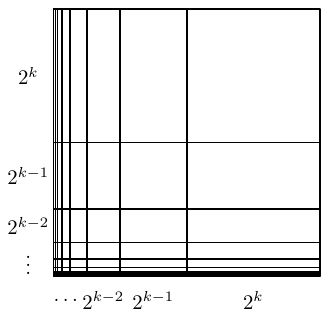} {1,2,3}
  [0.3, 0.3, 0.3] {
    Illustration for the proofs of \cref{lem:rectangleAreaTotal} and
    \cref{thm:conseq}\ref{item:ii}. 
      (a) A packing of the rectangles in $\mathcal R$. (b) The
      perimeter of the bounding box is at most $4(A+B)$. (c) Combining
    the packing of $\mathcal R$ with the brick packing.
    \label{fig:tiling}
  }

  We use the above insight as a tool for book keeping. The idea is to
  pack the first active rectangle (with constant packing density of
  less than \nicefrac{1}{8}) of each category with the above
  strategy. For all following rectangles of the same category, we use
  the algorithm \textsc{BrickTranslation} from Abrahamsen and
  Beretta~\cite[Section 2.1]{AbrahamsenBeretta20}. For the analysis,
  we pretend that all rectangles packed by \textsc{BrickTranslation}
  are inactive, i.e., in our minds, we exchange the active rectangle
  of each category with the inactive counter part in the $\mathcal
  R$-packing (if it exits).  Hence, all rectangles packed by
  \textsc{BrickTranslation} behave as rectangle, in particular the
  contained \Lshapes have a constant density and the perimeter of
  each item is a constant fraction of the perimeter of the rectangle.
  This allows to use the same analysis. We combine these two packings
  as illustrated in \cref{fig:perimeterC}, namely we mirror the
  brickpacking to ensure that small objects are close together. Let
  $W_1,H_1$ and $W_2,H_2$ denote the dimensions of the bounding boxes
  of these two parts. Clearly the combined bounding box has height
  $H:=H_1+H_2$ and width $W:=\max\{W_1,W_2\}$.
  We have $2(H_1+W_1)\leq 4\cdot \OPT_1$ and $2(H_2+W_2)\leq 8\sqrt
  2\cdot \OPT_2$.
  This yields $2(H+W)=2(H_1+H_2 +W) \leq 8\sqrt 2 +4 \cdot \OPT
  <15.32\cdot \OPT$.
\end{proof}

Now we show that there exists a $O(\sqrt{n})$-competitive algorithm
for min area packing.

\begin{proposition*} [\cref{thm:conseq}\ref{item:iii}]
  \StatementConseqIII
\end{proposition*}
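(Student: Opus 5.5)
We follow the same two-step scheme as for \cref{thm:conseq}(\ref{item:ii}).

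\textbf{Step 1: reduce to rectangles.} Each \Lshape $L$ with arm lengths in $(0,1]$ falls into a category $C(a,b)$ with $a,b\in\mathbb Z$. Within a category we use \FFL via \cref{lem:small_lshapes_density}, which packs the shapes into $(2^{-a}\times 2^{-b})$-rectangles. Every closed rectangle then has density at least $\nicefrac18$, and each category has exactly one active rectangle. As in the perimeter proof, we pack the first (active) rectangle of each category into the canonical $\mathcal R$-tiling, with its lower left corner at $(2^{-a},2^{-b})$, and we send all later rectangles of that category to the online rectangle area-packing algorithm of Abrahamsen and Beretta~\cite{AbrahamsenBeretta20}. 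That algorithm is $O(\sqrt n)$-competitive (up to an additive constant) for translational area packing of rectangles. For the analysis we again swap each active rectangle with the corresponding tiling rectangle, so that every rectangle handed to the Abrahamsen--Beretta algorithm is closed.

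\textbf{Step 2: bound the bounding box.} Let $A$ and $B$ be the largest width and height of a tiling rectangle used. Then the tiling part lies in a $2A\times 2B$ box. Any packing of the \Lshapes has width at least $A/2$ and height at least $B/2$, so the optimal bounding-box area satisfies $\OPT\geq AB/4$, and the tiling part costs $O(\OPT)$.

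The brick part is handed at most $n$ rectangles. Each is the bounding rectangle of a set of \Lshapes of density at least $\nicefrac18$, and each has width and height at most twice those of any contained shape. A lower bound for the rectangle instance is therefore $O(\OPT)$. Two lower bounds combine here: the total rectangle area is at most $8\sum\area(L)\leq 8\,\OPT$, and the largest width times largest height is at most $4\,\OPT$, since some optimal box must contain the widest and the tallest shape. The Abrahamsen--Beretta guarantee is stated against the optimal rectangle packing, but its analysis compares with exactly these two quantities. Hence the brick part has bounding box $W_2\times H_2$ with $W_2H_2\in O(\sqrt n)\cdot\OPT$.

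\textbf{Step 3: combine the two parts.} We place the two parts side by side, or stacked as in \cref{fig:perimeterC}. The combined area is at most $(W_1+W_2)(H_1+H_2)$. The terms $W_1H_1$ and $W_2H_2$ are fine. The cross terms such as $W_1H_2$ are the \textbf{main obstacle}: gluing an $O(\sqrt n)$ box to an $O(1)$ box can blow up if their aspect ratios differ.

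To handle this, we would use that both parts have width $O(\max(A,W^*))$ and height $O(\max(B,H^*))$, where $W^*,H^*$ are the largest shape width and height. If the Abrahamsen--Beretta algorithm does not control the aspect ratio this way, our first fallback is to scale: we inflate the tiling part into the brick part's aspect ratio, using that $A\le 2W^*$ and $B\le 2H^*$ while $W_2\ge W^*$ and $H_2\ge H^*$. Then $W_1H_2\le 2W_2H_2$ and symmetrically $W_2H_1\le 2W_2H_2$, giving total area $O(\sqrt n)\cdot\OPT$ plus a constant. This constant is absorbed into the asymptotic ratio.
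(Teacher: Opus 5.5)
There is a genuine gap, and it is in Step 3. Steps 1 and 2 are fine up to constants. In $C(a,b)$ you only have $\ell_x>2^{-a-2}$, so a shape spans at least a quarter of its rectangle's width and height, not a half; this is harmless.

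Your fallback rests on $W_2\ge W^*$ and $H_2\ge H^*$, which are false in general. $W^*$ and $H^*$ are maxima over \emph{all} shapes. The widest or tallest shape may be the only member of its category and so sit in the tiling part, while the brick part receives only small shapes. Your first option fails as well: a black-box $O(\sqrt n)$ guarantee says nothing about the aspect ratio of the Abrahamsen--Beretta output. Moreover, that property cannot hold in general, since the brick part may need area far exceeding $W^*H^*$.

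The cross terms really do blow up. Present one \Lshape with $\ell_x=1$, $\ell_y=\varepsilon$, then $n$ congruent \Lshapes whose arm lengths and widths are all $\Theta(\varepsilon)$, with $n\varepsilon$ small.
\begin{itemize}
\item The flat shape forces $W_1\ge 1$.
\item The brick part receives $\Theta(n)$ congruent squares of side $\Theta(\varepsilon)$. Stacking them in a single column is an \emph{optimal} packing of that rectangle instance, so the black-box guarantee permits $H_2=\Theta(n\varepsilon)$.
\item Whether you glue side by side or on top of each other, the area is at least $W_1H_2=\Theta(n\varepsilon)$.
\item Meanwhile $\OPT=O(\varepsilon)$: put the small shapes in a row on top of the flat one.
\end{itemize}
That is a ratio of $\Theta(n)$, not $O(\sqrt n)$.

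The fix is not to split off the active rectangles at all. This is the paper's route: it sends every rectangle, active ones included, to \textsc{DynBoxTrans}. It then adapts the Abrahamsen--Beretta analysis by treating shelves that contain an active rectangle as sparse.

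You can even keep your black-box viewpoint, as follows.
\begin{itemize}
\item The active rectangles come from distinct categories, so they are distinct members of $\mathcal R$. In the canonical tiling they lie disjointly inside $[0,2A]\times[0,2B]$, so their total area is at most $4AB$.
\item Since $A<4W^*$, $B<4H^*$ and $\OPT\ge W^*H^*$, this is $O(\OPT)$.
\item Together with the closed rectangles, the rectangle instance therefore has total area $O(\OPT)$ and $w_{\max}h_{\max}=O(\OPT)$.
\item Next-fit decreasing height in a strip of width $w_{\max}$ packs any rectangle set into height $h_{\max}+2\,(\text{total area})/w_{\max}$. Hence the rectangle optimum is at most $2\,(\text{total area})+w_{\max}h_{\max}=O(\OPT)$.
\end{itemize}
The $O(\sqrt n)$ guarantee then transfers directly, with no gluing and no cross terms.
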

\begin{proof}[Proof-Sketch]
  The idea is to employ \Cref{lem:small_lshapes_density} to pack
  \Lshapes of similar size into rectangles so that the packing
  density is a constant, namely $\nicefrac{1}{8}$ in all but one
  \emph{active} rectangle for each size class. Then we use the
  \textsc{DynBoxTrans} algorithm from Abrahamsen and
  Beretta~\cite[Section 3.2]{AbrahamsenBeretta20} to pack the
  rectangles in the plane. As in the case for rectangles, the total
  area and the product of the maximum width and maximum height is
  lower bound on the offline optimum. Together with the constant
  density in almost all rectangles, the analysis of Abrahamsen and
  Beretta carries over with tiny adjustments (in the definition of
    $T_j$ and for an analogous statement of Lemma 15 in
    \cite{AbrahamsenBeretta20}.
The modifications take care of the fact that, besides \emph{sparse}
    and \emph{dense} shelves, there are also \emph{active} shelves
    which contain an open rectangle that may not guarantee a constant
  density. For a simple analysis, we treat them as sparse shelves.)
  With this two step approach, we obtain a $O(\sqrt{n})$-competitive
  algorithm, which is asymptotically best possible already for rectangles.
\end{proof}

\subsection{The critical packing density is positive}

We now prove \cref{thm:conseq}(\ref{item:iv}).

\begin{proposition*}
  [\cref{thm:conseq}\ref{item:iv}]
  The online critical packing density is positive if the arm lengths of
  each \Lshape are bounded by $t<1$. For $t=1$, the online
  critical packing
  density is $0$.
\end{proposition*}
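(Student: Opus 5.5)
For the negative part ($t=1$) we reuse the adversary behind \cref{thm:LBbin}. By \cref{cor:bin}, for every online algorithm and every $n$ there is a sequence of $n$ \Lshapes from $\mathcal L^n$ that cannot be packed online into a single bin. We will check that the total area of these \Lshapes tends to $0$ as $n\to\infty$. Each $L_i\in\mathcal L^n$ consists of an $x$-arm of area at most $\ell_x w_y\le 2^{-i-1}$ and a $y$-arm of area at most $w_x\ell_y\le \nicefrac{1}{2n}$. Summing over distinct indices $i$ therefore gives total area at most $\sum_i 2^{-i-1}+n\cdot\nicefrac{1}{2n}$, which is bounded by a constant but does \emph{not} tend to $0$.

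So we rescale the construction. We replace $w_x=\nicefrac1{2n}$ by $\nicefrac{\epsilon}{n}$ and $\ell_x=\nicefrac12+\nicefrac{1}{2n}$ by $\ell_x=1-\epsilon'$ with $\epsilon'$ tiny. This has to be done so that no two \Lshapes fit side by side horizontally without interleaving, and the monotone sorting structure of \cref{th:l_shapes_sorted} must survive. Under such a restriction the bin accommodates about $n$ slots of width $\epsilon/n$, and the sorting argument of \cref{lem:LBsorting,lem:packingToSorting} still forces the algorithm to fail within one bin after about $\log n$ items. The area is then $O(\epsilon)+\sum 2^{-i-1}$. To make the second term small as well, we shift the indices so that they start at a large $i_0$, giving $\sum_{i\ge i_0}2^{-i-1}\le 2^{-i_0}$. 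Arm lengths approach $1$, consistent with $t=1$. Hence for every $A>0$ some sequence of total area below $A$ cannot be packed online, and the critical density is $0$.

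For the positive part (arm lengths at most $t<1$) we design an algorithm that packs any sequence of sufficiently small total area $A=A(t)$ into the unit bin. The plan is to split each \Lshape into its two arm rectangles conceptually but pack the \Lshape as a whole. We classify shapes by whether $\ell_x$ and $\ell_y$ are at most $\nicefrac12$ (small) or not. Small \Lshapes are packed by the scheme of \smallLPacker restricted to one reserved quadrant, using \cref{lem:small_lshapes_density}, \cref{lem:packing_rectangles_optimal} and \cref{lem:rectangleAreaTotal}. The needed rectangle area is at most $8A$ plus the area of the active rectangles. The total area of active rectangles equals $4$ only if all categories occur, so this must be controlled. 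We control it by charging each opened rectangle of category $C(a,b)$ to the first \Lshape placed in it, which has area at least $\ell_x w_y$. This charging is the delicate point: thin \Lshapes have tiny area yet occupy a rectangle.

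This suggests the main obstacle, and the fix I would try first. Bounded area does not bound the arm lengths, so even a single thin \Lshape with $\ell_x=\ell_y=t$ must fit, and many thin long \Lshapes must be stacked. For \Lshapes with some arm longer than $\nicefrac12$ (at most $t$), \cref{lem:adding} applied to a stack with reference corner at the origin admits a new \Lshape as long as $\sum w_x+\ell_x\le 1$ and $\sum w_y+\ell_y\le1$. It therefore suffices that $\sum w_x,\sum w_y\le 1-t$. Each such \Lshape has area at least $\tfrac12(w_x+w_y)$, since one arm is longer than $\nicefrac12$ and the other arm has length at least its own width. So area at most $(1-t)/4$ guarantees this, except in the mixed case where only one arm is long. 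For that case I would use a separate stack placed in another corner, using orthogonal convexity to stack \Lshapes with one short arm inside a strip of width $\nicefrac12$.

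For small \Lshapes, I would replace the doubling categories with the explicit bound: the total area of rectangles ever opened is at most $8A+$(areas of the opening \Lshapes' bounding categories). I would then pick $A$ small enough, depending on $t$, that everything fits into the quadrant $[\nicefrac12,1]^2$, which is disjoint from the stacks. Setting $A=c(1-t)$ for a suitable small constant $c$ should conclude the positive part.
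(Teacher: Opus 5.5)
\textbf{Negative part.} This is correct but much heavier than needed. The paper uses no adversary at all. It takes two \Lshapes $L,L'$ with $\ell_x(L)=1$ and $\ell_y(L')=1$ and arbitrarily thin arms. The full-width horizontal arm of $L$ and the full-height vertical arm of $L'$ cross in every placement, so no algorithm packs both, yet their total area is below any prescribed $A$. Your rescaled sorting construction also works: all shapes are large, so packings are stacked; the $y$-parameters force the order; and with $\epsilon'=\epsilon$ there are about $n$ slots. It just buys nothing here.

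\textbf{Positive part: genuine gap.} The gap comes from using $\nicefrac12$ as the short/long threshold; the point you flagged as delicate is exactly where the argument fails.

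First, small \Lshapes cannot be confined to the quadrant $[\nicefrac12,1]^2$ for any choice of $A$. Two \Lshapes with $\ell_x=\ell_y=\nicefrac12$ and arbitrarily thin arms each need their reference point at that quadrant's lower left corner, yet their total area is arbitrarily small. The same example defeats the charging: a shape of category $C(a,b)$ opens a $2^{-a}\times2^{-b}$ rectangle, but its area is not bounded below by any fraction of that rectangle. So the active rectangles (area up to $4$ bin units, \cref{lem:rectangleAreaTotal}) cannot be paid for by input area.

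Second, the mixed case cannot be handled by a stack. A thin bar with $\ell_x=w_x\approx t$ and $\ell_y=w_y=\delta$ has area about $t\delta$ but consumes about $t$ of any stack's horizontal budget. Such shapes must be shelved according to their short arm. A minor point: your bound $\area\ge\frac12(w_x+w_y)$ fails even when both arms exceed $\nicefrac12$ (take $\ell_x=\ell_y=w_x=w_y=0.6$); $\frac12\max\{w_x,w_y\}$ is the correct bound.

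\textbf{The missing idea} is to tie the threshold to the slack $1-t$. The paper takes $a=\nicefrac{(1-t)}{10}$, $h=2a$, $A=\nicefrac{(1-t)^3}{125}$, and four classes, each with its own region.
\begin{itemize}
\item \emph{Both arms $\le a$.} Scale by $\nicefrac1{2a}$ and run \smallLPacker. By \cref{thm:CCAlgSym}(\ref{item:small}) it uses at most $8\cdot\nicefrac{A}{4a^2}+7\le16$ bins, realized as $16$ squares of side $2a$ in an $8a\times8a$ corner. The additive $+7$ is precisely what accounts for active rectangles and free space, so reserving this room replaces any charging.
\item \emph{Both arms $>a$.} Each shape has area $\ge a\max\{w_x,w_y\}$, so one gravity stack in a $(t+h)\times(t+h)$ square fails only if the area exceeds $2a^2>A$.
\item \emph{Mixed ($\ell_x>a\ge\ell_y$, and symmetrically).} Group by $\ell_y$ into geometric height classes and pack with \FFL into $(t+h)\times\nicefrac{h}{2^i}$ rectangles. \cref{lem:small_lshapes_density_gen} gives constant density in closed rectangles, the active ones have total height $\le2h$, and a $(t+h)\times4h$ strip suffices.
\end{itemize}
Since $t+h+4h=1$, these four regions tile the bin.
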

\begin{proof}
  First consider the case of $t=1$. Two \Lshapes $L$ and $L'$ with
  $\ell_x(L)=1$ and $\ell_y(L')=1$ cannot be packed into one bin. For
  any $A>0$,  the other parameters can be chosen such that the total
  area does not exceed $A$. Therefore, the critical packing density is $0$.

  For each $t<1$ we present  $A>0$ such that there exists an online
  algorithm that packs every sequence of \Lshapes with arm lengths
  bounded by $t$ and total area $A$ into the unit square.
  We set $A:=\nicefrac{1}{125} (1-t)^3$, $a:=\nicefrac{1}{10}(1-t)$
  and $h=2a$. We partition the \Lshapes into four categories $C_{ss},
  C_{sl}, C_{ls}, C_{ll}$, depending on whether the arm lengths are
  smaller or larger than $a$. For example, an $L$ is in $C_{sl}$ if
  $\ell_x(L)\leq a$ and  $\ell_y(L)> a$. For each category, we
  allocate a subcontainer in the unit bin as depicted in \Cref{fig:densityA}.

  \begin{figure}[htb]
    \centering
    \begin{subfigure}[t]{0.4\linewidth}
      \centering
      \includegraphics[page=1]{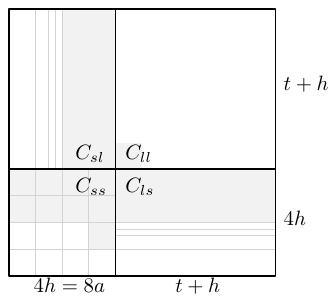}
      \subcaption{ $\nicefrac{1}{6}\leq t<1$}
      \label{fig:densityA}
    \end{subfigure}
    \hfil
    \begin{subfigure}[t]{0.4\linewidth}
      \centering
      \includegraphics[page=2]{figures/density}
      \subcaption{$t\leq \nicefrac{1}{6}$}
      \label{fig:densityB}
    \end{subfigure}
    \caption{Illustration for the proof of \cref{thm:conseq}\ref{item:ii}. 
        {
            In (a), a unit bin is subdivided such that every category can be packed into a separate region. If a better
            upper bound on the arm lengths of the \Lshapes is known, then a simpler packing strategy as in (b) can be
            used.
        }
    }
    \label{fig:density}
  \end{figure}

  For the \Lshapes in $C_{ss}$, we allocate an $8a\times 8a$-square
  in which we pack 16 $2a\times 2a$-squares. In particular, after
  scaling by $\nicefrac{1}{2a}$, we obtain a sequence of \Lshapes with
  arm lengths bounded by $ \nicefrac{1}{2}$ and total area of at most
  $\nicefrac A{(4a^2)}$. We use the algorithm of
  \cref{thm:CCAlgSym}(\ref{item:small}) to pack them into $k$ unit
  bins. As in the proof of \cref{thm:CCAlgSym}(\ref{item:small}), we
  have $k  \leq 8\cdot \nicefrac A{(4a^2)} + 7 \leq 8\leq 16$.

  For the \Lshapes in $C_{ll}$, we allocate a
  $(t+h)\times(t+h)$-square. We stack all \Lshapes of $C_{ll}$ in the
  sequence in a gravity packing, i.e., bottom and left most. Suppose an
  \Lshape $L'$ cannot be packed in the container. Then there exists
  $z\in \{x,y\}$ such that $\sum_ {L\in C_{ll} \setminus L'}w_z(L) +
  \ell_z(L')>t+h\implies \sum_ {L\in C_{ll} \setminus L'}w_z(L) =t+h-
  \ell_z(L')>h$.  As all arms have length at least $a$, the total
  area of $C_{ll}$ exceeds $\sum_ {L\in C_{ll}}w_z(L)\ell_{\overline
  z} (L)=ha=2a^2>A$, where $\overline z\in=\{x,y\}\setminus\{z\}$.

  For the \Lshapes in $C_{ls}$, we allocate a $(t+h)\times
  4h$-rectangle and partition them into height classes. An $L$ is in
  height class $H_i$, $i\in \mathbb N$, if $\nicefrac h 2 ^{i+1} \leq
  \ell_y\leq  \nicefrac h 2 ^{i}$. \Lshapes in $H_i$ are packed into
  $(t+h)\times\nicefrac h 2 ^{i}$-rectangles using \textsc{FirstFit}.  Note that
  $
a \leq \ell_x\leq t
$.

  For $t\geq \nicefrac{1}{6}$, we have $c:=\frac{a}{t+h}\leq
  \nicefrac{1}{4}$ and $C:=\frac{t}{t+h}\geq \frac 12$.
Therefore, \Cref{lem:small_lshapes_density_gen} guarantees a
  packing density of $c(1-C)$ in all but one \emph{active} rectangle
  of each height. The sum of heights of all active rectangles is
  $2h$. Therefore, \Lshapes of total area $2h(t+h)c(1-C)$ are packed
  by this subroutine. It holds that
  $2h(t+h)c(1-C)
\geq  \nicefrac{1}{125} (1-t)^3=A$.

  For $t\leq \nicefrac{1}{6}$, we have  $c=\nicefrac{1}{4}$, and
  $C=\nicefrac{1}{2}$ and thus obtain $2h(t+h)c(1-C)
\geq \frac{(1-t)}{5}
  \geq  \nicefrac{1}{125} (1-t)^3=A$.
  The \Lshapes in $C_{sl}$ are handled symmetrically.

  It follows that for each of the four categories, we can pack
  sequences of  \Lshapes with an area of at most $A$ and thus in
  particular any sequence where the total area is bounded by $A$. This
  finishes the proof.
\end{proof}

\begin{remark}
  We remark that the above bounds can easily be improved, e.g., for
  $t\leq \nicefrac 1 6$.  Suppose $t\leq \nicefrac{1}{2r}$ for some
  $r\geq 3$,  see \cref{{fig:densityB}} for an illustration of
  $r=3$. Then a sequence of \Lshapes with total area
  $\nicefrac{1}{8}(1 - \nicefrac{7}{r^2})\geq \nicefrac{1}{36}$ can
  always be packed into one unit bin, in particular into $r^2$ bins
  of size $\nicefrac1r\times\nicefrac1r$. For simplicity, we scale by
  a factor of $r$. Then  the arm lengths are bounded by $
  \nicefrac{1}{2}$ and the total area is at most $\nicefrac18({r^2-7})$.
  Let $k$ denote the number of bins that the algorithm of
  \cref{thm:CCAlgSym}(\ref{item:small}) needs. As in the proof of
  \cref{thm:CCAlgSym}(\ref{item:small}), we have $k  \leq
  8\cdot\nicefrac18({r^2-7}) + 7 =r^2$.
\end{remark}

We mention some upper bounds.
\begin{lemma}\label{lem:densityUB}
    For \Lshapes whose arm lengths {are} bounded by $t$ with $\nicefrac
  12<  t<1$, the  critical packing
  density is upper bounded by $2t(1 - t)$. For sequences of \Lshapes
  whose arm lengths {are} bounded by
  $t$ with $0\leq  t\leq\nicefrac 12$, the  critical packing density
  is upper bounded by
  $\nicefrac{2-2t}{2-t}$.
\end{lemma}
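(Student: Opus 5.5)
The upper bounds will come from explicit obstructions: for each range of $t$, a family of \Lshapes with arm lengths at most $t$ whose total area tends to the claimed value but which does not fit into one unit bin. Since the critical density is the largest $A$ such that \emph{every} admissible sequence of area at most $A$ packs, one non-packable family of area $\to$ the bound suffices. The bound then holds for both the offline and the online critical density.

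\textbf{Case $\nicefrac12<t<1$.} I will use two \Lshapes with all arms equal to $t$.
\begin{itemize}
\item $L_1$ has a thick horizontal arm and a thin vertical arm: $\ell_x=\ell_y=t$, $w_y=1-t+\varepsilon$, $w_x=\delta$.
\item $L_2$ is the mirror image: $\ell_x=\ell_y=t$, $w_x=1-t+\varepsilon$, $w_y=\delta$.
\end{itemize}
Each has area $t(1-t)+O(\varepsilon+\delta)$, so the total tends to $2t(1-t)$ as $\varepsilon,\delta\to 0$.

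Both shapes are large because $t>\nicefrac12$. By \cref{lem:stackedSymmetric}(i), any packing of them is stacked. Sliding shapes left and down keeps it stacked, so we may assume a stacked gravity packing, and one shape lies on top of the other.
\begin{itemize}
\item If $L_2$ lies on top of $L_1$, \cref{lem:adding} requires $w_y(L_1)+\ell_y(L_2)=1+\varepsilon\le 1$. This is false.
\item If $L_1$ lies on top of $L_2$, \cref{lem:adding} requires $w_x(L_2)+\ell_x(L_1)=1+\varepsilon\le 1$. This is also false.
\end{itemize}
So the pair does not fit, which gives the bound $2t(1-t)$. This case is straightforward.

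\textbf{Case $t\le\nicefrac12$.} Here no single pair of shapes forces a contradiction, so I will use many identical shapes. Take symmetric \Lshapes with $\ell=t$ and width $w$, where $w$ will be tuned slightly above a threshold. Each such shape has area $2tw-w^2$.

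The core step is a counting argument bounding how many copies fit in a unit bin. Every copy occupies its $t\times t$ bounding square except the region $(t-w)^2$ at its reflex corner. Another copy can use that region only by nesting into it, and nesting is governed by \cref{lem:adding} in the local $t\times t$ frame. Nesting is impossible once $w+t>t+(t-w)$, i.e.\ once $w>t/2$ in the symmetric case. Without nesting, each copy effectively blocks its whole $t\times t$ bounding square.

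I would then count disjoint $t\times t$ squares in the unit bin together with the boundary slack of width less than $t$. I expect the number that fit to be about $\lfloor 1/t\rfloor^2$, possibly plus a few copies along the boundary strip. Choosing the number of presented copies to be one more than this capacity, the total area is this count times $2tw-w^2$.

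Optimizing over $w$ and the count should reproduce $\nicefrac{2-2t}{2-t}$. The endpoint checks are encouraging: the expression equals $\nicefrac23$ at $t=\nicefrac12$, and it tends to $1$ as $t\to0$.

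\textbf{Main obstacle.} The step I expect to be hardest is this second case. Two things are uncertain:
\begin{itemize}
\item Proving rigorously that no clever non-stacked arrangement of small shapes beats the bounding-square count. For small shapes, \cref{lem:stackedSymmetric} is not available, so this needs a direct interval or charging argument along each row.
\item Identifying the exact extremal family. It may turn out that non-symmetric shapes, e.g.\ thick-armed shapes of arm $t$ mixed with filler pieces exploiting the $1-\lfloor 1/t\rfloor t$ leftover strip, are what yields precisely $\nicefrac{2-2t}{2-t}$.
\end{itemize}
I would first try to reverse-engineer the family from the formula itself. The formula can be written $\frac{2-2t}{2-t}=1-\frac{t}{2-t}$, which suggests the family should waste exactly a $\frac{t}{2-t}$ fraction of the bin.
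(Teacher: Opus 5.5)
\textbf{Case $\nicefrac12<t<1$.} This part is fine. Both \Lshapes are large, so every packing is stacked, and each of the two stacking orders violates \cref{lem:adding} by exactly $\varepsilon$. Your two-item obstruction is a valid alternative to the paper's argument. The paper instead presents about $\frac{1-t}{w}+2$ copies of one thin symmetric \Lshape (arm $t$, width $w\to 0$) and uses \cref{lem:adding} to see that at most $\frac{1-t}{w}+1$ of them fit. That version additionally shows the bound already for symmetric \Lshapes.

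\textbf{Case $t\le\nicefrac12$.} Here there is a genuine gap. You give neither a family nor a capacity bound, and the claim your plan rests on is false. Stacking is not governed by \cref{lem:adding} in a local $t\times t$ frame. A copy whose reference point sits in the reflex corner of another copy simply sticks out of that copy's bounding square, and the only constraint is the unit bin. For $t\le\nicefrac12$, copies at $(0,0)$ and $(w,w)$ always fit, since $w+t\le 2t\le 1$. So copies do not block their bounding squares, whatever $w$ is, and the count ``about $\lfloor 1/t\rfloor^2$'' is wrong. At your threshold $w=\nicefrac t2$, the symmetric \Lshape is an L-tromino made of three $\frac t2\times\frac t2$ squares. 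Its translates by the lattice spanned by $(\frac t2,\frac t2)$ and $(\frac{3t}2,0)$ tile the plane. Hence, for small $t$, about $\frac{4}{3t^2}$ copies fit into a unit bin, and nearly as many for $w$ slightly above $\nicefrac t2$. That is far more than $\lfloor 1/t\rfloor^2$ plus a few boundary copies, so the family you intend to present would actually pack. Taking $w$ large is also the wrong regime for reaching $\frac{2-2t}{2-t}$.

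\textbf{The missing idea.} The paper's route is to let $w\to 0$ and charge the waste to stacks. Call two copies neighbours if the reference point of one lies in the bounding square of the other. Within a stack, reference points increase by at least $w$ in both coordinates. So a stack has at most $\frac{1-t}{w}+1$ copies and area at most $S_w:=(2t-w)(1-t+w)\to 2t(1-t)$. The $(t-w)\times(t-w)$ reflex square of the topmost copy of each stack is empty: a copy meeting it without having its reference point in that bounding square would cross one of its arms. Reflex squares of different stacks are disjoint. If the copies cover area $\alpha$, there are at least $\alpha/S_w$ stacks, so $\alpha+\frac{\alpha}{S_w}(t-w)^2\le 1$, i.e.
\[
  \alpha\le\frac{S_w}{S_w+(t-w)^2}\;\longrightarrow\;\frac{2t(1-t)}{2t(1-t)+t^2}=\frac{2-2t}{2-t}\qquad(w\to 0).
\]
Presenting thin copies whose total area is slightly above this value therefore gives a non-packable sequence. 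This also explains the wasted fraction $\frac{t}{2-t}=\frac{t^2}{2t(1-t)+t^2}$ you noticed: each stack of area at most about $2t(1-t)$ leaves one empty reflex square of area about $t^2$.
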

\begin{proof}

  Consider large  symmetric \Lshapes of arm length $t$ and width $w$.
  By \cref{lem:adding},  we can pack at most
  $\nicefrac{1-t}{w}+1$ shapes into a unit bin, and each shape has an
  area of $(2t-w)w$. Thus the
  total area is at most $(2t-w)(1-t+w)$, which gives an upper bound
  for the critical packing density
  for all $w > 0$, and converges to $2t(1-t)$ as $w$ tends to $0$.
  This is illustrated in \autoref{fig:densityUBA}.

  \begin{figure}[htb]
    \centering
    \begin{subfigure}[t]{0.35\linewidth}
      \centering
      \includegraphics[page=8]{figures/Lower_Bound_L-2}
      \subcaption{}
      \label{fig:densityUBA}
    \end{subfigure}
    \hfill
    \begin{subfigure}[t]{0.5\linewidth}
      \centering
      \includegraphics[page=9]{figures/Lower_Bound_L-2}
      \subcaption{}
      \label{fig:densityUBB}
    \end{subfigure}
    \caption{Illustration for the proof of \cref{lem:densityUB}.
        {
           (a) When $t > \nicefrac12$, all \Lshapes in a common bin form a stack, so at most $\nicefrac{1-t}{w} + 1$ \Lshapes
            fit into each bin.
            (b) The area of a stack is at most $2(1-t-w)t$ and of the associated 
            empty region $(t - w)^2$.
        }
    }
    \label{fig:densityUB}
  \end{figure}

  For $t$ with $0<  t<\nicefrac 12$, consider  symmetric \Lshapes of
  arm length $t$ and width $w$.
  If a reference point of one \Lshape lies in the bounding box of
  another \Lshape, we say that
  these two \Lshapes are neighbors. A set of neighbors form a stack.
  For each stack, there is
  an empty region of area $(t-w)^2$ above the topmost \Lshape. The
  area of a stack of \Lshapes is
  bounded by $2(1-t+w)t$ which converges to $2(1-t)t$ as $w$ tends to
  $0$, see also \autoref{fig:densityUBB}. Consequently, the
  packing density is bounded by
  \[\frac{2(1-t)t}{2(1-t)t+t^2}=\frac{2(1-t)}{2-t}.\]
\end{proof}

\section{A constant competitive algorithm for bin packing of \Lframes{} -- Proof of \autoref{thm:skeletons}(i)}\label{sec:Lskeletons}

We now prove \cref{thm:skeletons}(i) by showing the following statement.
\begin{proposition*}[corresponding to \cref{thm:skeletons}(i)]
    There exists an  online algorithm for \Lskeleton  with absolute (and therefore also asymptotic) competitive ratio $2$. 
\end{proposition*}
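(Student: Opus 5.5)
The plan rests on one structural fact about \Lframes. Among \Lframes with strictly positive slack, any collection fits into a single bin, so only the \Lframes with an arm of length exactly $1$ can force extra bins. Those remaining \Lframes behave like one-dimensional items, and I will handle them with \textsc{NextFit}.

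For an \Lframe $L$ with reference point $(x,y)$, set $a(L):=1-\ell_x(L)$ and $b(L):=1-\ell_y(L)$. Then $L$ lies in the bin iff $0\le x\le a(L)$ and $0\le y\le b(L)$. The first step is to check that two \Lframes with reference points $(x_i,y_i)$ and $(x_j,y_j)$, where $x_i<x_j$ and $y_i<y_j$, never intersect. The vertical arm of $L_j$ lies on $x=x_j$ above height $y_j$, so it misses the horizontal arm of $L_i$ at height $y_i<y_j$. Likewise the horizontal arm of $L_j$ starts at $x_j>x_i$ and misses the vertical arm of $L_i$. Hence any set of \Lframes whose reference points form a strictly increasing chain in both coordinates is a valid packing.

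Now call $L$ \emph{generic} if $a(L)>0$ and $b(L)>0$. The online rule for generic \Lframes is to insert each new one at the bottom of the current chain. If $(x_{\min},y_{\min})$ is the current lowest reference point, place the new $L$ at
\[
\Bigl(\tfrac12\min\{x_{\min},a(L)\},\;\tfrac12\min\{y_{\min},b(L)\}\Bigr),
\]
which lies strictly below and to the left of the chain and inside the feasible box of $L$. All points stay strictly positive, so this can be repeated forever. Thus all generic \Lframes go into one bin, which is trivially optimal for them.

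The remaining \Lframes have $\ell_x=1$ or $\ell_y=1$.
\begin{itemize}
\item An \Lframe with $\ell_x=1$ must have $x=0$, so its vertical arm lies on the left boundary.
\item Two such \Lframes are compatible iff their vertical arms $[y_i,y_i+\ell_y(L_i)]$ and $[y_j,y_j+\ell_y(L_j)]$ are interior-disjoint, since touching at an endpoint is allowed. Given this, placing them in increasing order of $y$ introduces no other crossings.
\item So the \Lframes with $\ell_x=1$ form a classical \oneDBin instance with sizes $\ell_y$; symmetrically, those with $\ell_y=1$ form one with sizes $\ell_x$ on the bottom boundary.
\end{itemize}
I would run \textsc{NextFit} on these one-dimensional items, stacking the left-wall items bottom-up with consecutive arms touching at endpoints. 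The bottom-wall items go into the same bins, left to right. \textsc{NextFit} guarantees that any two consecutive bins carry total size more than $1$, which gives the factor $2$. The generic chain is placed in the first bin, nested strictly inside the region left free near the corner.

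I expect the interaction between the two walls and the generic chain to be the main obstacle. A left-wall \Lframe at height $y$ has a full-width horizontal arm, and it must not cross bottom-wall \Lframes or generic ones. I would first try to show that, in any bin, all left-wall and bottom-wall \Lframes together with the generic chain can be arranged as one increasing chain (allowing endpoint contact). The bin's total load, that is, the sum of the $\ell_y$ of the left-wall items plus the sum of the $\ell_x$ of the bottom-wall items, would then be the right one-dimensional measure. In that measure the full \Lframe with $\ell_x=\ell_y=1$ has size $2$ in a bin of capacity $2$, which fills a whole bin. The lower bound $\OPT\ge$ total load$/2$ together with the \textsc{NextFit} guarantee then gives at most $2\cdot\OPT$ bins. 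If this merged capacity model fails, the fallback is to treat the two walls separately. That costs a worse constant, which would still suffice for \cref{thm:skeletons}(i) but not for the claimed ratio $2$.
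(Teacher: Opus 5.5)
Your proposal has a genuine gap, and it is at the point you flagged as the main obstacle: the merged construction produces invalid packings. A left-wall item ($\ell_x=1$) at $(0,y)$ has a horizontal arm covering $[0,1]\times\{y\}$. A bottom-wall item ($\ell_y=1$) at $(x',0)$ has a vertical arm covering $\{x'\}\times[0,1]$. The two cross at $(x',y)$, which is not an arm tip of either skeleton, except in degenerate cases with zero-length arms. So no bin can ever contain both kinds. Your ``capacity $2$'' model therefore describes infeasible packings: two such items with free arms of length $0.1$ have load $0.2$ but cannot share a bin.

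Nesting the generic chain into the first wall bin fails for a similar reason. A generic item with $\ell_x=\ell_y=0.99$ must have its reference point in $[0,0.01]^2$. Its vertical arm then crosses the full-width horizontal arm of every left-wall item whose height lies strictly inside $(y_g,y_g+0.99)$, which covers essentially every left-wall item stacked above the first one.

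The missing idea is that this incompatibility holds in every packing, in particular in an optimal one. Set $S_2=\{L:\ell_x(L)=1\}$ and $S_3=\{L:\ell_y(L)=1\}\setminus S_2$. Then $\OPT\ge\OPT(S_2)+\OPT(S_3)$, and your fallback of treating the two walls separately loses nothing.

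This fallback is exactly the paper's algorithm:
\begin{itemize}
\item one dedicated bin for all generic items (your chain placement is a valid substitute for the paper's diagonal placement);
\item \textsc{NextFit} run separately on $S_2$ and $S_3$ as \oneDBin instances, using the one-dimensional reduction you correctly describe.
\end{itemize}

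For the accounting, pair the first bins of a nonempty class: any two consecutive \textsc{NextFit} bins carry load exceeding $1$. Hence \textsc{NextFit} uses at most $2\OPT(S_i)-1$ bins on that class. This $-1$ slack pays for the generic bin. When both wall classes are nonempty, the total is at most
\[
1+(2\OPT(S_2)-1)+(2\OPT(S_3)-1)\le 2\OPT .
\]
The cases with empty classes follow immediately from $\OPT\ge 1$.

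So your claim that separating the walls costs a worse constant is mistaken. As written, though, your main packing is invalid, and without the bound $\OPT\ge\OPT(S_2)+\OPT(S_3)$ the ratio $2$ is not established.
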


\begin{proof}
    The proof is based on three simple insights. Firstly, we note that any
    finite set $S$ of \Lframes can be packed into one bin if for each $L\in S$
    we have $\ell_x(L),\ell_y(L)< 1$. The idea is to place the reference point
    of each item on the diagonal such that its longer arm touches the boundary,
    see \autoref{fig:LframeA}. If the point is already occupied, we place the
    \Lframe in the middle between this occupied point and the closest previous
    occupied point (or the origin).  
    
    Secondly, note that any two \Lframes $L,L'$
    with $\ell_x(L)=\ell_y(L')=1$ intersect, as illustrated in
    \autoref{fig:LframeB}.  
    
    Thirdly, for a set $S$ of \Lframes with  $\ell_x(L)=
    1$ for all $L\in S$, the problem reduces to \oneDBin, see also
    \autoref{fig:LframeC}.

    \IncludeSubfigures {Lframe} {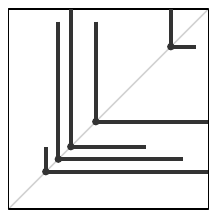} {1,2,3} {Illustration for the proof of \cref{thm:skeletons}(i).
    	\new{(a)  Any set of \Lframes with arm lengths $<1$ can be packed into one bin. (b) Any two \Lframes $L,L'$
    	with $\ell_x(L)=\ell_y(L')=1$ intersect. (c) A packing of \Lframes with $\ell_x(L)=1$ for all $L$  corresponds to \oneDBin.}
    }

    Therefore, we apply the following strategy. We partition a given set $S$ of
    \Lframes into three sets: An $L\in S$ is contained in  $S_1$ if
    $\ell_x(L),\ell_y(L)< 1$, in $S_2$ if $\ell_x(L)=1$ and in $S_3$ if
    $\ell_y(L)=1$. (An $L$ with $\ell_x(L)=\ell_y(L)= 1$ can be arbitrarily
    assigned to $S_2$ or $S_3$.)

    If $S_1\neq\emptyset$, we use one bin to pack the objects of $S_1$.  For $S_2$ and $S_3$, we apply
    \textsc{NextFit} for \oneDBin.  \textsc{NextFit}, studied by Johnson in
    \cite{johnson1973near}, has one active bin at the time. New arriving objects are typically placed in this active bin.  If a new item
    arrives that does not fit in the active bin, the current active
    bin is closed and and a active bin is opened. \textsc{NextFit} uses at most $2\opt(S_i)-1$ bins
    for $S_i$ with $i\in\{2,3\}$: Excluding the last bin, the sum of any two consecutive bins
    exceeds 1. Let $b$ denote the number of used bins by \textsc{NextFit}.
    Hence, $\opt(S_i)\geq b/2+1$ if $b$ is even and  $ \opt(S_i)\geq
    (b-1)/2+\varepsilon$ for any $\varepsilon>0$ if $b$ is odd. In both cases,
    we have $\opt(S_i)\geq (b+1)/2+1$ and thus $b\leq 2 \opt(S_i)-1$.

    In total, the algorithm uses at most $1+2(\opt(S_2)-1)+2(\opt(S_3)-1)\leq
    2(\opt(S_2)+\opt(S_3))\leq 2\opt$ bins.
\end{proof}

\section{Strong lower bound for orthogonal polygons of higher
complexity -- Proofs of \autoref{thm:Zgons} and \autoref{thm:skeletons}(ii)}
\label{sec:Zshapes}
We now turn our attention to more complex orthogonal polygons. We show that no
algorithm can beat the competitive ratio of the trivial algorithm, even when
restricting to convex orthogonal 8-gons.  In other words, we prove
\cref{thm:Zgons}.

We focus on convex orthogonal 8-gons which are \emph{Z-shapes}, as defined in
\autoref{sec:prelim}. Typically, a Z-shape has 6 parameters. However, we will
not make use of this flexibilty. In fact, we consider Z-shapes of equal
thickness, which we describe by the four parameters $a,b,w,t$  as illustrated in
\autoref{fig:ZFrameParamsB}.  As a stepping stone we consider \Zframes, i.e.,
Z-shapes of zero thickness, see also \autoref{fig:ZFrameParamsA}.

\IncludeSubfigures{ZFrameParams}{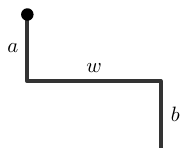}{1,2}{ Illustration of (a) a
    \Zframe with its reference point and (b) a Z-shape. 
}

As a first step, we prove that every online algorithm for \Zskeleton has a
competitive ratio of at least $n$, i.e., \cref{thm:skeletons}(ii). 

\begin{proposition*} [corresponding to \cref{thm:skeletons}(ii)]
	Every online algorithm for \Zskeleton has competitive ratio of at least $n$.
\end{proposition*}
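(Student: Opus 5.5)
We plan an adversary argument for \Zskeleton, in the spirit of the proof of \cref{lem:LBsorting}. For every $n$, \playerA presents $n$ \Zframes that fit together into a single bin offline, but it chooses each new \Zframe so that it fits into no bin already opened by the algorithm. Then the algorithm uses $n$ bins while $\OPT=1$, and no additive constant $\beta$ helps because $n$ is arbitrary. All \Zframes will have total height exactly $1$, i.e.\ $a+b=1$, so their vertical placement is forced: the lower arm touches the bottom of the bin and the upper arm touches the top. A \Zframe is then described by the height $y$ of its base (equal to $b$) and its width $w$. We take $w\in(1-\delta,1]$ for a small $\delta>0$, so the horizontal offset $x$ of the reference point lies in $[0,1-w]\subseteq[0,\delta]$.

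The first step is a compatibility criterion, which we prove first. Two such \Zframes $Z_1=(x_1,y_1,w_1)$ and $Z_2=(x_2,y_2,w_2)$ with $y_1<y_2$ can share a bin if and only if $x_1<x_2$ and $x_1+w_1<x_2+w_2$. The ``only if'' part holds because otherwise the upper arm of $Z_1$ (at abscissa $x_1$, spanning $[y_1,1]$) crosses the base of $Z_2$, or the lower arm of $Z_2$ (at abscissa $x_2+w_2$, spanning $[0,y_2]$) crosses the base of $Z_1$. Touching is allowed only at endpoints, and the inequalities are strict. The ``if'' part is a direct check that arms and bases are then disjoint. Consequently, a set of \Zframes fits into one bin if and only if, sorted by base height, both the left ends $x$ and the right ends $x+w$ are strictly increasing. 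This gives an easy offline certificate: we only have to exhibit offsets in $[0,1-w_i]$ with both sequences increasing along the $y$-order.

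With the criterion, the adversary is as follows. Each new item $Z_i$ gets a base height $y_i$ chosen above all previous ones. In every open bin it must then satisfy $x_i>x_j$ and $x_i+w_i>x_j+w_j$ for the top item $Z_j$ of that bin. Because $x_i\le 1-w_i$, the second condition forces $w_i$ to be small relative to $x_j+w_j$. The adversary therefore chooses $w_i$ depending on the already placed items: it takes $w_i$ slightly larger than $1-\min_j(x_j+w_j-\text{slack})$, so that $x_i+w_i\le 1$ leaves no room beyond every bin's current maximal right end. Here the minimum runs over the top items of all bins. Alternatively, it places $y_i$ below all previous items and symmetrically exploits the left ends. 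Which of the two options it uses is decided adaptively from the current configuration, exactly as the side of the active set is chosen in \cref{lem:LBsorting}. The widths should shrink geometrically (differences of order $2^{-i}\delta$), so that infinitely many steps remain possible.

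The main obstacle is reconciling these two requirements. Each new width must be extreme enough to block every open bin, yet the whole sequence must remain offline-packable in one bin, i.e.\ admit increasing $x$ and increasing $x+w$ along the $y$-order. Blocking pushes consecutive widths apart, while offline feasibility needs the total width variation along the $y$-order to stay below the slack $\delta$. My first attempt would keep the blocking threshold tied to the algorithm's actual right ends $x_j+w_j$ rather than to $1$, and use the geometric schedule so that the offline solution can place $x_i=\sum_{k<i}2^{-k}\delta/2$ with room to spare. If that fails, I would interleave insertions above and below, maintaining an active $y$-interval and an active width interval in which every bin is blocked, as in \cref{lem:LBsorting}. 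Once this is settled for \Zframes, the same construction with a tiny uniform thickness $t$ should yield \cref{thm:Zgons}.
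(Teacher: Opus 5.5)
Your compatibility criterion (for $y_1<y_2$: $x_1<x_2$ and $x_1+w_1<x_2+w_2$) is correct and is exactly \cref{lemma:ZFrameIntersection}. The overall frame also matches the paper: $a+b=1$, block each new item, and a one-bin offline certificate. But there is a genuine gap. You never actually specify the adversary, and you leave reconciling blocking with offline feasibility open as ``the main obstacle''. The version you do sketch cannot work.

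For an insertion above $Z_j$, whether $x_i+w_i>x_j+w_j$ is satisfiable does not depend on $w_i$ at all, since \ALG may take $x_i=1-w_i$. The only condition the adversary can break is $x_i>x_j$, which is impossible iff $w_i\ge 1-x_j$. Symmetrically, an insertion below $Z_j$ is blocked iff $w_i\ge x_j+w_j$ (or $x_j=0$). So blocking always needs the new item to be \emph{wider}, not narrower. Your threshold $w_i\approx 1-\min_j(x_j+w_j-\mathrm{slack})$ is near $0$.

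Worse, any fixed side fails. Suppose \ALG puts the first item of a fresh bin at $x=0$. Then an above-insertion is kept out of that bin only if one of the two widths equals $1$. A \Zframe of width $1$ can share a bin with none of the others, so $\OPT=1$ is lost. Placing items rightmost defeats an always-below adversary in the same way. Also, once insertions are interleaved, the ``top item of each bin'' is no longer the relevant one, so it is cleaner to force a conflict with every earlier item than to block bin by bin.

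The missing idea is the decision rule in \cref{alg:ZFrameAdversary}:
\begin{itemize}
\item Widths are fixed in advance and increasing, $w_i=1-2^{-i-1}$, so $w_{i+1}=(1+w_i)/2$.
\item Base heights are chosen by bisection.
\item After \ALG places $Z_i$, the adversary commits \emph{all future} items to lie below $Z_i$ if $x_i\le(1-w_i)/2$, and above it otherwise.
\end{itemize}
In the first case every later $Z_j$ has $x_j+w_j\ge w_{i+1}\ge x_i+w_i$. In the second, $x_j\le 1-w_j\le 1-w_{i+1}<x_i$. Either way your criterion is violated, so every new item conflicts with \emph{every} earlier item and bins need not be tracked (\cref{lemma:ZFrameConflict}).

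Offline feasibility then comes from the same bisection. Items below $Z_n$ have $b$ and $w$ increasing together, and items above $Z_n$ have $b$ decreasing as $w$ increases. Both chains, with $Z_n$ (the widest) between them, admit strictly increasing left and right ends within the remaining slack $1-w_n=2^{-n-1}$ (\cref{lemma:OptimalZFrames}).

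Your closing remark about an active $y$-interval as in \cref{lem:LBsorting} points toward this. However, the midpoint test and the width growth $w_{i+1}\ge(1+w_i)/2$ are what make one choice block both cases, and both are absent from the proposal.
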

Consider an arbitrary online algorithm \ALG for the online \Zframe packing
problem.  We present a strategy to generate a sequence of \Zframes such that they
can be packed into one bin, but \ALG  uses one bin per \Zframe.  This sequence of
\Zframes is generated by \autoref{alg:ZFrameAdversary}.  After generating a
\Zframe, it reads back the coordinates $(x, y)$ of the reference point chosen by
\ALG.  Due to the fact that $a + b = 1 $ for all generated \Zframes, all valid
placements fulfill $y = 1$ and we may therefore focus on the $x$-coordinates chosen
by \ALG.

\newcommand{\bLow}{b}
\newcommand{\bHigh}{B}
\newcommand{\bIdx}[1]{b_{#1}}
\newcommand{\aIdx}[1]{a_{#1}}

\begin{algorithm}[htb]

  \SetKwFunction{GenZ}{GenerateZSkeletons}

    \Fn(){\GenZ{$n$, \ALG}}{
        $w_0 \gets \nicefrac 12$\;
        $\bLow \gets 0$\;
        $\bHigh \gets 1$\;
        \ForEach{$i \in [n]$}{
            Present $Z_i$ with $w_i := 1 - 2^{- i - 1}$, $\bIdx i := ({\bLow + \bHigh})/{2}$ and $\aIdx i := 1 - \bIdx i$.\;
            $x_i \gets $ $x$-coordinate of $Z_i$ assigned by \ALG\;
            \eIf{$x_i \leq ({1 - w_i})/{2}$} {
                $\bHigh \gets \bIdx i$\;
            } {
                $\bLow \gets \bIdx i$\;
            }
        }

    }
\caption{Strategy to generate  a sequence of \Zframes for a given algorithm \alg.}
\label{alg:ZFrameAdversary}
\end{algorithm}

We start with useful properties of the sequence of \Zframes generated by
\autoref{alg:ZFrameAdversary}.
\begin{lemma}
    \label{lem:properties}
    For the the sequence of \Zframes $Z_1,Z_2,\dots,Z_n$ generated by
    \autoref{alg:ZFrameAdversary}, the following properties hold:
    \begin{romanenumerate}
        \item \label{item:2}Let $I_i$ denote the interval $(\bLow, \bHigh)$
            before generating $Z_i$. Then $I_i \subset I_{i - 1}$, and $|I_i| =
            \frac12 |I_{i - 1}|$.
        \item \label{item:3}$w_j>w_i$ for all $j>i$
        \item \label{item:4} If $x_i \leq (1 - w_i)/2$, then $\bIdx j < \bIdx i$
            for all $j \in [n], j > i$.
        \item \label{item:5}If $x_i > (1 - w_i)/2$, then $\bIdx j > \bIdx i$ for
            all $j \in [n], j > i$.\qedhere
    \end{romanenumerate}
    \label{lemma:ZFrameProperties}
\end{lemma}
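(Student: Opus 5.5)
The plan is to read all four properties directly off \autoref{alg:ZFrameAdversary}, proving each by a short induction or a one-line computation.

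For \ref{item:2}, we look at the update in iteration $i-1$. The algorithm sets $\bIdx{i-1}=(\bLow+\bHigh)/2$, the midpoint of $I_{i-1}=(\bLow,\bHigh)$, and then replaces exactly one endpoint by this midpoint. Hence $I_i$ is either $(\bLow,\bIdx{i-1})$ or $(\bIdx{i-1},\bHigh)$. Each is a half of $I_{i-1}$, so $I_i\subset I_{i-1}$ and $|I_i|=\tfrac12|I_{i-1}|$. Property~\ref{item:3} is immediate: $w_i=1-2^{-i-1}$ is strictly increasing in $i$.

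For \ref{item:4} and \ref{item:5}, the key observation is that $\bIdx j$ is the midpoint of $I_j$. Since $I_j$ is an open interval of positive length, $\bIdx j\in I_j$. By \ref{item:2}, $I_j\subseteq I_{i+1}$ for every $j>i$, using induction on $j$ to chain the inclusions.

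If $x_i\le(1-w_i)/2$, the algorithm sets $\bHigh\gets\bIdx i$. Then $I_{i+1}=(\bLow,\bIdx i)$, so every $\bIdx j$ with $j>i$ lies strictly below $\bIdx i$. The other case is symmetric: $I_{i+1}=(\bIdx i,\bHigh)$ gives $\bIdx j>\bIdx i$.

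The only point needing care is that the intervals are open and nondegenerate, so that midpoints lie strictly inside. This holds because $|I_j|=2^{-(j-1)}>0$, and it gives the strict inequalities. I do not expect any real obstacle, since this lemma is pure bookkeeping about a binary search.
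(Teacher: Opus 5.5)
Your proposal is correct and matches the paper's proof: (i) holds because the midpoint replaces one endpoint, (ii) because $w_i = 1-2^{-i-1}$ is increasing, and (iii)/(iv) because $b_j$ is the midpoint of $I_j \subseteq I_{i+1}$, which lies entirely on one side of $b_i$. Your remark that the intervals are open with length $2^{-(j-1)}>0$, so that midpoints lie strictly inside and the inequalities are strict, makes explicit a detail the paper leaves implicit.
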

\begin{proof}
    We show each property individually.
    \begin{romanenumerate}
        \item  $\bIdx i$ is defined as the midpoint of the interval $(\bLow,
            \bHigh)$, and it becomes a new endpoint.
        
        \item Follows immediately form the definition of $w_i$.
        
        \item As in this case, $\bHigh$ becomes $\bIdx i$, we know that $I_{i +
            1} = (\bLow, \bIdx i)$, so all values in $I_{i + 1}$ are smaller
            than $\bIdx i$. By (\ref{item:2}), $I_j \subseteq I_{i + 1}$, and
            $\bIdx j$ is the midpoint of $I_j$.
        
        \item Analogous to (\ref{item:4}).\qedhere
    \end{romanenumerate}
\end{proof}

Next, we characterize when two \Zframes intersect.
\begin{lemma}
    Consider two \Zframes $Z, Z'$ with $w, w' > \nicefrac12$, $a + b = a' + b' =
    1$ and $b < b'$ which are  placed into the same unit bin at $x$-coordinates
    $x$ and $x'$, respectively. Then  $Z, Z'$ intersect if and only if $x' \leq
    x$ or $x' + w' \leq x + w$.
    \label{lemma:ZFrameIntersection}
\end{lemma}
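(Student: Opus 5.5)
The plan is a direct coordinate computation: write both skeletons as unions of three segments, list the pairwise intersections, and then use $w,w'>\nicefrac12$ to turn the resulting conditions into the two inequalities of the statement.

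\textbf{Coordinates.} The reference point of a \Zframe is the top of its upper arm, and as noted above $a+b=1$ forces $y=1$. A \Zframe $Z$ with parameters $a,b,w$ placed at $x$ is then the union of three segments: the upper arm $\{x\}\times[b,1]$, the base $[x,x+w]\times\{b\}$, and the lower arm $\{x+w\}\times[0,b]$. The same holds for $Z'$ with $x',w',b'$. I will assume, as for all generated \Zframes, that $a,b,a',b'>0$, so that $0<b<b'<1$. By validity of the placement, $x\in[0,1-w]$ and $x'\in[0,1-w']$; since $w,w'>\nicefrac12$, both $x,x'<\nicefrac12$.

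\textbf{Pairwise segment intersections.} Next I would go through all nine pairs of segments, one from each skeleton.
\begin{itemize}
\item The two bases lie at the distinct heights $b<b'$, so they are disjoint.
\item The base of $Z'$ at height $b'$ meets the upper arm of $Z$, which spans heights $[b,1]\ni b'$, exactly when $x\in[x',x'+w']$.
\item The lower arm of $Z'$, which spans heights $[0,b']\ni b$, meets the base of $Z$ exactly when $x'+w'\in[x,x+w]$.
\item The base of $Z'$ cannot meet the lower arm of $Z$, since $b'>b$; likewise the upper arm of $Z'$ cannot meet the base of $Z$.
\item Collinear pairs of vertical arms: the two upper arms meet iff $x=x'$ (they overlap on $[b',1]$), and the two lower arms meet iff $x+w=x'+w'$ (they overlap on $[0,b]$).
\item The lower arm of $Z'$ meets the upper arm of $Z$ iff $x=x'+w'$ (they overlap on $[b,b']$). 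The upper arm of $Z'$, spanning $[b',1]$, and the lower arm of $Z$, spanning $[0,b]$, never meet.
\end{itemize}
Each collinear case is already contained in one of the two conditions above. Hence $Z$ and $Z'$ intersect iff $x\in[x',x'+w']$ or $x'+w'\in[x,x+w]$.

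\textbf{Simplifying the conditions.} Since $x<\nicefrac12<w'\le x'+w'$, the condition $x\in[x',x'+w']$ is equivalent to $x'\le x$. Since $x'+w'\ge w'>\nicefrac12>x$, the condition $x'+w'\in[x,x+w]$ is equivalent to $x'+w'\le x+w$. This yields exactly the claimed criterion.

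\textbf{Main obstacle: the endpoint exception.} The step needing care is the validity rule that an endpoint of one skeleton may touch the other. The endpoints are $(x,1)$, $(x+w,0)$ and the corresponding points of $Z'$. I would check that every witnessing intersection found above contains a point at height $b$ or $b'$, both strictly inside $(0,1)$. Such a point is an endpoint of neither skeleton, so every intersection above is a genuine conflict. Conversely, if neither condition holds, the skeletons are disjoint, so the equivalence holds exactly.
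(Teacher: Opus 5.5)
Your proof is correct and follows the same route as the paper. You decompose both skeletons into segments, use $b<b'$ to rule out every pair except (upper arm of $Z$, base of $Z'$) and (lower arm of $Z'$, base of $Z$), and then use $x,x'<\nicefrac12<w,w'$ to drop the always-true inequality $x\le x'+w'$. You are also more careful than the paper on two points: you handle the collinear overlaps at $x=x'$ and at $x+w=x'+w'$, which the paper's blanket claims about the lower arm of $Z$ and the upper arm of $Z'$ gloss over (both cases are subsumed by the final conditions), and you use $0<b<b'<1$ to check that every witnessing contact is a genuine conflict rather than a permitted endpoint touch.
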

\IncludeSubfigures{ZLIntersect}{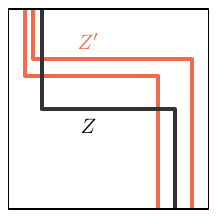}{1,2,3}[0.25,0.25,0.3]{
    Illustrations for \autoref{lemma:ZFrameIntersection}. Two \Zframes $Z$ and
    $Z'$ intersect if and only if the conditions of (a) or (b) are met. 
}[$x'\leq x$, $x' + w' \leq x + w$, $x' > x$ and $x' + w' > x + w$]
\begin{proof}
    The $y$-coordinates of $Z$ and $Z'$ are fixed, as $a + b = a' + b' = 1$.
    Because $b < b'$, the base segments of $Z$ and $Z'$ do not intersect. Also
    by $b < b'$, the lower arm of $Z$ cannot intersect $Z'$, and the upper arm
    of $Z'$ cannot intersect $Z$.  Hence, $Z$ and $Z'$ intersect if and only if
    the upper arm of $Z$ intersects the base of $Z'$, as in
    \autoref{fig:ZLIntersectA}, or the lower arm of $Z'$ intersects the base of
    $Z$, as in \autoref{fig:ZLIntersectB}. This is the case when  $x' \leq x
    \leq x' + w'$ or $x \leq x' + w' \leq x + w$, respectively.  Note that
    $x\leq x' + w'$ always holds because $w,w'>\nicefrac{1}{2}$ and thus
    $x,x'\leq \nicefrac{1}{2}$.  Therefore, the above statement simplifies to
    the fact that $Z$ and $Z'$ intersect if and only if $x' \leq x$ or $x' + w'
    \leq x + w$, as desired.
\end{proof}

\begin{restatable}{lemma}{ZFrameConflict}\label{lemma:ZFrameConflict}
  For every online algorithm \ALG of \Zskeleton,
  \autoref{alg:ZFrameAdversary} generates a sequence of $n$
  \Zframes such that \ALG uses $n$ bins.
\end{restatable}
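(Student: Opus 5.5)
We will show that any two \Zframes $Z_i, Z_j$ with $i<j$ generated by \autoref{alg:ZFrameAdversary} intersect if placed in the same bin at the positions chosen by \ALG. Since every pair then conflicts, \ALG must put each $Z_i$ in its own bin, i.e., it uses $n$ bins.

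First we record the setup. All generated \Zframes satisfy $a_i+b_i=1$ and $w_i = 1-2^{-i-1} \ge \nicefrac34 > \nicefrac12$. The values $b_i$ are pairwise distinct, because by \cref{lemma:ZFrameProperties}(\ref{item:4}),(\ref{item:5}) every later $b_j$ lies strictly on one side of $b_i$. Hence \cref{lemma:ZFrameIntersection} applies to every pair, with the roles of $Z$ and $Z'$ assigned according to which of $b_i, b_j$ is smaller. In addition, each valid placement has $x_i\in[0,1-w_i]$.

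Now fix $i<j$ and distinguish the two branches of the adversary at step $i$.

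\emph{Case 1: $x_i \le (1-w_i)/2$.} By \cref{lemma:ZFrameProperties}(\ref{item:4}) we have $b_j<b_i$. So $Z=Z_j$ and $Z'=Z_i$ in \cref{lemma:ZFrameIntersection}, and an intersection occurs if $x_i\le x_j$ or $x_i+w_i\le x_j+w_j$. Suppose $x_j < x_i$. Then
\[ x_j + w_j \ge w_j > w_i \ge x_i + w_i - (1-w_i)/2 \cdot 0 , \]
which is not quite the inequality we need. The cleaner route is to use the second condition directly. If $x_i\le x_j$ we are done. Otherwise we need $x_i+w_i\le x_j+w_j$. Since $x_j\ge 0$ and $x_i\le (1-w_i)/2$, it suffices to show $(1-w_i)/2 \le w_j-w_i$. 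Indeed, $w_j-w_i \ge 2^{-i-1}-2^{-i-2}=2^{-i-2}$ and $(1-w_i)/2=2^{-i-2}$, so the inequality holds and $Z_i, Z_j$ intersect.

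\emph{Case 2: $x_i>(1-w_i)/2$.} By \cref{lemma:ZFrameProperties}(\ref{item:5}) we have $b_i<b_j$, so $Z=Z_i$ and $Z'=Z_j$, and an intersection occurs if $x_j\le x_i$ or $x_j+w_j\le x_i+w_i$. Using $x_j\le 1-w_j$ and $w_j\ge w_i+2^{-i-2}$ gives $x_j\le 1-w_i-2^{-i-2}=(1-w_i)/2$, where the last equality uses $1-w_i=2^{-i-1}$. Together with $x_i>(1-w_i)/2$ this yields $x_j<x_i$, so $Z_i$ and $Z_j$ intersect.

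The main obstacle is getting the direction bookkeeping right: matching each branch of the adversary with the correct order of $b_i$ and $b_j$, and hence with the correct roles of $Z$ and $Z'$ in \cref{lemma:ZFrameIntersection}. Once that is settled, the threshold $(1-w_i)/2=2^{-i-2}$ is exactly the gap $w_{i+1}-w_i$, which makes both cases close. Finally, the $n$ \Zframes fit into one bin offline, since there are placements satisfying neither intersection condition (e.g.\ with $x$ increasing in $b$); this is not needed for the present lemma.
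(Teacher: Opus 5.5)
Your proof is correct and follows the paper's argument essentially verbatim. It uses the same case split on $x_i \le (1-w_i)/2$, the same gap $w_j - w_i \ge 2^{-i-2} = (1-w_i)/2$ to obtain $x_i+w_i \le x_j+w_j$ in the first case, and in the second case your chain $x_j \le 1-w_j \le (1-w_i)/2 < x_i$ is a direct rephrasing of the paper's contradiction $x_j+w_j>1$. Delete the abandoned chain in Case 1 that ends in $w_i \ge x_i+w_i - (1-w_i)/2\cdot 0$: that inequality is false whenever $x_i>0$, and it only obscures the clean argument that follows.
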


\begin{proof}
    Consider two \Zframes $Z_i$ and $Z_j$ with $i<j$. We distinguish two cases.
    If $x_i \leq (1 - w_i)/{2}$, we have $\bIdx j < \bIdx i$ by
    \Cref{lem:properties}(\ref{item:4}).  Moreover, \[x_i + w_i \leq (1 + w_i)/2
    = 1 - 2^{-i-2}  = w_{i + 1}\leq w_j\leq x_j + w_j\] where $w_{i+1} \leq
    w_{j}$ by \Cref{lem:properties}(\ref{item:3}), and $0\leq x_j$. Thus, by
    \autoref{lemma:ZFrameIntersection}, they intersect, see also
    \autoref{fig:ZFrameNoFitA}.
    
    \IncludeSubfigures{ZFrameNoFit}{figures/zframes/ZFrameNoFit}{1,2,3}[0.3,0.3,0.3]{
        Illustration for the proof of \autoref{lemma:ZFrameConflict}. {
            Two \Zframes are forced to intersect if (a) $Z_i$
            was placed such that $x_i \leq (1 - w_i)/2$, and $Z_j$ such that $x_j \leq x_i$, or if (b) $x_i > (1 - w_i)/2$
            and $x_j \leq x_i$. (c) No other position for $Z_j$ results in a valid packing.
        }
    }
    
   If $x_i > (1 - w_i)/2$, we have $\bIdx j > \bIdx i$ by
   \Cref{lem:properties}(\ref{item:5}).  By construction and
   \Cref{lem:properties}(\ref{item:3}), we have $w_j \geq w_{i + 1} = 1-2^{-i-2}
   = 1 - (1 - w_i)/2 > 1 - x_i$. Then either $x_j \leq x_i$, implying an
   intersection by \autoref{lemma:ZFrameIntersection}, as in
   \autoref{fig:ZFrameNoFitB}, or $x_j > x_i$, which implies $x_j + w_j > x_i +
   1 - x_i = 1$, a contradiction. This is shown in \autoref{fig:ZFrameNoFitC}.

    Consequently, each newly presented \Zframe $Z_j$ intersects all already
    packed   \Zframes, so the algorithm \alg needs to open a new bin for $Z_j$.
\end{proof}

It remains to show that there exists an offline packing for the generated
\Zframes into one bin. To this end, we show that the set of \Zframes may be
partitioned into two sets $A, B$, such that in $B$, the parameter $b$ grows
monotonically as $w$ grows, while in $A$, the parameter $b$ decreases
monotonically as $w$ grows.

\begin{lemma}
    Let $Z_1, \ldots Z_i$ be the first $i \in \mathbb N$ \Zframes generated by
     \autoref{alg:ZFrameAdversary},    $A_i = \{Z_j \where j
    \in [i - 1], \bIdx j > \bIdx i\}$ and $B_i = \{Z_j \where j \in [i - 1],
    \bIdx j < \bIdx i \}$.  Then, the following condition holds: for each $j, k
    \in [i - 1]$, if $Z_j, Z_k \in B_i$ and $\bIdx j < \bIdx k$ we have $w_j <
    w_k$. Otherwise, if $Z_j, Z_k \in A_i$ and $\bIdx j < \bIdx k$ we have $w_j > w_k$.
    \label{lemma:OrderedSubdivision}
\end{lemma}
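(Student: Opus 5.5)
The plan is to use the nesting of the intervals $I_j$ from \Cref{lem:properties}(\ref{item:2}) together with the monotonicity of the widths from \Cref{lem:properties}(\ref{item:3}). Since $w_j$ is strictly increasing in $j$, comparing widths amounts to comparing indices. It therefore suffices to show the following two facts. Within $B_i$, the values $\bIdx j$ are increasing in $j$. Within $A_i$, the values $\bIdx j$ are decreasing in $j$.

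The key step is a characterization of membership. For $j<i$, the frame $Z_j$ lies in $B_i$ exactly when $\bIdx i>\bIdx j$. By \Cref{lem:properties}(\ref{item:4}) and (\ref{item:5}), the sign of $\bIdx i-\bIdx j$ is decided at step $j$ and is the same for all later frames. Hence $\bIdx i>\bIdx j$ holds precisely when the adversary executed $\bLow\gets \bIdx j$ in iteration $j$, that is, when $x_j>(1-w_j)/2$. Consequently $Z_j\in B_i$ if and only if $\bIdx m>\bIdx j$ for every $m>j$ with $m\le i$. Symmetrically, $Z_j\in A_i$ if and only if $\bIdx m<\bIdx j$ for every $m$ with $j<m\le i$.

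Now take $Z_j,Z_k\in B_i$ with $\bIdx j<\bIdx k$, and suppose for contradiction that $k<j$. Since $Z_k\in B_i$ and $j>k$, the characterization gives $\bIdx j>\bIdx k$, which contradicts the assumption. Hence $j<k$ (indices are distinct because $\bIdx j\ne\bIdx k$), and \Cref{lem:properties}(\ref{item:3}) yields $w_j<w_k$. For $A_i$, take $Z_j,Z_k\in A_i$ with $\bIdx j<\bIdx k$, and suppose $j<k$. Since $Z_j\in A_i$ and $k>j$, the characterization gives $\bIdx k<\bIdx j$, again a contradiction. So $k<j$ and $w_j>w_k$.

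The only real care needed is the first step: verifying that (\ref{item:4}) and (\ref{item:5}) apply to every later index up to and including $i$, and that all $\bIdx j$ are distinct. Distinctness holds because each $\bIdx j$ is the midpoint of $I_j$ and becomes an endpoint of the strictly nested subsequent intervals. I expect no serious obstacle beyond this bookkeeping.
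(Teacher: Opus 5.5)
Your proof is correct, but it is organized differently from the paper's. The paper argues by induction on $i$. It observes that if $b_i<b_{i-1}$, then $A_i=A_{i-1}\cup\{Z_{i-1}\}$ and $B_i=B_{i-1}$, with the other case symmetric. So only the new pairs $(Z_{i-1},Z_k)$ with $Z_k\in A_{i-1}$ need checking: for these, $b_k>b_{i-1}$ holds by the definition of $A_{i-1}$, and $w_k<w_{i-1}$ follows from \Cref{lem:properties}(\ref{item:3}).

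You instead argue directly for an arbitrary pair. By \Cref{lem:properties}(\ref{item:4}) and~(\ref{item:5}), membership of $Z_j$ in $A_i$ or $B_i$ is fixed at step $j$ and determines on which side of $b_j$ every later $b_m$ falls. Hence within $B_i$ (resp.\ $A_i$) the $b$-values increase (resp.\ decrease) with the index, and index order coincides with width order.

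Your version makes explicit a fact that the paper's induction step uses without stating it. That fact is that $b_{i-1}$ and $b_i$ lie on the same side of every earlier $b_k$, which is exactly what justifies the update $A_i=A_{i-1}\cup\{Z_{i-1}\}$, $B_i=B_{i-1}$. The paper's version is more local, checking only the newly created pairs at each step.

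Your separate remark that all $b_j$ are distinct is harmless but not needed, since $b_j<b_k$ already forces $j\neq k$.
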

\begin{proof}
    We prove this by induction on $i$. For $A_1 = B_1 = \emptyset$, the
    condition is trivially met. For the induction step, we assume that the claim
    holds for $A_{i - 1}$ and $B_{i - 1}$.  If $\bIdx i < \bIdx {i - 1}$, we
    have $A_i = A_{i - 1} \cup \{ Z_{i - 1} \}$ and $B_i = B_{i - 1}$.  The
    condition holds for all pairs of $B_i$ and many pairs of $A_i$ by induction
    hypothesis. It remains to consider the pairs consisting of $Z_{i-1}$ and
    some $Z_k\in A_{i-1}$. Then, because $Z_k \in A_{i - 1}$, we have $\bIdx
    k>\bIdx {i-1}$. Furthermore, $k<i-1$ implies $w_k<w_{i-1}$ by
    \cref{{lem:properties}}(\ref{item:3}), so the condition also holds for these
    pairs.

    The case for $\bIdx i > \bIdx {i - 1}$ is symmetric.
\end{proof}

In order to expand the \Zframes to Z-shapes later, we pack the \Zframes with
some distance to each other.  In particular, we present a packing with a lower
bound on the \emph{horizontal gaps}, i.e., the minimum horizontal distance
between adjacent \Zframes. In the following, we denote the parameters of an
\Lshape that we use as a container by the tuple~$(\ell_x,w_x,\ell_y,w_y)$.

\begin{lemma}
    \label{lemma:ZLShapePacking}
    Let $\epsilon > 0$.  Consider a set of \Zframes $Z_1, \ldots, Z_k$ with
    $\bIdx i + a_i = 1$ for all $i \in [k]$ and $w_1 < \ldots < w_k$ and $\bIdx
    1 < \ldots < \bIdx k$. Then  $Z_1, \ldots, Z_k$ can be packed  into an
    \Lshape with parameters $(\epsilon+w_k,\epsilon,1,\bIdx k)$ such that all
    horizontal gaps have size at least $\nicefrac\epsilon k$.
    \end{lemma}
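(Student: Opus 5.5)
The plan is an explicit placement: put the \Zframes side by side with slightly increasing offsets, then check the constraints directly. Place the reference point of $Z_i$ so that its left (upper) arm lies at $x$-coordinate $x_i := (i-1)\epsilon/k$. Its base lies at height $\bIdx i$ and spans $[x_i, x_i+w_i]$. Its upper arm occupies $\{x_i\}\times[\bIdx i, 1]$ and its lower arm occupies $\{x_i+w_i\}\times[0,\bIdx i]$. Since $\aIdx i+\bIdx i=1$, the $y$-placement is forced anyway, as in the discussion before \autoref{alg:ZFrameAdversary}.

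First I verify containment in the \Lshape $(\epsilon+w_k,\epsilon,1,\bIdx k)$. This \Lshape is the union of the vertical rectangle $[0,\epsilon]\times[0,1]$ and the horizontal rectangle $[0,\epsilon+w_k]\times[0,\bIdx k]$.
\begin{itemize}
\item Each upper arm has $x$-coordinate $x_i\le (k-1)\epsilon/k<\epsilon$, so it lies in the vertical rectangle.
\item Each base has height $\bIdx i\le \bIdx k$ and $x$-extent inside $[0,(k-1)\epsilon/k+w_k]\subseteq[0,\epsilon+w_k]$, using $w_i\le w_k$. So it lies in the horizontal rectangle.
\item Each lower arm lies at $x_i+w_i\le\epsilon+w_k$ with heights in $[0,\bIdx i]\subseteq[0,\bIdx k]$, so it also lies in the horizontal rectangle.
\end{itemize}

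Second I verify pairwise disjointness. For $i<j$ we have $\bIdx i<\bIdx j$, $x_i<x_j$, and $x_i+w_i<x_j+w_j$, because both summands increase.
\begin{itemize}
\item The bases are at distinct heights, so they are disjoint.
\item The lower arm of $Z_i$ stays below height $\bIdx i<\bIdx j$, so it meets $Z_j$ at most on $Z_j$'s lower arm. Both are vertical segments, at distinct $x$-coordinates $x_i+w_i\neq x_j+w_j$, so they are disjoint.
\item The upper arm of $Z_j$ lies at heights at least $\bIdx j>\bIdx i$, so it can only meet the upper arm of $Z_i$, which lies at the distinct $x$-coordinate $x_i$.
\item The upper arm of $Z_i$ is at $x_i<x_j$, left of the base of $Z_j$, so it does not reach that base.
\item The lower arm of $Z_j$ is at $x_j+w_j>x_i+w_i$, right of the base of $Z_i$, so it does not reach that base.
\end{itemize}
This is the same case analysis as in \autoref{lemma:ZFrameIntersection}. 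I redo it directly because that lemma assumes $w>\nicefrac12$, which is not given here.

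Finally, for the gaps: the horizontal distance between the upper arms of $Z_i$ and $Z_{i+1}$ is exactly $\epsilon/k$. The distance between their lower arms is $(x_{i+1}+w_{i+1})-(x_i+w_i) > \epsilon/k$. Endpoints of bases and arms that are adjacent across different \Zframes are separated by at least these amounts. The only mildly delicate point, and so the step to check most carefully, is to fix precisely which pairs "horizontal gap" refers to and to confirm that every such pair is one of the two arm-to-arm distances just computed.
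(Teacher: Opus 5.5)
Your proposal is correct and follows essentially the paper's approach. The paper argues by induction on $k$: it packs $Z_1,\dots,Z_{k-1}$ into the \Lshape with $\epsilon'=\epsilon(1-\nicefrac{1}{k})$ and places $Z_k$ next to it with a gap of $\nicefrac{\epsilon}{k}$ at the upper arm, which unrolls to essentially your explicit staircase. Your direct case analysis just spells out the disjointness and gap checks the paper calls ``easy to check''; since the paper defines horizontal gaps as the minimum horizontal distance between adjacent \Zframes, with no gap to the container required, your arm-to-arm computation settles the point you flagged.
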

\begin{proof}
    
    \providecommand{\nicefrac}[2]{\frac{#1}{#2}}
    We prove this by induction on $k$. The induction base for $k = 1$ is clear.
    For the induction step, set $\epsilon' := \epsilon (1-\nicefrac{1}{k})$. By
    induction hypothesis,  $Z_1, \ldots, Z_{k - 1}$ can be packed into the
    L-shape with parameters $(\epsilon'+w_{k-1},\epsilon',1,\bIdx {k-1})$ with
    horizontal gaps of size $\nicefrac{\epsilon'}{k - 1} = \nicefrac{\epsilon}{k}$. For
    an illustration consider \autoref{fig:ZLRegionB}.

    \IncludeSubfigures{ZLRegion}{figures/zframes/zframes3.pdf}{4,5,6,7}[0.23,0.23,0.22,0.29]{
        Illustrations for \Cref{lemma:ZLShapePacking,lemma:ZLShapePackingInverse,lemma:OptimalZFrames}.
        A set of \Zframes whose widths grow with the lower arm lengths
can be packed into the \Lshape in \ref{fig:ZLRegionA}, by the inductive
        argument visualized in \ref{fig:ZLRegionB}. If the widths monotonically
        decrease as the lower arm lengths grows, they  pack into a rotated \Lshape
        in \ref{fig:ZLRegionC}. Finally, these shapes can be combined as in
    \ref{fig:ZLRegionD}. 
    }
    
    As $\bIdx {k - 1}<\bIdx k$, and $w_{k - 1}<w_k$, we can place $Z_k$ next to
    the \Lshape with a gap of $\nicefrac{\epsilon}{k}$ at the upper arm. It is
    easy to check that the horizontal gaps have size $\geq
    \nicefrac{\epsilon}{k}$ as $w_{k - 1}<w_k$. Because
    $\epsilon'+\nicefrac{\epsilon}{k}=\epsilon$, the parameters of the resulting
    \Lshape are as desired.
\end{proof}

By rotation, \autoref{lemma:ZLShapePacking} also implies the following.
\begin{lemma}
    Let $\epsilon > 0$.  Consider a set of \Zframes $Z_1, \ldots, Z_k$ with
    $\bIdx i + a_i = 1$ for all $i \in [k]$ and $w_1 < \ldots < w_k$ and $\bIdx
    1 > \ldots > \bIdx k$. Then  $Z_1, \ldots, Z_k$ can be packed  into a
    $\pi$-rotated \Lshape  with parameters $(\epsilon+w_k,\epsilon,1,a_k)$ (see
    \autoref{fig:ZLRegionC}), such that all horizontal gaps have size at least
    $\nicefrac\epsilon k$.
    \label{lemma:ZLShapePackingInverse}
\end{lemma}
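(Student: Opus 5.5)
We obtain \cref{lemma:ZLShapePackingInverse} from \cref{lemma:ZLShapePacking} by a $\pi$-rotation about the center of the bin. The rotation is an involution on \Zframes: the upper arm of a \Zframe becomes a lower arm of the rotated frame, and vice versa. Concretely, rotating a \Zframe $Z$ with parameters $(a,b,w)$ by $\pi$ yields a \Zframe $\bar Z$ with parameters $(\bar a,\bar b,\bar w)=(b,a,w)$. The upper arm of $Z$, of length $a$, attached at the left end of the base, becomes an arm pointing downward attached at the right end, which is precisely the lower arm of a \Zframe. The condition $a+b=1$ is preserved, and so is the width.

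Given $Z_1,\dots,Z_k$ with $w_1<\dots<w_k$ and $b_1>\dots>b_k$, consider the rotated frames $\bar Z_1,\dots,\bar Z_k$ with $\bar b_i=a_i=1-b_i$. These satisfy $\bar b_i+\bar a_i=1$, $w_1<\dots<w_k$, and $\bar b_1<\dots<\bar b_k$, since $b_i$ is decreasing. Thus $\bar Z_1,\dots,\bar Z_k$ meet the hypotheses of \cref{lemma:ZLShapePacking}, which packs them into an \Lshape with parameters $(\epsilon+w_k,\epsilon,1,\bar b_k)=(\epsilon+w_k,\epsilon,1,a_k)$ with all horizontal gaps at least $\nicefrac{\epsilon}{k}$.

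Rotating this entire packing by $\pi$ returns each $\bar Z_i$ to $Z_i$ and turns the container into a $\pi$-rotated \Lshape with the same parameters. Validity is preserved because rotation is an isometry: disjointness, and touching only at endpoints, is invariant. Horizontal distances are also invariant, so the gaps remain at least $\nicefrac{\epsilon}{k}$.

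The only step needing care is the parameter check that the rotated frame is again a \Zframe with $\bar b=a$. One must verify that the reflex arrangement (upper arm at the left end, lower arm at the right end) maps to itself under a $\pi$-rotation rather than to a mirrored S-shape. It does, since a $\pi$-rotation maps left to right and up to down simultaneously. Everything else is immediate.
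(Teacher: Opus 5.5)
Your proposal is correct and is the paper's own argument: the paper obtains this lemma from \cref{lemma:ZLShapePacking} simply ``by rotation''. You supply the details the paper leaves implicit, namely that a $\pi$-rotation maps a \Zframe with parameters $(a,b,w)$ to one with parameters $(b,a,w)$ (so decreasing $b_i$ become increasing $\bar b_i=a_i$, and the container's last parameter becomes $a_k$), and that it preserves validity, endpoints, and horizontal gaps.
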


Using \cref{lemma:OrderedSubdivision,lemma:ZLShapePacking,lemma:ZLShapePackingInverse}, we show that all generated \Zframes can be packed into a bin.
\begin{restatable}{lemma}{OptimalZFrames}\label{lemma:OptimalZFrames}
  Let $Z_1, \ldots, Z_n$ be \Zframes generated by some execution of
  \autoref{alg:ZFrameAdversary}.  Then, $Z_1, \ldots, Z_n$ can
  be packed into
  one bin, such that all horizontal gaps between the \Zframes and the bin
  boundary have width at least $\nicefrac1n\cdot 2^{-n - 3}$.
\end{restatable}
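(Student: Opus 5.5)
We use \cref{lemma:OrderedSubdivision} with $i=n+1$, i.e.\ we imagine one further (never presented) step of \autoref{alg:ZFrameAdversary} and let $\bIdx{n+1}$ be the midpoint of the final interval $I_{n+1}$. By \cref{lem:properties}(\ref{item:2}) all $\bIdx j$, $j\in[n]$, are distinct from $\bIdx{n+1}$. So $\{Z_1,\dots,Z_n\}$ splits into $B:=B_{n+1}$, the \Zframes with $\bIdx j<\bIdx{n+1}$, and $A:=A_{n+1}$, those with $\bIdx j>\bIdx{n+1}$. By \cref{lemma:OrderedSubdivision}, within $B$ the widths increase with $b$, and within $A$ the widths decrease as $b$ increases. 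Ordering $B$ by increasing width puts it in the situation of \cref{lemma:ZLShapePacking}, and ordering $A$ by increasing width puts it in the situation of \cref{lemma:ZLShapePackingInverse}. We apply both with the same $\epsilon$, chosen below.

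\cref{lemma:ZLShapePacking} packs $B$ into an \Lshape container with parameters $(\epsilon+w_B,\epsilon,1,\bIdx B)$. Here $w_B<1$ is the largest width in $B$ and $\bIdx B<\bIdx{n+1}$ is the largest lower arm in $B$. Its reference point sits at the bottom-left corner of the bin, so it occupies the left strip of width $\epsilon$ over the full height plus the bottom band of height $\bIdx B$. Symmetrically, \cref{lemma:ZLShapePackingInverse} packs $A$ into a $\pi$-rotated \Lshape $(\epsilon+w_A,\epsilon,1,a_A)$ anchored at the top-right corner. It occupies the right strip of width $\epsilon$ over the full height plus the top band of height $a_A=1-\bIdx A$, where $\bIdx A>\bIdx{n+1}$ is the smallest $b$ in $A$. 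The two containers are interior-disjoint in the bin as long as two conditions hold. First, the bottom band of the $B$-container and the top band of the $A$-container must be vertically disjoint, i.e.\ $\bIdx B\le \bIdx A$; this is true. Second, each horizontal arm must not hit the vertical strip of the other container, i.e.\ $\epsilon+w_B\le 1-\epsilon$ and $\epsilon+w_A\le 1-\epsilon$.

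Since every $w_j\le w_n=1-2^{-n-1}$, it suffices to take $\epsilon:=2^{-n-3}$. Then $\epsilon+w_j\le 1-2^{-n-1}+2^{-n-3}\le 1-2\epsilon$, which satisfies the second condition and even leaves slack of at least $\epsilon$ between each arm and the opposite side. Within each container the lemmas guarantee horizontal gaps of at least $\epsilon/|B|$ and $\epsilon/|A|$, both at least $\epsilon/n=\nicefrac1n\cdot 2^{-n-3}$.

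The main obstacle is checking the gaps that are not internal to one container. There are two kinds. The first is the gap between a \Zframe and the bin boundary: on the left, the first \Zframe in $B$ is separated from the bin side by at least the container's vertical strip width minus the internal offsets, which is at least $\epsilon/n$, and the right side of the bin is handled the same way. The second is the gap between \Zframes of $A$ and of $B$. These only interact where a long arm of one \Zframe passes by the base of the other. Since the bands are ordered as $\bIdx B<\bIdx A$, such horizontal distances are bounded below by the $\epsilon$-slack from the previous paragraph. I would verify these cases carefully against the figure \cref{fig:ZLRegionD}. If the constants turn out tight, I would shrink $\epsilon$ by a constant factor, for example to $2^{-n-4}$, which still leaves the bound easy to state.
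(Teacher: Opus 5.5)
Your proposal is correct and follows essentially the paper's route: split the frames with \cref{lemma:OrderedSubdivision} into two monotone families, pack them into the L-shaped containers of \cref{lemma:ZLShapePacking,lemma:ZLShapePackingInverse}, and place these in opposite corners of the bin. The only difference is that you split around the fictitious midpoint $b_{n+1}$, so $Z_n$ joins one of the containers, whereas the paper splits around $b_n$ and places $Z_n$ between the two containers with margins $\epsilon/2$ (your $\epsilon=2^{-n-3}$ is exactly the value that makes the paper's width count $4\epsilon+w_n=1$ come out). The cross-container check you defer is immediate, because at every height the cross-sections of the two containers are at least $1-2\epsilon-w_n=2\epsilon$ apart; however, drop the fallback of shrinking $\epsilon$, since the internal gaps are only $\epsilon/|B|$ with possibly $|B|=n$, and any smaller $\epsilon$ would violate the required bound $\frac1n\cdot 2^{-n-3}$.
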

\begin{proof}
    By \autoref{lemma:OrderedSubdivision}, there are sets $A:=A_n$ and $B:=B_n$
    such that  $A\cup B \cup\{Z_n\}=\{Z_1, \ldots, Z_n\}$ and the widths are
    monotonically increasing with the base height in $B$ and decreasing in $A$.
    We choose $\epsilon := \nicefrac1n\cdot 2^{-n - 3}$ and apply
    \autoref{lemma:ZLShapePacking} to $B$ and
    \autoref{lemma:ZLShapePackingInverse} to  $A$. This produces two packings in
    \Lshape regions, which can be placed next to each other as depicted in
    \autoref{fig:ZLRegionD}.  We argue that the packing is valid: We know $\bIdx
    j < \bIdx n < \bIdx k$ for all $j, k \in [n]$ with $Z_j \in B$ and $Z_k \in
    A$. With horizontal gaps of size $\nicefrac \epsilon 2$, the total width is
    $ 4\epsilon + w_n =\nicefrac1n\cdot 2^{-n - 1}+(1-2^{n-1})= 1$, and the
    height is precisely $1$, so it fits into a unit bin.

    Within the \Lshape regions, the horizontal gaps are already large enough.
    The gaps between the regions and $Z_n$ have size $\nicefrac{\epsilon}{2} \geq
    \nicefrac{\epsilon}{n}$ for $n \geq 2$.  For $n = 1$, there are no
    horizontal gaps.  Finally, by construction, the horizontal gaps between any
    \Zframe and the bin boundary is at least $\nicefrac{\epsilon}{2} \geq
    \nicefrac{\epsilon}{n}$ for $n \geq 2$, and at least $2\epsilon \geq
    \nicefrac{\epsilon}{n}$ for $n = 1$.
\end{proof}

\Cref{lemma:OptimalZFrames,lemma:ZFrameConflict} directly imply
\Cref{thm:skeletons}(ii). In order to prove \Cref{thm:Zgons}, we slightly modify
\cref{alg:ZFrameAdversary}. Note that we already guaranteed that the \Zframes we
generate can not only be placed into a common bin, but that such a placement can
be made such that at least $\nicefrac1n\cdot 2^{-n - 3}$ space remains between
each pair of adjacent frames, and between each frame and the bin boundary.
Furthermore, making the shapes thicker never helps the packing algorithm.

\Zgons*
\begin{proof}
    Consider an online algorithm \ALG.  The idea is to use a slightly modified
    version of \cref{alg:ZFrameAdversary}, where we obtain Z-shapes by increasing the thickness from
    $0$ to a positive value: \Cref{lemma:OptimalZFrames}
    ensures that the \Zframes can be packed into a unit bin such that they have
    horizontal gaps of at least  $ \nicefrac1n\cdot 2^{-n - 3}$.  By
    construction, the vertical distance between any two base segments is at
    least $2^{-n}$.  Therefore, it is possible to replace the \Zframes with
    Z-shapes that are defined by the same parameters and  a thickness of
    $\nicefrac 1n\cdot 2^{-n-3}$. In particular,  all Z-shapes  fit into a single bin
    if packed optimally. Moreover, the argument of \Cref{lemma:ZFrameConflict} still
    guarantees that \ALG uses $n$ bins. Thus, \ALG has a competitive
    ratio of $n$.
\end{proof}

\section{Conclusion and open problems}
In this work, we investigated various packing problems of orthogonal polygons
under translation. For \Lshapes, the yet smallest open case, we wondered whether
they behave more like rectangles or convex polygons. For bin packing, we
provided a surprisingly large lower bound on the competitive ratio of about
$\nicefrac{n}{\log n}$ for general \Lshapes and $n$  for general orthogonal
polygons. That means that the trivial algorithm, which packs each item into its
private bin, is worst-case optimal for orthogonal polygons. These lower bounds
are much higher than for convex polygons, cf.~\cref{tab:2}.  In contrast, for
symmetric and small \Lshapes there exist constant competitive algorithms. This
insight implies that in terms of the competitive ratio, \Lshapes behave more
like rectangles for perimeter packing, area packing, and the online critical
density.

Many interesting problems remain for future work. Most notably, it remains to
improve the bounds or even close the gaps for the problems stated in
\cref{tab:2,table:BPresults}. This includes the problem of determining the best
competitive ratio for online bin packing of L-shapes, as it remains open whether
it it possible to beat the trivial algorithm for \trans.

\bibliography{bib}

\end{document}